\documentclass{article}
\usepackage{xcolor}
\usepackage{natbib}
\usepackage{authblk}
\usepackage[margin=1in]{geometry}
\usepackage[colorlinks=true, linkcolor=blue, citecolor=blue, urlcolor=blue]{hyperref}
\usepackage{graphicx} 
\usepackage{amsthm}
\RequirePackage{amsthm,amsmath,amsfonts,amssymb,graphicx,mathtools,enumerate,placeins}
\newtheorem{theorem}{Theorem}
\newtheorem{proposition}[theorem]{Proposition}
\newtheorem{lemma}[theorem]{Lemma}
\newcommand{\Fdr}{\textnormal{Fdr}} 
\newcommand{\lfdr}{\textnormal{lfdr}}

\newcommand{\q}{\alpha}
\newcommand{\argmax}{\textnormal{argmax}}
\newcommand{\E}{\mathbb{E}}
\renewcommand{\P}{\mathbb{P}}

\newcommand{\FDP}{\textnormal{FDP}}

\newcommand{\clar}{\textnormal{clar}}

\newcommand{\iid}{\textnormal{iid}}

\newcommand{\R}{\mathbb{R}}

\newcommand{\de}{\textnormal{d}}
\newcommand{\sgn}{\textnormal{sgn}}

\newcommand{\bX}{\textbf{X}}
\newcommand{\idx}{j}

\usepackage[shortlabels]{enumitem}

\newcommand{\calN}{\mathcal{N}}

\newtheorem{remark}{Remark}[section]

\newtheorem{assumption}{Assumption}[section]

\title{Estimating the local false discovery rate \\ under an unknown symmetric null}
\author[1]{Daniel Xiang}
\author[2]{William Fithian}
\author[1]{Nikolaos Ignatiadis}
\author[3]{Jake A. Soloff}
\author[4]{Asaf Weinstein}
\affil[1]{\textit{University of Chicago}}
\affil[2]{\textit{University of California, Berkeley}}
\affil[3]{\textit{University of Michigan}}
\affil[4]{\textit{Hebrew University of Jerusalem}}
\date{\today}

\usepackage[normalem]{ulem}

\begin{document}

\maketitle

\begin{abstract}
This paper is concerned with estimating the local false discovery rate (lfdr) in a two-groups model where the only assumption regarding the null distribution is symmetry about zero. Our motivation comes from the contemporary framework for multiple hypothesis testing, particularly relevant in variable selection problems, which transforms any user-specified scores into statistics whose null distributions are 
symmetric about zero, whereas enrichment to the right of zero is generally expected for the non-nulls. While modern methods such as the knockoff filter \citep{barber2015controlling} are able to exploit the null property for controlling the false discovery rate (FDR), an arguably more appropriate goal is to target control of the {\em local} false discovery rate for the rejected hypotheses, as proposed in \citet{soloff2024edge} where the standard two-groups model (known $f_0$ and independence) is analyzed. 
Here, we take a step in this direction and propose to estimate the lfdr by targeting the surrogate density ratio $f(-w)/f(w)$, for $w>0$, where $f$ is the marginal density in the aforementioned ``stripped-down'' two-groups model. 
We study several estimators and propose a logistic regression based method with natural cubic spline basis.
We also show that any consistent estimator of this surrogate yields asymptotic lfdr control of the multiple testing procedure that thresholds the estimate at the nominal level. 
\end{abstract}

\section{Introduction}
\label{sec:intro}

In multiple linear regression with many predictor variables, a primary question of interest is to identify which ones are actually relevant to the conditional distribution of the response. 
In the {\em controlled} variable selection framework, the problem is formulated as simultaneously testing the null hypotheses
\begin{align}
\label{null-hypotheses}
    H_{0\idx}: \beta_\idx = 0, \hspace{2em} \idx=1,\dots,p, 
\end{align}
where $\beta_\idx$ is the effect of predictor $\idx$ on the mean response, holding all other predictors fixed. Selection of the $\idx^{\text{th}}$ variable then corresponds to rejecting the null hypothesis $H_{0\idx}$, and selection of
an irrelevant variable incurs a type I error. 
This framing appears to invite 
drawing upon the rich and long-standing literature on multiple hypothesis testing, however the traditional paradigm has substantial limitations in the setting above: first, the test statistics are usually not independent, and second, the null distribution is generally unknown when moving beyond plain least squares, which precludes computation of $p$-values.

These limitations have led \cite{barber2015controlling} to propose a strikingly different approach to the controlled 
variable selection problem. 
Instead of relying on $p$-values, their method constructs for each predictor variable a fake control, called a \textit{knockoff}, such that under the null, the knockoff variable's importance statistic (a measure of each predictor's relevance to the response) is exchangeable with that of the genuine variable. 
Comparing each importance statistic to its knockoff counterpart produces a `contrast statistic', known as a {\em $W$-statistic}, that is symmetrically distributed around zero under the null (in fact, the signs are not only $\pm1$ with equal probabilities, but also independent given the absolute values $|W_1|,\dots,|W_p|$).
Ultimately, this null property of the $W$-statistics is exploited to design a procedure that controls the {\em false discovery rate}  \citep[FDR,][]{benjamini1995controlling} for the selected subset of predictors.
\begin{figure}[t]
    \centering
    \includegraphics[width=\linewidth]{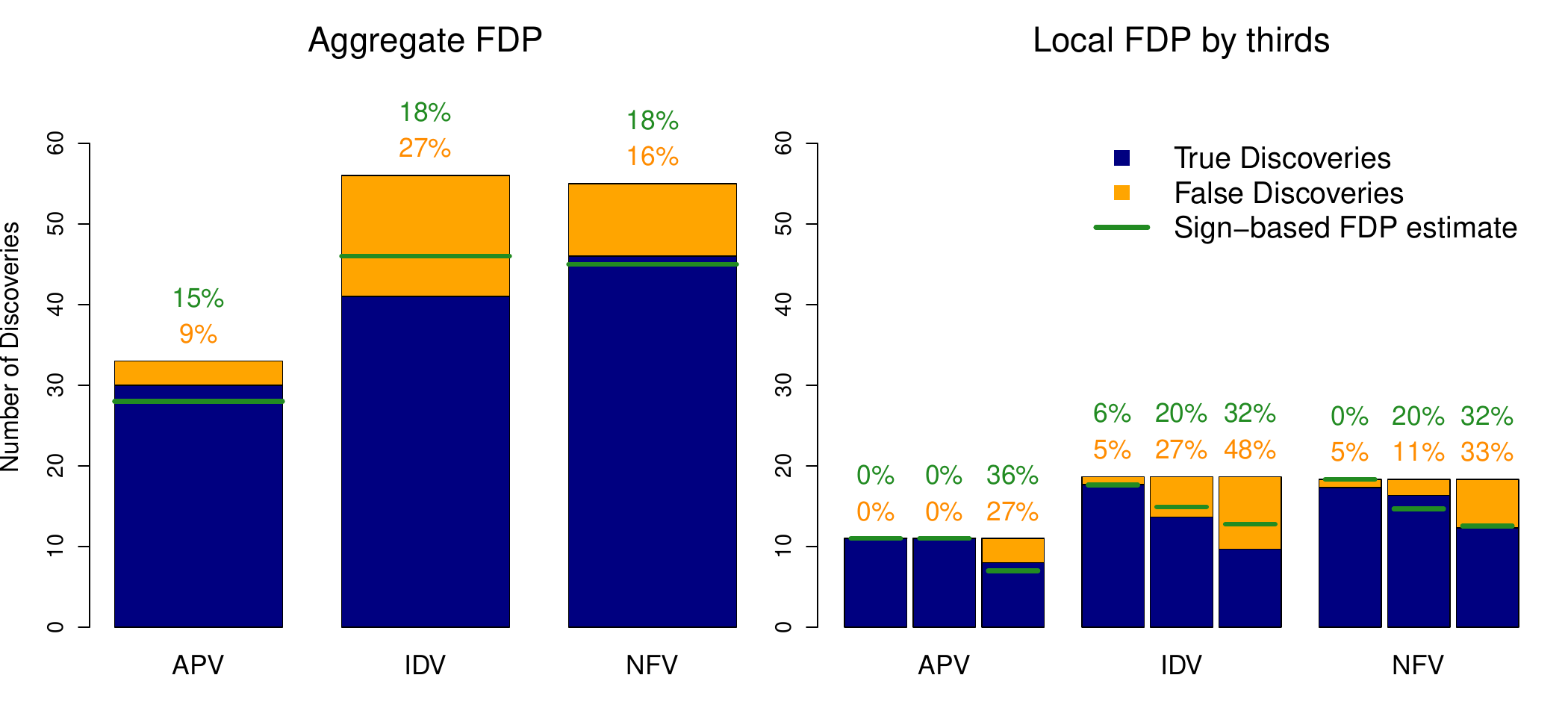}
    \caption{HIV drug discovery example. The left panel shows the Knockoffs rejection set targeting 20\% FDR for each of three antiviral drugs: Amprenavir (APV), Indinavir (IDV), and Nelfinavir (NFV). 
    The right panel shows each discovery set split into thirds, ordered from largest to smallest $W$-statistic. 
    Orange percentages show a validation estimate of the FDP based on an independent data set \citep[TSM,][]{rhee2005hiv}, while green percentages show a rough 
    sign-based estimate $\#\{W_\idx<0\}/\# \{W_\idx > 0\}$ in each bin. The TSM-based FDP increases across thirds from strongest to weakest, and the sign-based estimates roughly track this heterogeneity.}
    \label{fig:HIV-example-simple}
\end{figure}

The introduction of knockoffs has given rise to a large body of subsequent work on controlled variable selection methods, including the follow-up model-X knockoffs framework \citep{candes2018panning} for random-X settings with known covariate distribution, as well as the Gaussian Mirrors method of \citet{xing2023controlling} and their extensions \citep{dai2023falsea, dai2023scalefree, xia2023adaptive, wang2025adaptive}. 
These methods all rely on generating artificial null counterparts rather than deriving the null distribution of importance statistics. 
As noted in \cite{arias2017distribution}, 
this motivates studying a general model that only assumes symmetry under the null, which we take as the starting point for the current article.
To be useful in practice, the aforementioned methods additionally require the observations to be enriched on one side of zero under the alternative hypothesis, a condition that arises naturally in the regression setting for contrast statistics.\footnote{We remark that this model arises readily in other contexts, for example as a relaxation of one-way analysis of variance  models that usually assume a {\em particular} symmetric distribution for the noise, normality being the usual example \citep[see, e.g.,][]{tian2025conformalized}.}

A main advantage of $p$-value-free methods such as knockoffs is their flexibility in accommodating a wide range of importance statistics and design matrices while maintaining rigorous type I error control. 
However, there are legitimate concerns regarding the specific type I error metric that knockoffs are designed to control.  
While the FDR has been widely adopted
in modern multiple testing, \citet{soloff2024edge} rethink the criterion and explain why it fails to protect against type I errors at the level of individual discoveries.
In a two-groups model, the Bayesian (tail) FDR is equal to the {\em mean} of the \textit{local false discovery rate} (lfdr) conditionally on selection \citep[][see also \eqref{eq:sec-tan-fdr}]{efron2008microarrays}, so rejections with lfdr considerably higher than $\alpha$ are still possible when controlling the FDR at $\alpha$. 
In other words, the FDR criterion allows high-quality rejections to compensate for low-quality ones, compromising the reliability of discoveries made near the threshold.
This heterogeneity in the quality of discoveries is illustrated in Figure \ref{fig:HIV-example-simple} using the HIV drug resistance data example from \cite{rhee2005hiv}. 
After splitting the rejection set into three equally sized strata according to the size of the knockoff statistic $W_\idx$, we compare the FDP of each stratum, both as estimated using (i) a validation estimate based on an independent data set, and (ii) an analog of the knockoff filter's sign-based estimator, where we calculate $\#\{W_\idx<0\}/\# \{W_\idx > 0\}$ only for the variables in that stratum. As we see, the FDP varies systematically across these strata, with the marginal rejections (rejections with $W_\idx$ just above the rejection threshold $\hat{w}$) having a strikingly higher FDP than the rejections whose $W_\idx$ values are comfortably above the threshold.

To mitigate this deficiency,  \cite{soloff2024edge}~propose instead to control the \textit{max-lfdr}, defined as the expectation of the maximum lfdr among all rejected hypotheses. Controlling the max-lfdr at $\alpha$ ensures that, in expectation, the posterior probability of {\em each} rejection does not exceed $\alpha$. 
In addition to this new criterion, they propose a procedure (``support line") 
which operates on $p$-values and controls the max-lfdr in finite sample, assuming  an independent two-groups model where the null distribution is {\em known} and the alternative density is non-increasing.

Adopting this viewpoint, 
our impetus in the current paper is to design methods that target the lfdr, instead of the FDR, but move beyond the classical independent $p$-values setup to the flexible controlled variable selection setup. 
This means that we relax the assumption of a fully known null distribution to merely symmetry about zero. 
We will use the notation $W_\idx$ to denote the observed test statistics, which is standard notation in variable selection with knockoffs,\footnote{This is just one particular example of how the model \eqref{eq:two-group} could arise; see Section \ref{sec:applications} for an application with natively symmetric null statistics.} and will work under an {\em empirical Bayes} (EB) two-groups model, assuming that for $\idx =1,...,p$, the pairs $(W_\idx, H_\idx)$  are identically, but not necessarily independently,  distributed as
\begin{equation}
\label{eq:two-group}
    \begin{aligned}
    &\P(H_\idx = 0) = \pi_0,  \quad \quad   &&\P(H_\idx = 1) = 1-
    \pi_0\\[3pt]
    &W_\idx \mid H_\idx = 0\ \sim f_0,   \quad \quad &&W_\idx \mid H_\idx = 1 \ \sim f_1,
\end{aligned}
\end{equation}
with $\pi_0$ and $f_1$ unknown, and $f_0$  symmetric about zero but otherwise unknown.

The max-lfdr criterion itself extends without modification to the variable selection context, however the original support line procedure is in general inapplicable here because it assumes that the observed statistics are $p$-values, which are unavailable if the null distribution is not fully known. 
The challenging problem of controlling max-lfdr in finite sample under this model is studied in a companion paper (still in progress). 
In the current work, we concentrate on {\em estimating} the lfdr, defined under \eqref{eq:two-group} as
\begin{align}
\label{eq:lfdr}
    \lfdr(w) = \P(H_\idx=0 \mid W_\idx=w) = \pi_0 f_0(w) / f(w), 
\end{align}
where 
\begin{equation}
\label{eq:marginal}
f(w):= \pi_0f_0(w) + (1-\pi_0)f_1(w)
\end{equation}
is the marginal density. 
When $p$ is large and the observations are only weakly dependent, $\lfdr(w)$ roughly represents the fraction of observations near $w$ that arise from the null component $f_0$. 
Thus, the lfdr is not only a key ingredient for powerfully ordering hypotheses \citep[e.g.,][]{sun2007oracle, heller2021optimal}, but also directly relevant as a type I error criterion because it reveals the degree of heterogeneity in the frequency of errors throughout the rejection set \citep{efron2001empirical}. 

In general, further assumptions beyond null symmetry are needed to identify the $\lfdr$ function from the marginal distribution $f$ (this is discussed in the next section). The methodology we develop in this work is instead centered around estimating a conservative upper bound on the lfdr:
\begin{align}
\label{eq:clar}
\clar(w) := \frac{f(-w)}{f(w)}, \ \ \ \ \ w>0, 
\end{align}
which is clearly identifiable from the marginal distribution of $(W_\idx)$. Following a convention from the sparse signal detection literature \citep*{mccullagh2018statistical,xiang2024interpretation}, 
we call this density ratio the {\em complement of the local activity rate} (clar; see Remark \ref{remark:identifiability} on the terminology). 
To see why clar approximates lfdr, 
note that
\begin{align}
\label{eq:conservative-lfdr}
    \frac{\# \{j:W_j \approx -w\}}{\# \{j:W_j \approx w\}} &\geq \frac{\# \{j:W_j \approx -w, \beta_j=0\}}{\# \{j:W_j \approx w\}} \stackrel{(d)}{=} \frac{\# \{j:W_j \approx w, \beta_j=0\}}{\# \{j:W_j \approx w\}} ,
\end{align}
where the equality in distribution follows from the symmetric null assumption. 
Hence, we expect clar to be a conservative estimate of lfdr. If the non-null component is presumed not to contribute many large and negative contrast statistics, then the inequality above holds with little slack; in Appendix~\ref{sec:suppl_clar} we state a condition that implies $\clar(w)/\lfdr(w)$ is close to 1 for large values of $w>0$, and check it for several examples.

\begin{figure}[t]
    \centering
    \includegraphics[width=\linewidth]{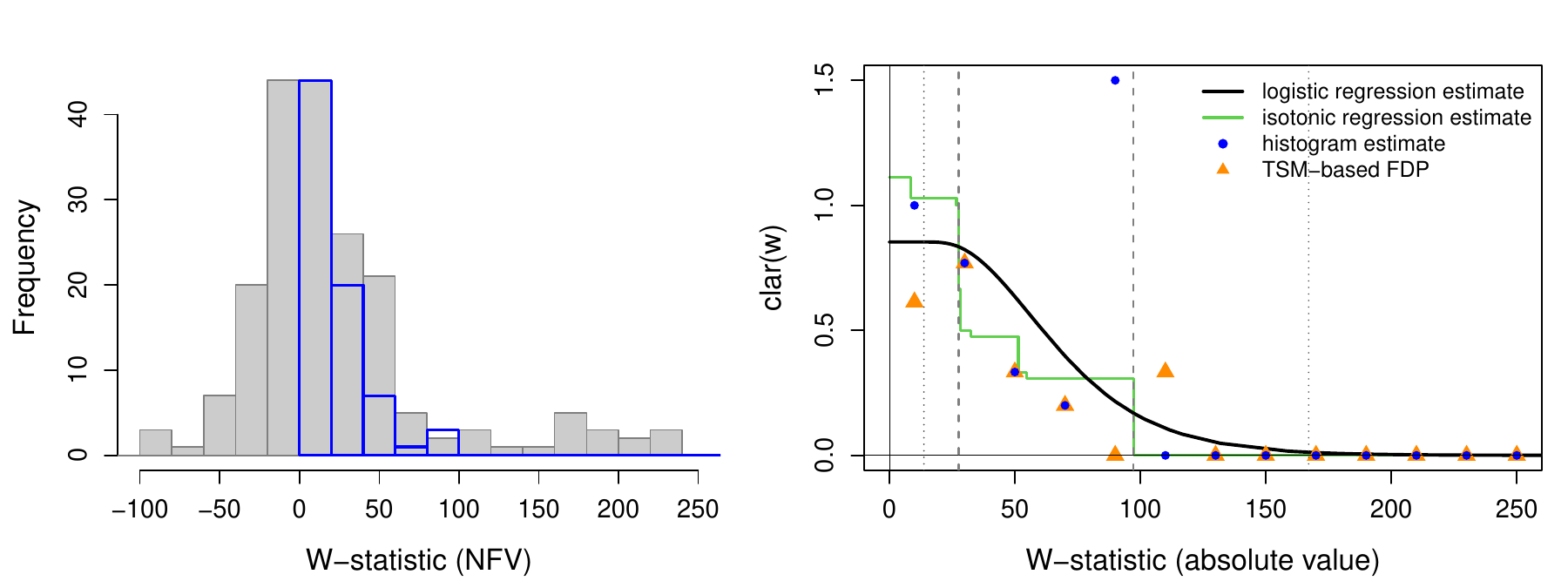}
    \caption{Left: Illustration of the histogram-based estimates on HIV data. The estimate of $\clar(w)=f(-w)/f(w)$ at $w>0$ is the height of the histogram at $-w$ divided by its height at $w$ (blue). Right: Logistic regression estimator with natural cubic spline basis (black). 
    Dashed vertical lines indicate internal knots, and dotted lines indicate the boundary knots; these are determined by the level sets of a preliminary isotonic regression estimator (green).}
    \label{fig:sign-logistic}
\end{figure}

A crude estimate of the clar is obtained simply by flipping the negative side of the histogram of the $W_j$ statistics about the $y$ axis, and dividing the height of this mirrored image by the height of the original histogram. The resulting estimates are illustrated in Figure \ref{fig:sign-logistic} for the HIV data example with $W$-statistics corresponding to the antiviral drug Nelfinavir (NFV).
The histogram estimate roughly tracks the true subset FDP in each bin, which was computed from treatment-selected mutation (TSM) panels \citep{rhee2005hiv}, shown in orange on the right panel. 
This crude estimate captures the essence of our approach; in the methodology part of the paper we mainly focus on developing refinements (smoother versions) of it. 
Specifically, we advocate a classification approach to estimating the ratio of densities in \eqref{eq:clar} (Section \ref{sec:clar-estimation}). 
Our main theoretical result, presented in Section \ref{sec:theory}, says that under suitable conditions,  the logistic regression method yields a consistent estimator of the corresponding population-level parameter, which can be thought of as the closest approximation to clar within the logistic model we consider. 
In addition, we establish general consistency results for other non-parametric or shape-constrained estimates of clar. 
The key condition needed for consistency of the estimator is 
convergence of the empirical cdf $F_p(w) \coloneqq \frac{1}{p}\sum_{\idx=1}^p 1\{W_\idx \leq w\}$ to the marginal cdf 
$F(w)$ in probability as $p \to \infty$ (Assumption \ref{assum:weak_dep}).

\section{The density ratio $f(-w)/f(w)$ as target of inference}
\label{sec:EB}

\subsection{A conservative surrogate for lfdr}
\label{subsec:surrogate}
In the standard two-groups model, where the null density $f_0$ is assumed known, $f_0(w)/f(w)$ is often used to upper bound the lfdr \eqref{eq:lfdr}, in which case it suffices to use an estimate of the marginal density $f(w)$ in order to estimate the lfdr (this can be combined with an estimate for $\pi_0$). 
The more adaptive approach that advocates using an {\em empirical null} \citep{efron2004large} uses the data to also model $f_0$,
usually as a Gaussian whose mean and variance are obtained through an estimate of the (log of) the marginal density $f(w)$ \citep{efron2008microarrays}. 
These considerations do not apply in the proposed two-groups model \eqref{eq:two-group}, where $f_0$ is unknown and typically non-Gaussian, calling for a different approach to estimating lfdr. 
While the case of an unspecified $f_0$ is indeed 
quite different, the clar quantity \eqref{eq:clar} provides a surrogate to lfdr that exploits the symmetry of $f_0$. 
The basic logic is simple: 
if we think of $W_j$ as generated from knockoffs, for example, and $\pi_0 \approx 1$ as in examples where sparse subset selection methods are typically used, then we expect that for sufficiently large $w>0$, most $W$-statistics near $-w$ correspond to true nulls. 
As a consequence $f(-w) \approx \pi_0 f_0(-w) = \pi_0 f_0(w)$, using the symmetry of $f_0$, which implies that $\lfdr(w) \approx f(-w)/f(w)$. 
In fact, this relation is always conservative, 
\begin{align}
\label{eq:clar-lfdr}
    \lfdr(w) = \frac{\pi_0 f_0(w)}{f(w)} = \frac{\pi_0 f_0(-w)}{f(w)} \leq \frac{f(-w)}{f(w)} = \clar(w), 
\end{align}
which is the population-level analogue to the informal calculation in \eqref{eq:conservative-lfdr}. 
The density ratio on the right-hand side of \eqref{eq:clar-lfdr} is a more convenient target than lfdr already because it involves only $f(w)$, the marginal density from which the observations directly arrive. 

\begin{figure}[t]
        \centering
        \includegraphics[width=0.6\linewidth]{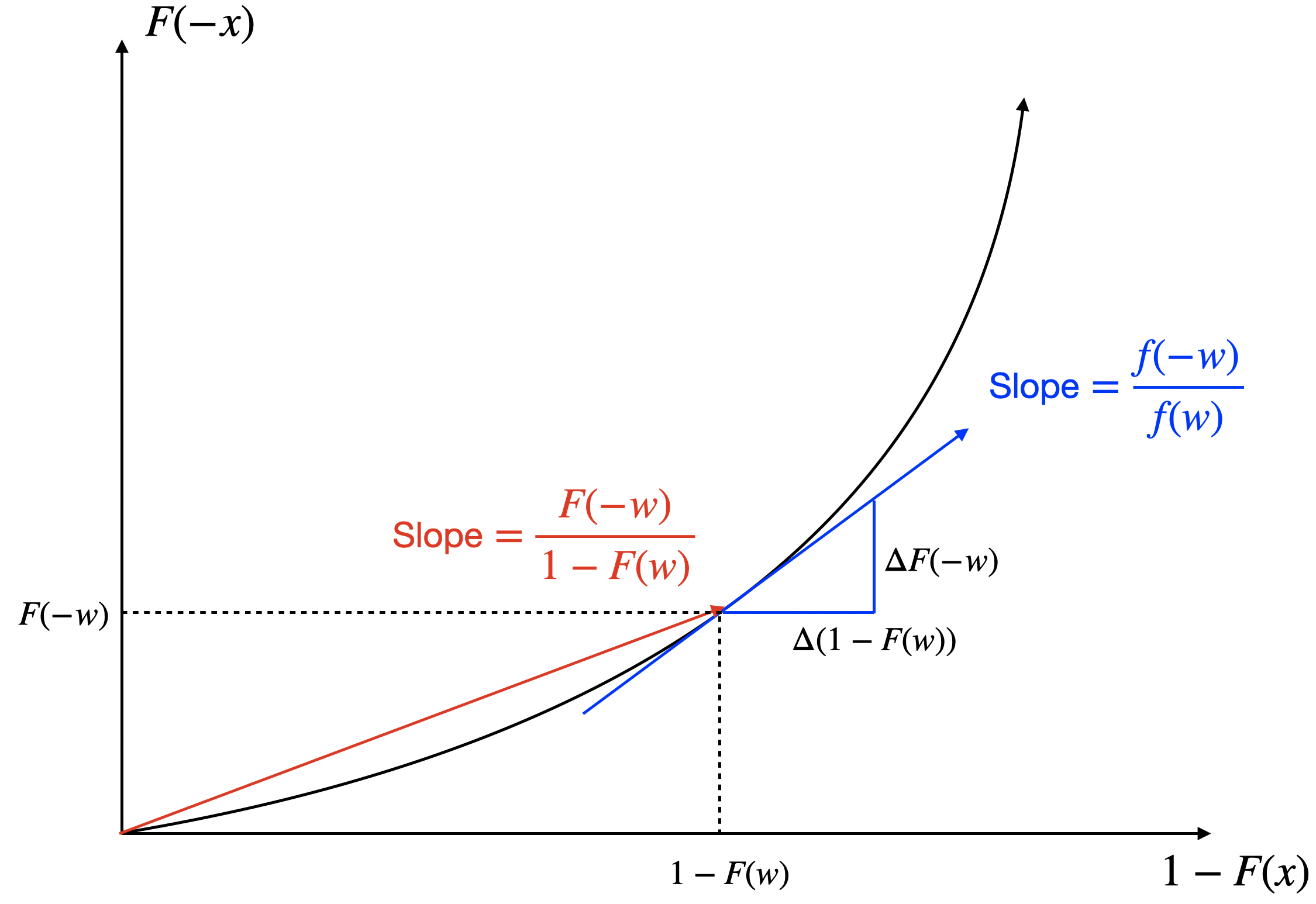}
        \caption{The slopes of secant and tangent lines correspond to the right-hand sides of \eqref{eq:clar-lfdr} and \eqref{eq:ctar-fdr}, respectively.}
        \label{fig:secant-tangent}
    \end{figure}
The display in \eqref{eq:clar-lfdr} can be viewed as a `local' analogue of the perhaps more familiar relation that uses the symmetry of $F_0$ to upper bound the {\em tail} false discovery rate,
\begin{align}
\label{eq:ctar-fdr}
    \Fdr(w) \coloneqq \P(H_\idx=0 \mid W_\idx \geq w) = \frac{\pi_0 (1-F_0(w))}{1-F(w)} = \frac{\pi_0 F_0(-w)}{1-F(w)} \leq \frac{F(-w)}{1-F(w)}, 
\end{align}
where $F(w)=\int_{-\infty}^w f(x) dx$ is the marginal cdf and $F_0$ is the cdf of the symmetric null distribution. 
Indeed, the relation \eqref{eq:ctar-fdr} is exploited, for example, by the knockoff filter, which substitutes the empirical cdf in place of $F$ on the right-hand side above to conservatively estimate the FDP among statistics $\geq w$.
The clar and the upper bound in \eqref{eq:ctar-fdr} have the same conditional averaging relationship as the local and tail FDR \citep[][Chapter 5]{efron2012large},
\begin{align}
\label{eq:sec-tan-fdr}
    \Fdr(w) &= \E\big[\lfdr(W_\idx) \mid W_\idx\geq w\big], 
\end{align}
which is illustrated geometrically in Figure~\ref{fig:secant-tangent}. 
\begin{figure}
\begin{tabular}{lll}
\includegraphics[width=0.31\linewidth]{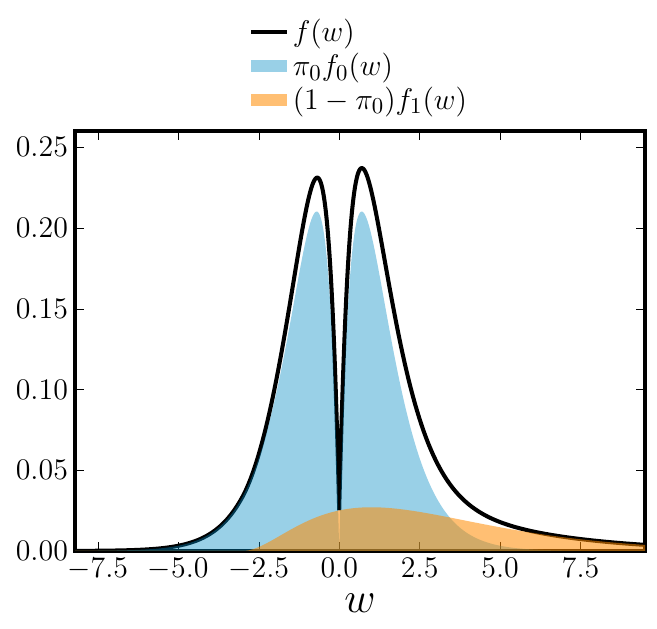} & \includegraphics[width=0.31\linewidth]{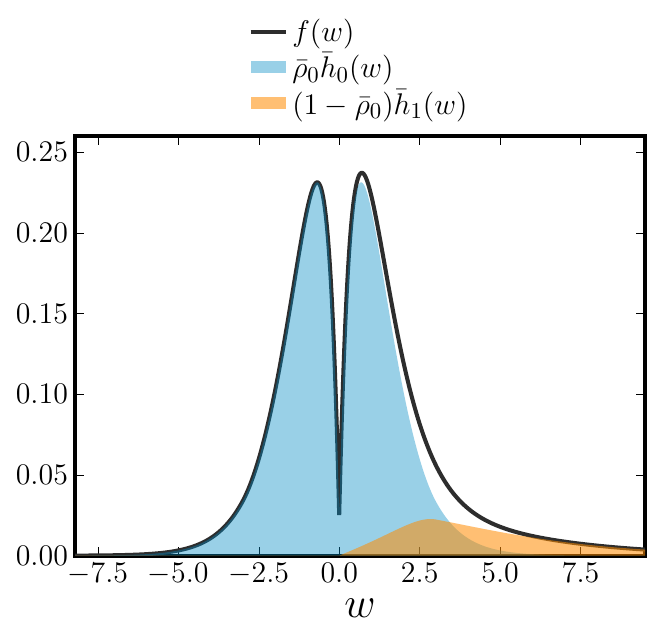} & \includegraphics[width=0.31\linewidth]{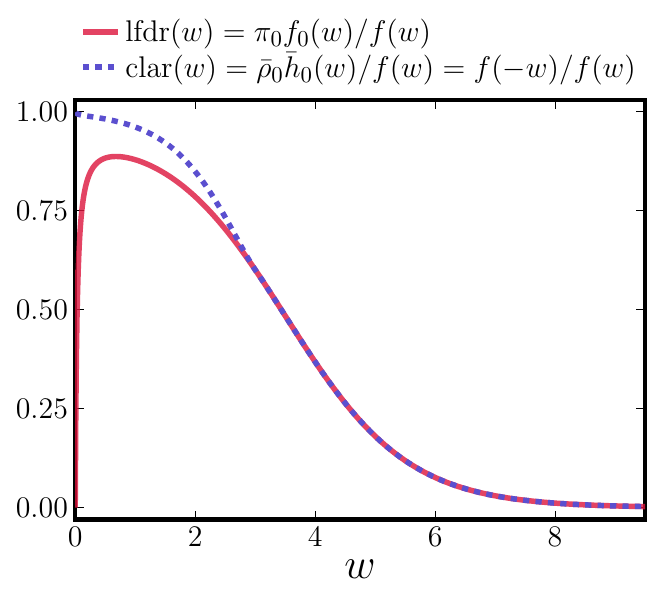}
\end{tabular}
\caption{Illustration of two decompositions of the marginal density $f$ with $\pi_0=0.8$, $f_0$ a symmetrized Gamma, and $f_1$ a shifted Gamma (more precisely, $f_0(w) = \{\Gamma(w; 2, 0.7)+\Gamma(-w; 2, 0.7)\}/2$ and $f_1(w) = \Gamma(w+3; 3,2)$, where $\Gamma(w;\alpha,\beta)$ denotes the density at $w$ of the Gamma distribution with shape $\alpha$ and scale $\beta$).
\textit{Left:} standard two-groups decomposition into null component $\pi_0 f_0(w)$ and non-null component $(1-\pi_0)f_1(w)$ as in~\eqref{eq:marginal}. \textit{Middle:} decomposition into maximally symmetric component 
$\bar{\rho}_0 \bar{h}_0(w) = f(w)\wedge f(-w)$ 
and its complement $(1-\bar{\rho}_0)\bar{h}_1(w) = f(w) - \bar{\rho}_0 \bar{h}_0(w)$
with $\bar{\rho}_0 
\approx 0.88$; $\bar{h}_1(w)=0$ for $w \in (-\infty, 0]$. \textit{Right:} the $\lfdr$ under the leftmost two-groups decomposition and the $\clar$ (which coincides with the $\lfdr$ for the middle two-groups decomposition). Note that $\lfdr(w) \leq \clar(w)$ for all $w \geq 0$ and that the inequality becomes tight for large $w$ (in this specific example). 
}
\label{fig:twogroups}
\end{figure}

Beyond being merely an upper bound to $\lfdr(w)$, $\clar(w)$ has appealing properties in terms of identifiability, sharpness, and probabilistic interpretation. The two-groups model in~\eqref{eq:marginal} is not identifiable: there are multiple choices of $\rho_0 \in [0,1]$ and densities $h_0,h_1$, with $h_0$ symmetric, that combine to the marginal density of the observations: $f = \rho_0 h_0 + (1-\rho_0)h_1$. Consequently, $\lfdr(w)$---which depends on the non-identifiable $f_0$ and $\pi_0$---is also not identifiable.\footnote{Concerns of non-identifiability apply more broadly to two-groups models even in the simpler case wherein the null density is known, see e.g.,~\citet{patra2016estimation, xiang2024interpretation} and references therein.
} 
Still, \citet[Theorem 1]{arias2021extending} prove that the particular choice 
$\bar{h}_0(w) \propto f(w) \wedge f(-w)$, which is the maximally symmetric component of $f$, leads to the decomposition $f=\bar{\rho}_0 \bar{h}_0 + (1-\bar{\rho}_0)\bar{h}_1$ with the largest possible $\rho_0=\bar{\rho}_0$. Unlike the original $(f_0,\pi_0)$,
the  maximally symmetric component $\bar{h}_0$ and its corresponding mixing weight $\bar{\rho}_0$ {\em are} identifiable. 
Moreover, under the assumption $f(-w) \leq f(w)$ for all $w \geq 0$, it can be verified that $\clar(w)$  {\em coincides} with the $\lfdr$ for the 
two-groups model with maximally symmetric null. 
In other words, $\clar(w)$ is sharp: equality in~\eqref{eq:clar-lfdr} is attained by a member of the class of (partially) identified two-groups models, which is itself identifiable;
Figure~\ref{fig:twogroups} illustrates these ideas. Appendix~\ref{sec:suppl_clar} elaborates on the above and, following~\citet{mccullagh2018statistical, xiang2024interpretation}, also provides a probabilistic interpretation of $\clar(w)$ in terms of {\em activity} indicators (which explains its name).

\subsection{Estimating clar}
\label{sec:clar-estimation}

The relation \eqref{eq:clar-lfdr} suggests a direct approach for conservatively estimating lfdr, namely, by estimating the density ratio in \eqref{eq:clar}. 
This could be done by approximating $f$ with a kernel density estimator $\hat{f}$, and then taking the estimate to be $\hat{f}(-w)/\hat{f}(w)$. Consistency holds for this approach under a weak dependence condition as $p \to \infty$ (Section \ref{sec:theory}), but if $f(w)$ is small at a point $w$ of interest then division by a noisy estimate $\hat{f}(w)$ can introduce unnecessary instability.

We propose instead to estimate the density ratio directly, leveraging a connection to supervised classification \citep[e.g.,][]{sugiyama2012density,efron2001empirical}. 
Consider the binary classification problem where the labels are the signs of the observed statistics $\sgn(W_\idx)\in \{\pm 1\}$, and the features are the absolute values $|W_\idx|$. 
Treating the negative label as ``success", 
the conditional odds are  
\begin{align}
\label{eq:odds-relation}
\frac{\P(W_\idx < 0 \mid |W_\idx|=w)}{1-\P(W_\idx < 0 \mid |W_\idx|=w)}
=
\frac{f(-w)}{f(-w) + f(w)} \cdot \frac{f(-w)+f(w)}{f(w)}
=
\frac{f(-w)}{f(w)}, 
\end{align}
i.e., this is $\clar(w)$. 
The binary classification framing allows us to impose 
structural constraints on the density ratio that can be easier to justify than specific 
smoothness assumptions about $f$ itself. 

Since we want  flexibility in modeling the logit of $\P(W_\idx < 0\mid |W_\idx|=w)$ as a function of $w$, a convenient way to proceed is to translate each $|W_\idx|$ into a covariate vector $Z_\idx = h(|W_\idx|)$ through some prespecified feature map $h:\R_+ \to \R^J$, and fit a logistic regression of $\mathbf{1}\{W_\idx<0\}$ onto $Z_\idx$.  A polynomial basis of high enough degree can effectively approximate any positive continuous odds function (see Appendix \ref{sec:proofs} for a formal statement), but may extrapolate poorly beyond the range of the data. Instead, we use a natural cubic spline basis, which extrapolates linearly beyond the right boundary knot. 
Beyond the left boundary knot, we impose a zero derivative constraint. Knot placement and other implementation details can be found in Appendix \ref{sec:spline-implementation}.

The zero derivative constraint prevents an undesirable scenario that arises in the unconstrained fit, namely, 
a wiggle often appears in the fitted clar function near the origin due to the flexibility of the cubic spline there. 
The true clar typically has a small negative slope near $w=0$ (e.g.~Figures \ref{fig:indep-sim} and \ref{fig:count-ratio}), so the constraint only introduces a minor bias in exchange for smoothing out this behavior.
Overall, we find that the logistic method
produces well-behaved estimates across a range of settings, as illustrated through simulation experiments in Section \ref{sec:numerical} and on real data in Section~\ref{sec:applications}.

\section{Asymptotic analysis}
\label{sec:theory}

\subsection{Estimation consistency} 
\label{sec:supervised-classification}

We state asymptotic consistency results for three methods of estimating $\clar$: the logistic regression classifier, the KDE method, and an isotonic regression estimator. Our results do not assume independence between $W_1,\dots,W_p$, and instead require convergence of the empirical cdf  (ecdf) to the marginal cdf $F$ as $p \to \infty$. 
\begin{assumption}[Weak dependence]
Let $F$ denote the marginal cdf of each of $W_1,\dots,W_p$, let $F_p(w)=\frac{1}{p}\sum_{\idx=1}^p \mathbf{1}\{W_\idx\leq w\}$ denote their ecdf, and let $F_{0p}(w)=\frac{1}{p_0}\sum_{\idx=1}^p \mathbf{1}\{W_\idx \leq w,\; H_\idx=0\}$ denote the ecdf among nulls.  We assume that:  $\|F_p-F\|_\infty \to 0$ and $\|F_{0p}-F_0\|_\infty \to 0$ in probability as $p \to \infty$, where $F_0$ is the null cdf.
\label{assum:weak_dep}
\end{assumption}

Assumption \ref{assum:weak_dep} commonly appears in the multiple testing literature; see, for example, Condition~2 of Theorem 1 in \cite{fan2025asymptotic} and the `weak dependence' condition in \cite{storey2004strong}. A common form of this condition is uniform convergence of the empirical distribution of the nulls. 
We additionally require the empirical distribution of the full sample to converge uniformly to the marginal cdf. 
We illustrate the convergence of $F_p$ to $F$ (and of $F_{0p}$ to $F_0$) empirically for several types of knockoff statistics in Figure \ref{fig:ecdf-diagnostic1}.

Lemma \ref{lem:pointwise-uniform} in Appendix \ref{sec:proofs} states that the uniform convergence in Assumption \ref{assum:weak_dep} is implied by pointwise convergence to a continuous cdf. While the proofs for results stated in this section assume uniform convergence as a starting point, pointwise convergence is sufficient in the current setting, since the existence of a density implies $F$ is continuous.

\vspace{1em}

\noindent \textit{Consistency for logistic regression.} 
Let $X_\idx = h(|W_\idx|) \coloneqq (h_1(|W_\idx|),\dots,h_{J}(|W_\idx|))$ be a featurization of $|W_\idx|$, for example a natural cubic spline basis. 
Define the logistic objective and a corresponding maximizer:
\begin{align*}
    \widehat{M}_p(\beta) &\coloneqq \frac{1}{p}\sum_{\idx=1}^p \Big[ \beta^\top X_\idx\; \mathbf{1}\{W_\idx<0\}   -  \log (1+e^{\beta^\top X_\idx}) \Big], \;\; \;
    \hat{\beta} \in \argmax_{\beta \in \R^{J}} \widehat{M}_p(\beta),
\end{align*}
and define the population-level analogs, 
$
M^*(\beta):=  \E[\widehat{M}_p(\beta)] 
$ and $\beta^* \in \argmax_{\beta \in \mathbb R^J} \; M^*(\beta)$. The latter is unique if we assume the following two conditions.
\begin{assumption}
\label{assumption:1}
$0 \prec \E[X_\idx X_\idx^\top] \prec \infty$, i.e.~$\E[X_\idx X_\idx^\top] \succ 0$ and every entry of $\E[X_\idx X_\idx^\top]$ is finite.
\end{assumption}
\begin{assumption}
    \label{assumption:2}
    $\P[ (2\; \mathbf{1}\{W_\idx < 0\}-1)\beta^\top X_\idx  \geq 0] <1$ for all $\beta \in \mathbb R^J \setminus \{0\}$.
\end{assumption}
Assumption \ref{assumption:1} implies the population-level objective is concave in $\beta$, and Assumption \ref{assumption:2} requires that no linear predictor perfectly distinguishes the sign of $W_\idx$ almost surely, which is implied for instance by the condition $\pi_0 f_0(w)>0$ for all $w$ (see Lemma \ref{lem:basic-non-separability} in Section \ref{sec:proofs}).
Our next result establishes consistency of $\hat{\beta}$ for $\beta^*$ under regularity conditions. The proof can be found in Section~\ref{sec:proofs}. 

\begin{theorem}
\label{thm:glm-consistency}
    Suppose Assumptions~\ref{assum:weak_dep}, \ref{assumption:1}, \ref{assumption:2} hold, and $h_1,\dots,h_J$ are continuous and bounded functions. Then $\hat{\beta}$ exists with probability tending to $1$ and $\hat{\beta} \to \beta^*$ in probability as $p\to\infty$. 
\end{theorem}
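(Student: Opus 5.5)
We follow the standard argmax-consistency route for concave M-estimators: pointwise convergence of $\widehat{M}_p$ to $M^*$, upgraded to uniform convergence on compacts by concavity, then localization of the maximizer. The key point is that the data enter $\widehat{M}_p(\beta)$ only through the empirical distribution of $W_1,\dots,W_p$, which Assumption~\ref{assum:weak_dep} controls without independence.

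\textbf{Pointwise convergence.} Fix $\beta$ and write
\[
\widehat{M}_p(\beta)=\int g_\beta\,dF_p, \qquad g_\beta(w)=\beta^\top h(|w|)\,\mathbf{1}\{w<0\}-\log\bigl(1+e^{\beta^\top h(|w|)}\bigr).
\]
Since $h$ is continuous and bounded, $g_\beta$ is bounded and continuous except possibly at $w=0$, where there is a single jump.

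Because $F$ is continuous and $\|F_p-F\|_\infty\to0$ in probability, $\int g\,dF_p\to\int g\,dF$ in probability for every bounded $g$ that is continuous off a finite set. To see this, pass to subsequences along which the convergence holds almost surely, use weak convergence, and note that the discontinuity set has $F$-measure zero (portmanteau theorem). By identical distribution, $\int g_\beta\,dF=M^*(\beta)$. Hence $\widehat{M}_p(\beta)\to M^*(\beta)$ in probability for each $\beta$. Only the first half of Assumption~\ref{assum:weak_dep} is used here.

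\textbf{Uniform convergence and localization.} Both $\widehat{M}_p$ and $M^*$ are concave in $\beta$. By the convexity lemma (Rockafellar; Andersen--Gill; Pollard), pointwise convergence in probability of concave functions to a finite concave limit implies
\[
\sup_{\|\beta-\beta^*\|\le \delta}\bigl|\widehat{M}_p(\beta)-M^*(\beta)\bigr|\to0
\]
in probability for every $\delta>0$. One applies it on a countable dense set and then uses the equi-Lipschitz property of concave functions on slightly larger compacts.

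We need $\beta^*$ to exist and be a unique, well-separated maximizer.
\begin{itemize}
\item Assumption~\ref{assumption:1} with bounded $h$ makes $M^*$ twice differentiable with Hessian $-\E[\sigma'(\beta^\top X)XX^\top]\prec0$, so $M^*$ is strictly concave.
\item Assumption~\ref{assumption:2} gives coercivity. For each unit vector $u$, $M^*(tu)\to-\infty$ as $t\to\infty$, because with positive probability the term $t\,u^\top X\,(\mathbf{1}\{W<0\})-\log(1+e^{tu^\top X})$ behaves like $-t\,|u^\top X|$. This is monotone in $t$, so $M^*(tu)\to -\infty$ by monotone convergence, and compactness of the unit sphere makes it uniform in $u$.
\end{itemize}
So a unique maximizer $\beta^*$ exists. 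Let $\epsilon=M^*(\beta^*)-\max_{\|\beta-\beta^*\|=\delta}M^*(\beta)>0$.

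Now take the event, whose probability tends to one, on which $\sup_{\|\beta-\beta^*\|\le\delta}|\widehat{M}_p-M^*|<\epsilon/3$. On this event $\widehat{M}_p(\beta^*)>\widehat{M}_p(\beta)$ for all $\beta$ on the sphere $\|\beta-\beta^*\|=\delta$.
\begin{itemize}
\item Concavity then forces $\widehat{M}_p(\beta)<\widehat{M}_p(\beta^*)$ for all $\|\beta-\beta^*\|\ge\delta$, since on any ray leaving $\beta^*$ a concave function that has already decreased keeps decreasing.
\item A maximizer of the continuous $\widehat{M}_p$ over the compact ball therefore exists, lies in the open ball, and is a global maximizer.
\end{itemize}
This yields both existence of $\hat\beta$ with probability tending to one and $\|\hat\beta-\beta^*\|<\delta$. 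Since $\delta$ is arbitrary, $\hat\beta\to\beta^*$ in probability.

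\textbf{Main obstacle.} The delicate step is the first one, transferring uniform ecdf convergence under dependence to convergence of $\int g_\beta\,dF_p$. The indicator $\mathbf{1}\{w<0\}$ is discontinuous, so one must use continuity of $F$ at $0$. An alternative that avoids the portmanteau argument is to approximate $g_\beta$ uniformly by step functions on a grid, made possible by uniform continuity of $h$ on compacts together with tail control from $F(\pm K)$, and bound the error by $\|F_p-F\|_\infty$ times the total variation of the approximant. Establishing coercivity of $M^*$ from Assumption~\ref{assumption:2} also requires some care, but it is deterministic.
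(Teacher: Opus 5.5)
Your proposal is correct and follows essentially the same route as the paper. Coercivity of $M^*$ from Assumption~\ref{assumption:2} and strict concavity from Assumption~\ref{assumption:1} give a unique, well-separated $\beta^*$. Pointwise convergence of $\widehat{M}_p(\beta)=\int g_\beta\,dF_p$ follows from weak convergence and the portmanteau theorem, using $F(\{0\})=0$. The concavity argument on the sphere $\|\beta-\beta^*\|=\delta$ then yields both existence and localization of $\hat\beta$.

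The differences are minor. You upgrade to uniform convergence on the ball via the Andersen--Gill/Pollard convexity lemma, where the paper uses uniform Lipschitz continuity of $\beta\mapsto\widehat{M}_p(\beta)$, which holds because $h$ is bounded. You also pass from convergence in probability to almost-sure convergence by subsequences, where the paper invokes Skorokhod's representation.
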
 

\vspace{1em}

\noindent \textit{Consistency for isotonic regression.} The Grenander estimator of clar is defined as follows. Let $y_\idx = \mathbf{1}\{W_\idx<0\}$ for $i=1,\dots,p$, and sort them in decreasing order of magnitude $|W|_{(1)}\geq \dots \geq |W|_{(p)}$. Let $\hat{\mu}_{(1)},\dots,\hat{\mu}_{(p)}$ denote the isotonic regression estimator computed from the sign indicator values $y_{(1)},\dots,y_{(p)}$, as obtained  from the left-hand slopes of the greatest convex minorant of the partial sum process $k \mapsto \sum_{i=1}^k y_{(i)}$. In R, one can run \texttt{pava()} on $y_{(1)},\dots,y_{(p)}$ to obtain $\hat{\mu}_{(1)}\leq \dots \leq \hat{\mu}_{(p)}$. The estimate of clar at $|W|_{(k)}$ is the odds $\hat{\mu}_{(k)}/(1-\hat{\mu}_{(k)})$. We extend this estimate to arbitrary $w>0$ as follows:
\begin{align}
\label{eq:isotonic-clar}
    \widehat{\clar}_{\text{iso}}(w) \coloneqq \frac{\hat{\mu}(w)}{1-\hat{\mu}(w)},
\end{align}
where $\hat{\mu}(w) \coloneqq \hat{\mu}_{(k)}$ when $w \in (|W|_{(k+1)},|W|_{(k)}]$, where $|W|_{(0)}:=\infty$ and $|W|_{(p+1)}:=0$, taking the convention that $\widehat{\clar}_{\text{iso}}$ is a left-continuous piece-wise constant function. Under strict monotonicity and weak dependence, the isotonic regression estimator is consistent for the true clar. This result is summarized below and proved in Section \ref{sec:proofs}. 
\begin{theorem}
\label{thm:consistency-iso}
Suppose $f$ is a positive and continuous probability density function, the true regression function $\mu(w) = \P(W_\idx < 0 \mid |W_\idx|=w)$ is non-increasing in $w$, and Assumption~\ref{assum:weak_dep} holds. Then for each $w>0$, $\widehat{\clar}_{\textnormal{iso}}(w) \to \clar(w)$ in probability as $p \to \infty$.

\end{theorem}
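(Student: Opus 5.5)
The plan is to use the min--max (or max--min) formula for isotonic regression. This turns consistency of $\hat\mu(w)$ into consistency of finitely many empirical ratios, each controlled by Assumption~\ref{assum:weak_dep}. Since $t\mapsto t/(1-t)$ is continuous on $[0,1)$ and $\mu(w)<1$ (because $f(w)>0$), it suffices to show $\hat\mu(w)\to\mu(w)$ in probability.

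Write the data in terms of $|W_\idx|$. The fitted values are non-decreasing in the index $k$, which runs in decreasing order of $|W|$, so $\hat\mu$ is non-increasing in $|W|$. For such a fit, the min--max representation gives, at $w=|W|_{(k)}$,
\[
\hat\mu(w)=\min_{a< w}\ \max_{b\ge w}\ \bar y[a,b],
\]
with the order of $\min$ and $\max$ chosen to match the non-increasing direction. Here $\bar y[a,b]$ is the average of $y_\idx=\mathbf 1\{W_\idx<0\}$ over indices with $|W_\idx|\in(a,b]$, intervals are restricted to data points, and the same formula extends to all $w$ under the paper's left-continuous convention. The empirical averages are ratios
\[
\bar y[a,b]=\frac{F_p(-a)-F_p(-b)}{F_p(b)-F_p(a)+F_p(-a)-F_p(-b)}
\]
(up to atoms, which have probability zero in the limit since $F$ is continuous). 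By Assumption~\ref{assum:weak_dep}, $\|F_p-F\|_\infty\to 0$, so numerator and denominator converge uniformly over $a,b$ to their population counterparts $G^-(a,b)=F(-a)-F(-b)$ and $G(a,b)=\P(a<|W|\le b)$. Only the part of the assumption concerning $F_p$ is needed here.

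The population ratio $G^-/G$ is the $f$-weighted average of $\mu$ over $(a,b]$. Because $\mu$ is non-increasing and continuous, with $\mu=f(-\cdot)/(f(\cdot)+f(-\cdot))$ and $f$ continuous and positive, the population min--max equals $\mu(w)$. For the upper bound, fix $\epsilon>0$ and choose $\delta$ with $\mu(w-\delta)\le\mu(w)+\epsilon$; taking $a=w-\delta$, every $b\ge w$ gives an average over $(w-\delta,b]\subset$ points where $\mu\le\mu(w)+\epsilon$. The lower bound is symmetric: for any $a<w$, take $b=w+\delta$ with $\mu(w+\delta)\ge\mu(w)-\epsilon$, so the average over $(a,w+\delta]$ is at least $\mu(w)-\epsilon$ because $\mu\ge\mu(w+\delta)$ there.

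The main obstacle is that the ratios are not uniformly controlled when the denominator $G(a,b)$ is small or the interval is unbounded. I will handle this by fixing $\delta$ first, so the relevant intervals contain $(w-\delta,w]$ or $[w,w+\delta]$. Their population denominator is then bounded below by some $c_\delta>0$, since $f$ is positive. On the event $\|F_p-F\|_\infty<\eta$, with $\eta$ small relative to $c_\delta\epsilon$, all such empirical averages lie within $\epsilon$ of the population ones uniformly in the free endpoint, including $b=\infty$. Feeding these uniform bounds into the two one-sided inequalities gives $\mu(w)-2\epsilon\le\hat\mu(w)\le\mu(w)+2\epsilon$ with probability tending to one.

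Two technicalities remain: the restriction of endpoints to data points, and the left-continuous extension between order statistics. Both are absorbed because the empirical quantities vary only by $O(\eta)$ when an endpoint is moved to the nearest data point, which lies within $o_p(1)$ distance since $F$ is strictly increasing. Finally, the continuous mapping theorem applied to $t\mapsto t/(1-t)$ yields $\widehat{\clar}_{\text{iso}}(w)\to\clar(w)$ in probability.
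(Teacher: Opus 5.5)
Your proof is correct, and it takes a genuinely different route from the paper's. The paper works with the cumulative sum diagram in the time-changed scale $t=G(w)$, where $G(w)=\P(|W_j|\ge w)$. It sets $H(t)=F^-(G^{-1}(t))$ and checks that $H'(t)=\mu(G^{-1}(t))$, so $H$ is convex. It proves $\|H_p-H\|_\infty\to 0$ in probability in a separate lemma, using uniform continuity of $H$ and $\|G_p-G\|_\infty\to 0$. It transfers this to the greatest convex minorant via Marshall's inequality, $\|\hat H_p-H\|_\infty\le\|H_p-H\|_\infty$. Finally, it sandwiches $\hat\mu(w)$, the left slope of $\hat H_p$ at $G_p(w)$, between difference quotients of $H$ over the shrinking data-driven window $\varepsilon_p=\max\{\|G_p-G\|_\infty^{1/2},\|H_p-H\|_\infty^{1/2}\}$.

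You instead use the min--max representation of the antitonic least-squares fit with a fixed window $\delta$. This bypasses the inverse $G_p^{-1}$, the uniform-convergence lemma for $H_p$, and Marshall's inequality. Your only probabilistic input is that ratios of $F_p$-increments over windows of population mass at least $c_\delta>0$ are uniformly close to their population values, which is immediate from $\|F_p-F\|_\infty\to 0$. Your argument is also more local: it uses only monotonicity of $\mu$, continuity of $\mu$ at $w$, and positivity of $f$ on the relevant windows. The paper's global sup-norm bound on $\hat H_p$ is the more natural starting point for uniform-in-$w$ statements or rates.

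One small precision on your last paragraph. For $w$ strictly between order statistics, the paper's convention assigns the fit at the smallest data point $x\ge w$. The exact formula is therefore $\min_{a<w}\max_{b\ge x}\bar y[a,b]$, not $\max_{b\ge w}$, since with $b\in[w,x)$ some windows can be empty. Your upper bound is unaffected, because restricting $b$ only decreases the max. Your lower bound needs $w+\delta\ge x$, which holds with probability tending to one, so the conclusion stands. At data points, restricting endpoints to data is automatic, since every real pair $a<w\le b$ selects a contiguous block of indices containing $w$.
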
 

\vspace{1em}

\noindent \textit{Consistency for kernel density estimate (KDE).}
Herein, we describe a simpler strategy for estimating $\clar(w)$: we separately estimate the numerator and denominator in~\eqref{eq:clar}. Fix $w>0$ and let $\hat{f}(w)$, resp.\ $\hat{f}(-w)$ be estimates of $f(w)$, resp.\ $f(-w)$. Then we can estimate $\clar(w)$ as $\hat{f}(-w)/\hat{f}(w)$. To instantiate this idea more concretely, suppose $\hat{f}(w)\equiv \hat{f}_h$ is a kernel density estimate with kernel $K$ and bandwidth $h$, that is, 
$$
\hat{f}_h(w) := \frac{1}{ph} \sum_{j=1}^p K\left(\frac{W_j-w}{h}\right),\;\;\; \widehat{\clar}_h(w) := \frac{\hat{f}_h(-w)}{\hat{f}_h(w)}.
$$
Under the weak dependence condition (Assumption \ref{assum:weak_dep}), the KDE method remains a consistent estimator for the density ratio $f(-w)/f(w)$.  
This result is stated below and proved in Section 
\ref{sec:proofs}.

\begin{theorem}
\label{thm:KDE-consistency}
Fix $w>0$. Suppose that the density $f$ is continuous and positive in a neighborhood of $w$ and $-w$, and Assumption \ref{assum:weak_dep} holds. Also, assume that the kernel $K$ is supported on $[-1,1]$, is Lipschitz continuous on $\mathbb R$ and satisfies $\int_{[-1,1]}K(u) du = 1$. Then there exists a sequence $h_p \to 0$ such that \smash{$\widehat{\clar}_{h_p}(w) \to \clar(w)$} in probability as $p \to \infty$.
\end{theorem}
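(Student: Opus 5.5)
The plan is to write the KDE as an integral against the empirical measure and integrate by parts, so that Assumption~\ref{assum:weak_dep} controls its error uniformly in $h$ up to a factor $1/h$. Set $\Delta_p := \|F_p - F\|_\infty$, which tends to $0$ in probability. Since $K$ is Lipschitz with compact support, it is absolutely continuous with bounded derivative $K'$ a.e., and $\int |K'| \le 2L$, where $L$ is the Lipschitz constant. For any $x$,
\begin{align*}
\hat f_h(x) = \frac{1}{h}\int K\Big(\frac{t-x}{h}\Big)\, dF_p(t), \qquad \bar f_h(x) := \frac{1}{h}\int K\Big(\frac{t-x}{h}\Big)\, dF(t).
\end{align*}
Integrating by parts (Lebesgue--Stieltjes, with no boundary terms because $K$ vanishes outside $[-1,1]$) gives
\begin{align*}
\hat f_h(x)-\bar f_h(x) = -\frac{1}{h^2}\int K'\Big(\frac{t-x}{h}\Big)\,\big(F_p(t)-F(t)\big)\, dt .
\end{align*}
The right-hand side is bounded in absolute value by $\frac{1}{h}\Delta_p\int |K'(u)|\,du \le 2L\,\Delta_p/h$. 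This bound is deterministic given $\Delta_p$ and uniform in $x$.

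The next step is the bias. By the substitution $t = x + hu$ we have $\bar f_h(x) = \int_{-1}^1 K(u) f(x+hu)\,du$. Since $f$ is continuous in a neighborhood of $\pm w$ and $\int K = 1$ (with $\int|K|<\infty$, in case $K$ is not assumed nonnegative), it follows that $\bar f_h(\pm w)\to f(\pm w)$ as $h\to 0$. This uses only the values of $f$ on $[\pm w - h, \pm w + h]$, which lie inside the neighborhood once $h$ is small.

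The main obstacle is choosing $h_p$ when $\Delta_p$ is random with unknown rate. The statement only asks for existence, so I would choose $h_p$ from the distribution of $\Delta_p$. Because $\Delta_p\to 0$ in probability, there is a deterministic sequence $\epsilon_p\to 0$ with $\P(\Delta_p>\epsilon_p)\to 0$; concretely, take $\epsilon_p$ to be the smallest $1/k$ with $\P(\Delta_p > 1/k)\le 1/k$, up to a standard diagonal argument. Then set $h_p := \max(\sqrt{\epsilon_p}, 1/p)$, so that $h_p\to 0$ and $\epsilon_p/h_p\to 0$. With probability tending to one, $|\hat f_{h_p}(\pm w)-\bar f_{h_p}(\pm w)|\le 2L\sqrt{\epsilon_p}\to 0$. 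Combined with the bias step, this gives $\hat f_{h_p}(\pm w)\to f(\pm w)$ in probability.

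Finally, since $f(w)>0$, the continuous mapping theorem applied to $(a,b)\mapsto a/b$ at $(f(-w), f(w))$ yields $\widehat{\clar}_{h_p}(w)\to f(-w)/f(w)=\clar(w)$ in probability. Only the first part of Assumption~\ref{assum:weak_dep} is needed, and positivity of $f(-w)$ is not essential beyond what the hypotheses provide.
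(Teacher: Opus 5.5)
Your proposal is correct and follows essentially the same route as the paper's proof. Both arguments integrate by parts against $F_p-F$ to get the deterministic bound $2L\|F_p-F\|_\infty/h$, control the bias $\int K(u)\{f(\pm w+hu)-f(\pm w)\}\,du$ by continuity of $f$, and finish with the continuous mapping theorem for the ratio. The only difference is in choosing the bandwidth: the paper asks for $h_p\to 0$ with $\mathbb{E}\|F_p-F\|_\infty/h_p\to 0$ (implicitly using $\|F_p-F\|_\infty\le 1$ and Markov's inequality), whereas you construct a deterministic $\epsilon_p$ with $\mathbb{P}(\|F_p-F\|_\infty>\epsilon_p)\to 0$ and set $h_p=\max(\sqrt{\epsilon_p},1/p)$, which makes the existence of the sequence fully explicit.
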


\subsection{Asymptotic boundary FDR control}
In practice, it is common to shortlist promising hypotheses by selecting those with $W_\idx$ greater than some critical cutoff $\hat{\tau}$, where larger positive values of $W_j$ are generally associated with stronger evidence against the null. 
Consider setting this cutoff to ensure that rejections near $\hat{\tau}$
are estimated to have lfdr just below~$\alpha$. \cite{soloff2024edge} showed this approach controls the expected (true) lfdr $\leq \alpha$ at the rejection threshold when using a particular shape-constrained estimator of lfdr \citep{strimmer2008unified,grenander1956theory}, assuming $f_0$ is known and the test statistics are independent.
When $f_0$ is unknown but symmetric (and the non-nulls are positively enriched), we set the threshold according to an estimate of clar: 
\begin{align}
\label{eq:clar-hat-cutoff}
    \hat{\tau} \coloneqq \inf\{w \geq 0 : \widehat{\clar}(w) \leq \q\}.
\end{align} 
Near the cutoff, the false discovery proportion among boundary rejections is controlled below $\alpha$ asymptotically under regularity conditions (Theorem \ref{thm:asymptotic-bFDR}). 
The intuition is as follows: when the estimator $\widehat{\clar}$ is uniformly consistent in a neighborhood of the population-level target, $\clar(\hat{\tau})$ converges to $\alpha$ as $p\to\infty$ under monotonicity and continuity. Since clar is a conservative surrogate for lfdr, the latter is asymptotically $\leq\alpha$ among rejections near $\hat{\tau}$. 
A formal proof is provided in Appendix~\ref{sec:proofs}.

\begin{theorem}
\label{thm:asymptotic-bFDR}
Fix $\alpha \in (0,1)$ for which there is a unique $w^*$ satisfying $\clar(w^*)=\alpha$, and suppose that
\begin{enumerate}[(i)]
    \item $\clar$ is continuous at $w^*$ and non-increasing on $[0,w^*+\eta)$ for some $\eta>0$, 
    \item $f,f_0$ are positive and continuous in a neighborhood of $w^*$, and $f_0$ is symmetric about zero, 
    \item $p_0/p\to \pi_0\in(0,1)$ as $p\to\infty$, where $p_0$ is the number of true nulls, 
    \item Assumption \ref{assum:weak_dep} holds, i.e.~$\|F_p-F\|_\infty$ and $\|F_{0p}-F_0\|_\infty$ both tend to zero in probability as $p\to \infty$,
    \item $\widehat{\clar}$ is uniformly consistent for $\clar$ on an open interval containing $w^*$, and one of the following conditions holds: $\widehat{\clar}$ is non-increasing on $[0,w^*)$ with probability tending to $1$, or $\widehat{\clar}$ is uniformly consistent for $\clar$ over $[0,w^*)$.
\end{enumerate}
Then, for the threshold $\hat{\tau}$ defined in \eqref{eq:clar-hat-cutoff}, 
    there exists a sequence of positive numbers $\varepsilon_p \to 0$ such that for any $\delta>0$,
\begin{align*}
\lim_{p\to\infty}\P\big(\FDP([\hat{\tau},\hat{\tau}+\varepsilon_p]) > \alpha + \delta \big) = 0.
\end{align*}
\end{theorem}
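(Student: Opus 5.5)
The proof has two steps: first show $\hat{\tau}\to w^*$ in probability, then show that the empirical FDP on a shrinking window $[\hat{\tau},\hat{\tau}+\varepsilon_p]$ tends to $\lfdr(w^*)\le\clar(w^*)=\alpha$. Here $\FDP(I)=\#\{j:W_j\in I,H_j=0\}/(\#\{j:W_j\in I\}\vee 1)$.

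\emph{Step 1: consistency of the threshold.} Fix $\gamma>0$ small enough that $[w^*-\gamma,w^*+\gamma]$ lies inside the interval of uniform consistency in (v), and below $w^*+\eta$.

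First, the monotonicity in (i) and uniqueness of $w^*$ give a strict gap on each side. Suppose $\clar(w^*-\gamma)=\alpha$. Then $w^*-\gamma$ would be a second solution, contradicting uniqueness. Hence $\clar(w)\ge\clar(w^*-\gamma)>\alpha$ on $[0,w^*-\gamma]$. Similarly, $\clar(w^*+\gamma)<\alpha$, since monotonicity forbids $>\alpha$ and uniqueness forbids $=\alpha$.

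Now consider the upper bound on $\hat{\tau}$. By uniform consistency near $w^*$, with probability tending to one $\widehat{\clar}(w^*+\gamma)<\alpha$, so $\hat{\tau}\le w^*+\gamma$.

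For the lower bound, we need $\widehat{\clar}>\alpha$ on $[0,w^*-\gamma]$ with probability tending to one. Under the second alternative in (v), this follows directly from the strict gap and uniform consistency on $[0,w^*)$. Under the first alternative, $\widehat{\clar}$ is non-increasing on $[0,w^*)$ and $\widehat{\clar}(w^*-\gamma)>\alpha$ by local consistency. Monotonicity then propagates the inequality to all $w\le w^*-\gamma$. In both cases $\hat{\tau}\ge w^*-\gamma$.

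Letting $\gamma\downarrow0$ slowly gives $\hat{\tau}\to w^*$ in probability. Because $\gamma$ can be taken along a sequence, there is a deterministic $\gamma_p\to0$ with $\P(|\hat{\tau}-w^*|>\gamma_p)\to0$.

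\emph{Step 2: the window FDP.} Let $\Delta_p=\max(\|F_p-F\|_\infty,\|F_{0p}-F_0\|_\infty)$. By Assumption \ref{assum:weak_dep} there is a deterministic $d_p\to0$ with $\P(\Delta_p>d_p)\to0$. Choose $\varepsilon_p\to0$ with $d_p/\varepsilon_p\to0$, say $\varepsilon_p=\sqrt{d_p}\vee p^{-1/2}$ adjusted to go to zero.

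For any interval $I=[a,a+\varepsilon_p]$, the numerator of $\FDP(I)$ divided by $p$ equals $(p_0/p)\{F_0(a+\varepsilon_p)-F_0(a^-)\}+O(\Delta_p)$. The denominator divided by $p$ equals $F(a+\varepsilon_p)-F(a^-)+O(\Delta_p)$.

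By (ii), continuity of $f$ and $f_0$ near $w^*$ gives, uniformly over $|a-w^*|\le\gamma_p$,
\[
F_0(a+\varepsilon_p)-F_0(a)=\varepsilon_p\{f_0(w^*)+o(1)\},\qquad F(a+\varepsilon_p)-F(a)=\varepsilon_p\{f(w^*)+o(1)\}.
\]
Since $\Delta_p/\varepsilon_p\to0$ in probability and $f(w^*)>0$, it follows that
\[
\sup_{|a-w^*|\le\gamma_p}\Bigl|\FDP([a,a+\varepsilon_p])-\frac{\pi_0 f_0(w^*)}{f(w^*)}\Bigr|\to0
\]
in probability, using (iii) for $p_0/p\to\pi_0$.

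Plugging in $a=\hat{\tau}$, which lies in this range with probability tending to one by Step 1, yields $\FDP([\hat{\tau},\hat{\tau}+\varepsilon_p])\to\lfdr(w^*)$. By \eqref{eq:clar-lfdr}, $\lfdr(w^*)\le\clar(w^*)=\alpha$, which gives the claim.

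\emph{Main obstacle.} The delicate part is Step 2, specifically the random location $\hat{\tau}$. It is handled by making the approximation uniform over $a$ in a shrinking neighborhood, which the sup-norm bounds in Assumption \ref{assum:weak_dep} allow. The choice of $\varepsilon_p$ must let the window shrink while staying large compared with the ecdf error.
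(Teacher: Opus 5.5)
Your proof is correct and follows the paper's route. The paper likewise first gets $\hat\tau\to w^*$ via Lemma~\ref{lem:monotone-clar}, using the same monotonicity-plus-uniqueness gap argument; it then takes a window width that dominates the ecdf error, linearizes $F$ and $F_0$ near $w^*$, and concludes with $\lfdr\le\clar$. Your deterministic $\varepsilon_p=\sqrt{d_p}$ (the paper uses the data-dependent $\max\{\|F_p-F\|_\infty^{1/2},\|F_{0p}-F_0\|_\infty^{1/2}\}$) and your localization uniform over $|a-w^*|\le\gamma_p$ are minor refinements: they match the statement's ``sequence of positive numbers'' more literally, and they give the two-sided limit $\FDP\to\lfdr(w^*)$ without passing through $\clar(\hat\tau)\to\alpha$.
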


\section{Numerical experiments}
\label{sec:numerical}

We turn to some simulation experiments for assessing the accuracy of the proposed methods. 
In the first example we consider an independent two-groups model as a simple testing ground where the behavior of the methods, in particular the proposed logistic regression estimator, is more predictable. 
In the second, more realistic example the observations are $W$-statistics from a fixed-$X$ knockoffs implementation. 

\medskip
\noindent {\bf Independent two-groups model.} 
In this experiment the statistics are drawn  independently from a simple two-point Gaussian mixture,  
\begin{align}
\label{eq:numerical-two-group}
    W_\idx \; \stackrel{\iid}{\sim} \; 0.9 \; \calN(0,1) + 0.1\; \calN(\mu=2.5,1), 
\end{align}
for $\idx=1,\dots,p=10^4$. 
We compare the following methods for estimating clar:
\begin{itemize}[leftmargin=*]
    \item {\em Isotonic method (Grenander)}. This method was described in Section \ref{sec:supervised-classification}. Briefly, we order the indicator variables $y_\idx \coloneqq \mathbf{1}\{W_\idx<0\}$ by $|W|_{(1)}\geq \dots\geq |W|_{(p)}$, and run \texttt{pava()} on $y_{(1)},\dots,y_{(p)}$. The Grenander estimate of  clar($|W|_{(k)})$ is the fitted conditional odds $\hat{\mu}_{(k)}/(1-\hat{\mu}_{(k)})$, where $\hat{\mu}_{(k)}$ is the $k^{\text{th}}$ fitted value from isotonizing $y_{(1)},\dots,y_{(p)}$.
    \item {\em Kernel density estimation (KDE)}. This approach was described at the start of Section \ref{sec:clar-estimation} (and implicitly in Section 2 of \cite{arias2021extending}). 
    We fit the marginal density of the $W$-statistics using the \texttt{density()} function with default settings in R, and take $\widehat{\clar}(w) = \hat{f}(-w)/\hat{f}(w)$. 
    \item {\em Natural splines logistic classifier}. This is the method proposed toward the end of Section~\ref{sec:clar-estimation},  employing a natural cubic spline basis that extrapolates linearly beyond the range of the $|W_\idx|$'s. 
\end{itemize}
\begin{figure}[h]
\centering
\includegraphics[width=\linewidth]{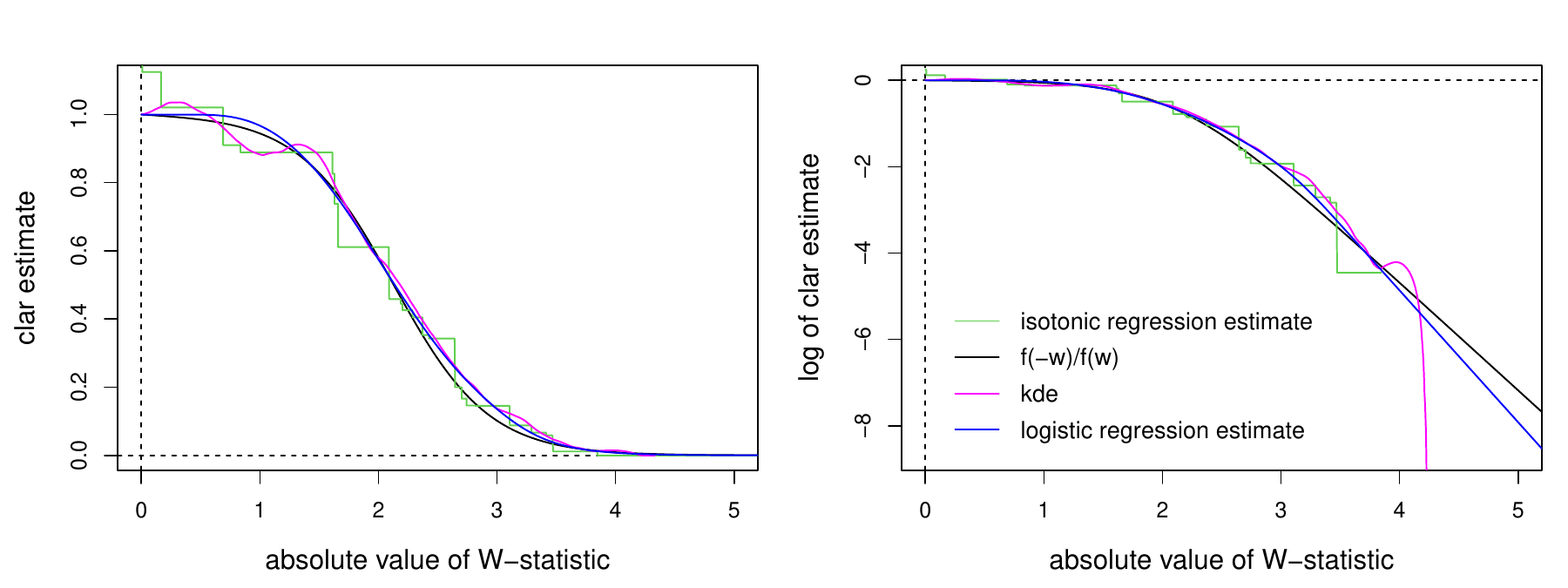}
\caption{Comparison of clar estimates for several methods in the Gaussian two-groups model with location shift. }
\label{fig:indep-sim}
\end{figure}
The left panel of Figure \ref{fig:indep-sim} shows the estimate of $\clar(|W_\idx|)$ against $|W_\idx|$ for each of the estimators mentioned above, along with the true ratio, $\clar(w) = f(-w)/f(w)$ for $w>0$. 
The right panel displays the same plot, except with the $y$-axis shown on a logarithmic scale to allow examining performance of the different  methods at the tail (large $|W_\idx|$). 
The plots demonstrate that the logistic method performs  well in estimating the true density ratio; 
in particular, the choice of the natural spline basis seems to provide appropriate regularization, which helps avoid the wiggle apparent in the other methods. 
To further support the appropriateness of the linear extrapolation (of the natural spline basis) beyond the boundary knots, which is more evident in the right panel of the figure, consider the true log odds function in this setting, 
\begin{align*}
    \log \frac{f(-w)}{f(w)} = \log \Big(\frac{1+\frac{1-\pi_0}{\pi_0} \phi(w+\mu)/\phi(w)}{1+\frac{1-\pi_0}{\pi_0} \phi(w-\mu)/\phi(w)}\Big). 
\end{align*}
As $w\to \infty$, the log odds at $w$ satisfies 
\begin{align}
\label{eq:limit-log-odds}
    \lim_{w \to\infty}\frac{\log f(-w)/f(w)}{-\mu w}=1, 
\end{align}
so 
$\text{logit}\; \P(W_\idx < 0 \mid |W_\idx|=w)$ is indeed expected to be approximately a linear function of $w=|W_\idx|$ when $w\gg \mu$. 
For positive $w$ near zero, the odds ratio behaves as
\begin{align*}
    \frac{f(-w)}{f(w)} \approx 1 + \frac{1-\pi_0}{\pi_0} \big( e^{-w\mu-\mu^2/2} - e^{w\mu - \mu^2/2} \big) \sim 1-\frac{1-\pi_0}{\pi_0} \; 2w\mu,
\end{align*}
but can deviate from this linear approximation as $w=|W_\idx|$ moves further away from zero. 
At the same time, the logistic method picks up the correct overall trend of the true curve. 
Importantly, it is able to roughly locate the steep decline around $w\approx 1.5$ which is indicated also in  the isotonic estimate; this supports the heuristic of building on the isotonic fit to set the knots for the logistic method in the implementation, as detailed in Appendix \ref{sec:spline-implementation}.

\medskip
\noindent {\bf Knockoff $W$-statistics.} 
Our second example simulates a variable selection problem where the $W_\idx$ are knockoff $W$-statistics. 
The setting is adapted from the \href{https://web.stanford.edu/group/candes/knockoffs/software/knockoffs/tutorial-3-r.html}{knockoffs web tutorial}. 
We simulate the raw data $(\bX, Y)$ from a Gaussian linear model with $n=3000$ cases and $p=1000$ predictors, among which $s=100$ have coefficient $\beta_j=4.5/\sqrt{n}$ and the rest of the coefficients are zero. 
The rows of the matrix $\bX$ are generated from a multivariate normal distribution with mean zero and covariance $\Sigma_{ij} = \rho^{|i-j|}$ for $\rho=0.25$. 
The $W_\idx,\; \idx=1,...,1000$, are {\em Lasso signed-max} statistics \citep[LSM,][]{barber2015controlling} calculated on the knockoffs-augmented matrix where $\bX$ is regarded as fixed. 
\begin{figure}[t]
    \centering
    \includegraphics[width=\linewidth]{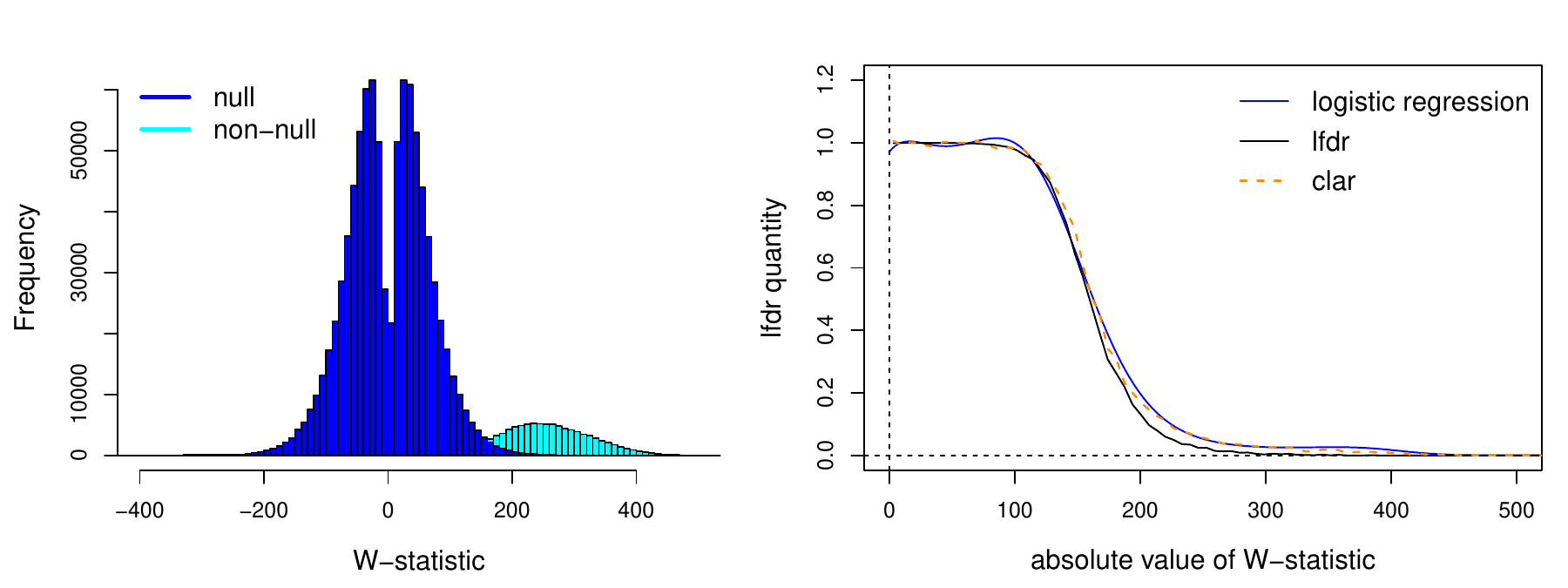}
    \caption{On the left, a histogram of Lasso-sign-max (LSM) importance statistics is shown, obtained from $N=1000$  repeated runs of a variable selection experiment with $p=1000$ variables.
    On the right, the clar is estimated 
    using logistic regression, the lfdr is estimated using a fixed binwidth and 100 equally spaced bins, and the true clar is estimated by the odds ratio for the proportion of absolute value $W$-statistics in each bin that are negative.}
    \label{fig:count-ratio}
\end{figure}
The left panel of Figure \ref{fig:count-ratio} shows the marginal distribution of the $W$-statistics, gathered from $N=1000$ repetitions of this experiment. A logistic regression estimate of clar is shown on the right panel. For reference, the plots also show approximations of the true clar and the true lfdr, obtained by averaging the suitable local count ratios over repeated runs of the experiment. Figure \ref{fig:sim-knockoffs-clar} shows the clar estimate produced by the (natural splines) logistic method from a single repetition. The logistic estimate is a bit conservative in this example, but still able to capture the overall shape in the graph of the true clar. 
\begin{figure}[h]
    \centering
    \includegraphics[width=.5\linewidth]{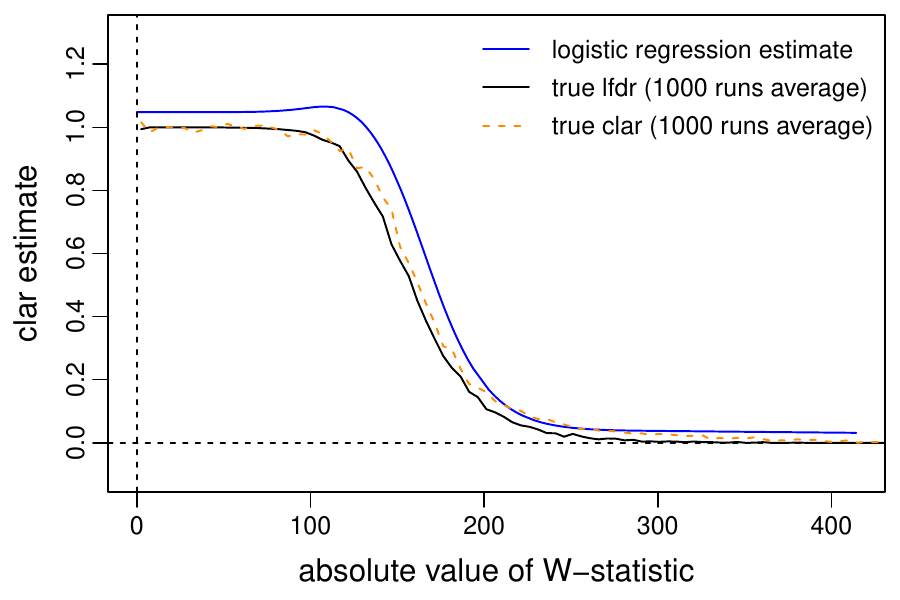}
    \caption{Illustration of the cubic spline logistic method for clar estimation in a knockoffs setting. 
To approximate the true clar, in each bin we calculated the count of negative $W_\idx$ in the mirror image region, divided by the count of positive $W_\idx$ in the original bin. Similarly, for each of the 1000 independent runs we divided the absolute values of the realized $W_\idx$ into small bins, calculated the (true) FDP in every bin as an estimate of the local FDP, and reported the average FDP in each bin. We then took the average in each bin over the 1000 runs and used the relationship \eqref{eq:logistic} to transform this into an estimate of clar and lfdr, respectively.} 
    \label{fig:sim-knockoffs-clar}
\end{figure}

\subsection{Standard error estimates}
\label{sec:standard-error}
We now describe two methods for obtaining standard errors for the logistic regression estimator of $\clar$. The first method is based on robust sandwich estimators of the variance, which are converted to standard errors for $\widehat{\clar}$ through the delta method. 
The second method is based on a parametric bootstrap-after-selection method developed by \cite{luo2024estimating} for assessing variability in FDR estimates of variable selection procedures.

\paragraph{Sandwich standard errors.} Parameter estimates for the logistic model coefficients $\gamma_k$ in \eqref{eq:logistic}, or the more practical cubic spline version, are computed by the \texttt{glm()} function in R, where we define the response as $1\{W_\idx < 0\}$, and the predictors as polynomial basis functions evaluated at $|W_\idx|$. 
Approximate standard errors can be calculated using the delta method: given the fitted coefficients $\hat{\gamma}_0,\dots,\hat{\gamma}_J$ and an estimate of their covariance matrix $\widehat{\Sigma}$, the variance of the fitted value $g(\hat{\gamma}) \coloneqq \exp\Big\{\sum_{k=0}^J \hat{\gamma}_k h_k(w) \Big\}$ is approximated as
\begin{align*}
    \widehat{\operatorname{Var}}[g(\hat{\gamma})] \approx \nabla g(\hat{\gamma})^\top \widehat{\Sigma}\; \nabla g(\hat{\gamma}), \hspace{2em} [\nabla g(\gamma)]_k \coloneqq h_k(w) \exp\Big\{ \sum_{j=0}^J \gamma_j h_j(w) \Big\}.
\end{align*}
Since the logistic model for estimating $\clar$ is misspecified even in ideal settings such as the 
i.i.d.~Gaussian two-groups model, this standard error procedure can benefit from using a robust (sandwich) estimator $\widehat{\Sigma}$ for the covariance matrix; such can be computed readily in R using the \texttt{vcovHC} function in the \texttt{sandwich} package. 
For independent observations $W_1,\dots,W_p$, this  approach ensures asymptotic coverage  of the closest representative in the logistic model to the truth (measured in KL distance). 
However, delta method standard errors can be grossly invalid when the $W_\idx$ are not independent, which is the case when $W_\idx$ are obtained with knockoffs, for example. 
In Figure \ref{fig:coverage-naive}, the right panel plots the delta method intervals 
for the numerical experiment from Section \ref{sec:intro}, where the $W_\idx$ are LSM statistics calculated using knockoffs. 
In the left panel of Figure \ref{fig:coverage-naive} we assess the coverage rate of the delta method intervals in 1000 runs of the entire  experiment; the intervals target 95\% coverage but achieve much lower coverage for values between $200 \leq w \leq 300$.

\paragraph{Bootstrap standard errors.}

To mitigate the limitations of the delta method, we offer an alternative, bootstrap approach. 
For the independent two-groups example \eqref{eq:numerical-two-group}, which involves no covariates, using the usual bootstrap scheme  produces the standard error estimates reported in the left panel of Figure \ref{fig:sim-indep-se}  (the plot displays $\pm 2$ SE around the same clar estimate from the original  realization in Figure \ref{fig:indep-sim}). 
For verification, the right panel of the figure plots true standard errors, as approximated from $N=200$ fresh Monte Carlo repetitions of the experiment, which show good agreement with the bootstrap estimates in the left panel. 

\begin{figure}[h]
\centering
\includegraphics[width=\textwidth]{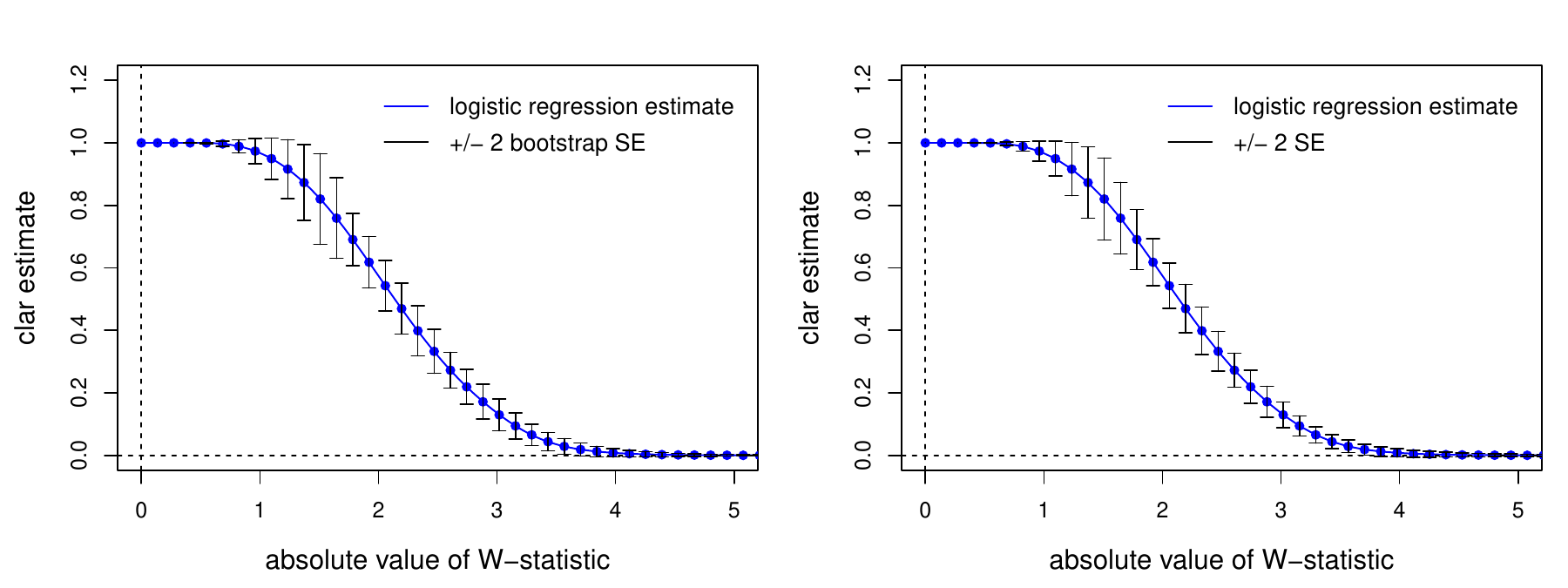}
\caption{Standard errors for the independent two-groups example. 
{\em Left}: bootstrap estimates of the SE. {\em Right}: true standard errors as estimated from 200 fresh draws.}
\label{fig:sim-indep-se}
\end{figure}

\begin{figure}[h]
\centering
\includegraphics[width=\linewidth]{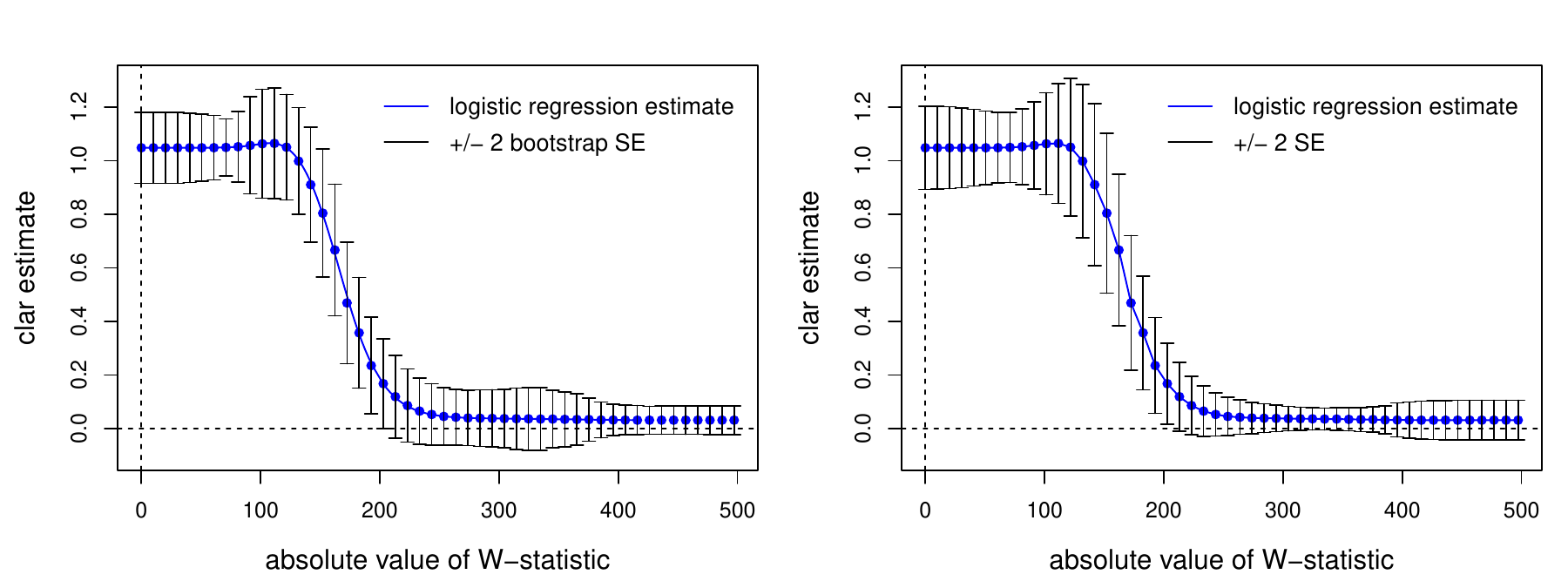}
\caption{Standard errors for the knockoffs example. 
Left panel shows parametric bootstrap estimates of the SE, right panel shows `true' standard errors as estimated from  1000 fresh draws. 
}
\label{fig:sim-kf-se}
\end{figure}

However, when $p$ is large relative to $n$, this approach suffers from the curse of dimensionality---the bootstrap distribution fails to represent the true distribution of the point estimate. 
To demonstrate this, Figure \ref{fig:OLS-statistics} in the Appendix shows histograms of the true $W_\idx$ statistics versus the histogram produced by pairs bootstrap, for simulation examples where $n=10^4, p=400$ and the sparsity is 0.1, i.e. $\beta$ has $s=40$ nonzero entries, each of which is equal to $4.5/\sqrt{n} = 0.045$. 
Here we used OLS instead of LSM for computing the $W$-statistics, and the  rows in the figure correspond to different settings for the $X$ matrix. 
The pairs bootstrap histogram clearly looks different from the true histogram already when $X$ has independent columns. 
The situation is much worse when the columns of $X$ are correlated: in that case the histogram from the pairs bootstrap shows a heavy tail on the left, which (as depicted in the figure) contained mostly {\em non}-nulls. 
In Figure \ref{fig:pairs-bootstrap-low-dim} we repeated the experiment in the middle row of Figure \ref{fig:OLS-statistics}, but for $p=40$ and $s=10$ instead of $p=400,s=40$ ($n=10^4$ as before). 
In that case, where $p$ is very small compared to $n$, pairs bootstrap performed very well in estimating the true histogram. 
This suggests that the problem is indeed related to the size of $p$ relative to $n$. 

To overcome this issue, we propose to use a {\em parametric} bootstrap approach instead: first we fit Lasso regression to select a subset of predictor variables. We then fit an ordinary least squares (OLS) regression using only the selected variables, obtaining coefficients $\hat{\beta}^{\text{OLS}}$ and residual standard error $\hat{\sigma}^{\text{OLS}}$. 
Finally, bootstrap samples of the response vector are generated as
\begin{align*}
    y^{(b)} \sim \mathcal{N}_n(X \hat{\beta}^{\text{OLS}}, (\hat{\sigma}^{\text{OLS}})^2 I), \quad  b=1,\dots,B. 
\end{align*}
For each bootstrap dataset we recompute the statistics $W_1,\dots,W_p$ and estimate $\clar$; 
the standard deviation of $\widehat{\clar}_b(w)$ across bootstrap samples $b=1,\dots,B$ gives the standard error estimate, shown in the left panel of Figure \ref{fig:sim-kf-se} around the same clar estimate from the single realization in Figure \ref{fig:sim-knockoffs-clar}). 
These compare well with the true standard deviations, which we estimated based on $N=200$ fresh Monte Carlo replications and displayed in the right panel of the figure, 
except for the tail (large $|W_\idx|$) where they are not reliable. 
This is further confirmed by the right column in Figure \ref{fig:OLS-statistics}, which demonstrates that the parametric bootstrap approach produces much better estimates for the true distribution of the $W$-statistics compared to pairs bootstrap.

\section{Applications}
\label{sec:applications}

In this section, we apply our methodology for clar estimation on two real datasets from scientific contexts where the symmetric null assumption can be reasonably justified. The first dataset comes from a proteomics experiment where the goal is to identify amino-acid chains useful for inhibiting protein binding, and the second one is the HIV drug-resistance dataset previously analyzed by \cite{barber2015controlling}, where the knockoffs construction guarantees symmetry under the null. 

\subsection{High-throughput profiling of reactive cysteines}

\begin{figure}[h]
    \centering    \hspace{-1.5em}\includegraphics[width=1\linewidth]{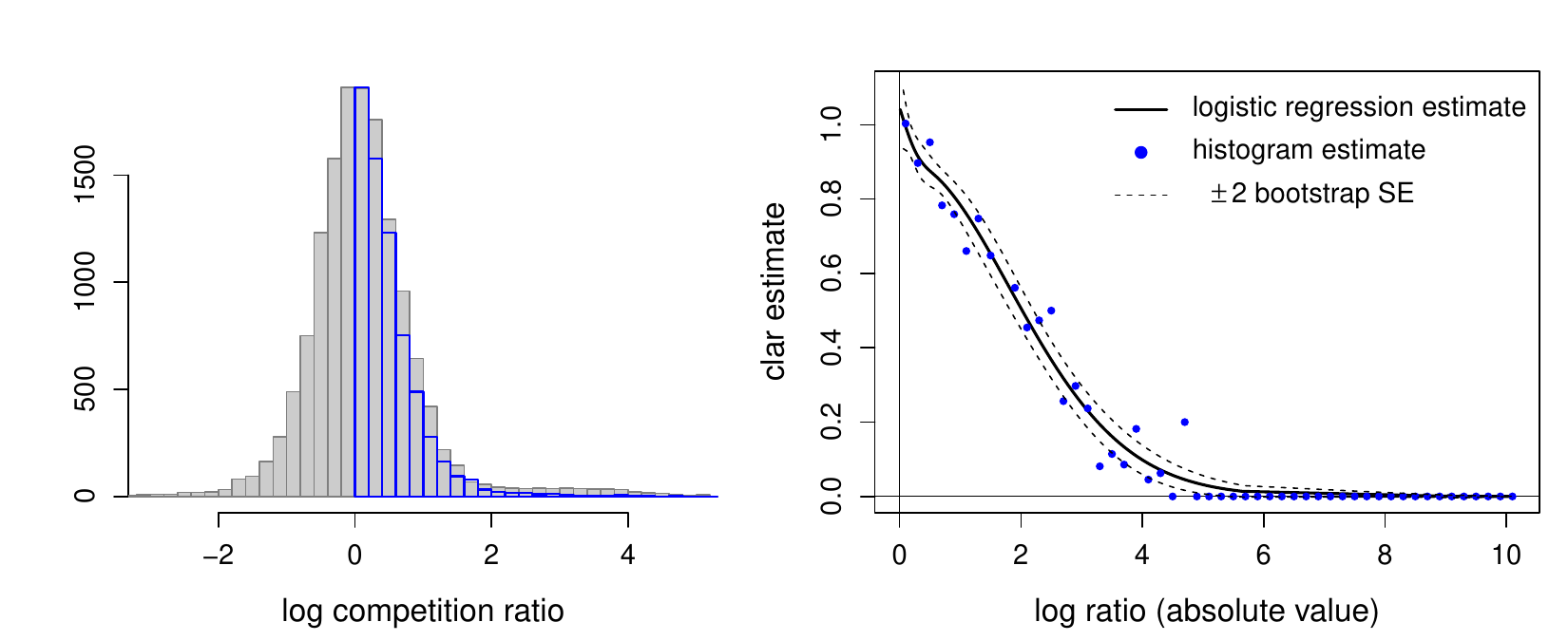}
    \caption{The left panel shows a histogram of $\log_2$ competition ratios from the protein analysis in \cite{byun2023covalent}, where the histogram to the left of zero is reflected to the positive real line. In the right panel, we show two estimates of clar in the protein example: one using a rough histogram estimator (blue), and the other a natural cubic spline logistic regression estimator (black).
    } 
    \label{fig:symmetric-null-lfdr-cysteine}
\end{figure}

Streamlined cysteine activity-based protein profiling (SLC-ABPP) enables proteome-wide detection of amino-acid side chains that can bind small electrophilic compounds, and holds promise for developing new drugs that inhibit specific proteins~\citep{kuljanin2021reimagining}. As one example, \citet{byun2023covalent} profile 3-bromo-4,5-dihydroisoxazole (BDHI)–functionalized compounds to find cysteines they bind. For the $j$-th candidate cysteine site 
($p=14,747$), they measure triplicate controls  $Y_{j1},Y_{j2}, Y_{j3}$ and triplicate BDHI-treated measurements $Z_{j1},Z_{j2},Z_{j3}$. They then compute the competition ratio, defined as 
$$ \mathrm{CR}_j := \frac{\mathrm{median}(Y_{j1},Y_{j2},Y_{j3})}{\mathrm{median}(Z_{j1},Z_{j2},Z_{j3})}.$$
Large values of $\mathrm{CR}_{\idx}$ provide evidence of binding.
After treatment with control (DMSO), respectively BDHI, a cysteine-reactive probe is added that labels only free cysteines. Labeled peptides are then selectively enriched and quantified by mass spectrometry. If BDHI binds a cysteine, that cysteine is no longer available to be bound by the probe, and thus the peptide is under-enriched in the treated sample.  Figure~\ref{fig:symmetric-null-lfdr-cysteine} shows the histogram of $\log \mathrm{CR}_j$ for BDHI-8 as well as the estimated clar. 
Based on the experiment, it is plausible to assume  (i) that $\log_2 \mathrm{CR}_j$ is symmetrically distributed under the null, and (ii) that large values of $\log_2 \mathrm{CR}_j$ provide evidence for the alternative. In analyzing SLC-ABPP experiments, it is common to select cysteine sites with a competition ratio above $4$ which represents $75\%$ attenuation of cysteine probe binding at that site~\citep{kuljanin2021reimagining}. This corresponds to a $\log_2$ ratio of $2$. Using our proposed logistic regression approach, we estimate that roughly 50\% of cysteine sites are false discoveries at this threshold.
The precise estimate is $\widehat{\clar}(2) = 0.506$ with 95\% bootstrap confidence interval $(0.449, 0.563)$.

\subsection{Clar estimates for HIV mutation data}

\begin{figure}[h]
    \centering
    \includegraphics[width=\linewidth]{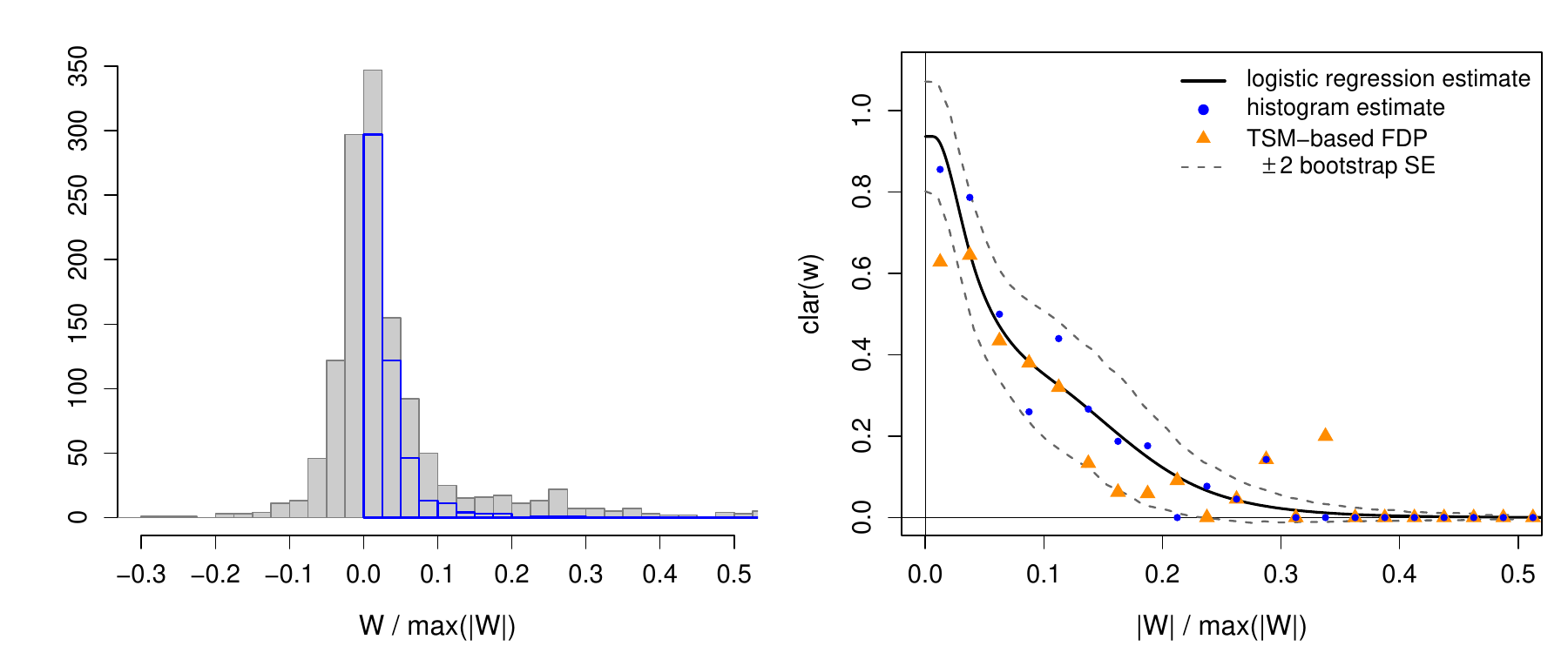}
    \caption{Left: histogram of pooled $W$-statistics after dividing by $\max_{\idx} |W_\idx|$ for each protease inhibitor drug (APV, NFV, ATV, LPV, IDV, SQV, RTV), with the negative tail mirrored onto the positive axis (blue curve). Right: pooled clar estimate across drugs after rescaling by the maximum absolute value $|W|$ within each drug class. Orange triangles show the empirical FDP within bins of roughly $0.025$ units on the normalized scale.} 
    \label{fig:pooled-clar}
\end{figure}

Human Immunodeficiency Virus type 1 (HIV-1) drug resistance poses a challenge to treatment, and understanding the genetic basis of this resistance is important for designing effective new drug therapies \citep{rhee2006genotypic}. 
For each of 16 antiretroviral drugs, the dataset contains log-fold drug resistance measurements for between 630 and 850 HIV-1 samples, and a binary design matrix indicating which mutations are present in each sample. \cite{barber2015controlling} applied the knockoff filter to identify mutations associated with drug resistance, and validated their approach using treatment-selected mutation (TSM) lists: mutations found at significantly higher frequency in patients previously treated with a drug class compared to an untreated control group \citep{rhee2005hiv}.

In this setting, $\clar(w)$ is roughly interpretable as the probability that a mutation is unrelated to drug resistance, given that its $W$-statistic is equal to $w$. 
Our method estimates clar by fitting a smooth curve using logistic regression with a natural cubic spline basis applied to a list of $W$-statistics, computed here using the fixed-X knockoff construction with Lasso-sign-max importance statistic. 
Figure \ref{fig:sign-logistic} illustrates the estimates alongside simple histogram-based estimates for the drug Nelfinavir (NFV).
The estimated clar lies above the TSM-based FDP (orange) in all but one bin for the NFV drug. We plot the fitted curves for six other protease inhibitor drugs from this dataset in Figure \ref{fig:PI-clar-estimates}.
Figure \ref{fig:pooled-clar} shows the clar estimate obtained by pooling $W$-statistics across drugs after dividing each drug's $W$-statistics by their maximum absolute value, placing $W$-statistics from different drugs on the same scale. 
The clar estimate is close to 1 for small $|w|$ and decreases continuously toward zero over the range $|w| \in (0,1/2)$.
It is a conservative proxy for the TSM-based FDP in nearly all of the bins plotted. 
Dashed lines indicate $\pm 2$ bootstrap standard errors based on 200 resamples of the pooled $W$-statistics, refitting the clar curve on each resampled dataset.

\section{Related work and concluding remarks}
\label{sec:discussion}

\paragraph{Modeling the local false discovery rate directly.} \citet{rice2008comment} and~\citet{klaus2011learning} argue that insofar as $\lfdr(w)$ is the central object of a multiple testing analysis, one should directly impose modeling assumptions on $\lfdr(w)$. If the null density $f_0$ is known, then the marginal density $f$ is fully specified by $\lfdr(w)$ via
$$
f(w) =  \left( \int \frac{f_0(w)}{\lfdr(w)} dw   \right)^{-1} \frac{f_0(w)}{\lfdr(w)},
$$
and so, e.g., a parametric model for $\lfdr(w)$ can be estimated via maximum likelihood. \citet{rice2008comment} ask: ``Must one compare the marginal $f$ to an $f_0$ which is assumed to have a specifically Gaussian form, or that of some other parametric family? Might some
advanced form of cross-validation offer a model-free approach?'' In this paper we answer affirmatively in the case wherein $f_0$ is unknown but symmetric. We also directly model $\lfdr(w)$.

We also note that~\citet{efron2001empirical} estimate $\lfdr(\cdot)$ using logistic regression. In their setting, they assume access to $B$ null observations for each $j$, i.e., to  $W_{jk}^0$, $k=1,\ldots,B$ such that $W_{jk}^0$ is equal in distribution to $W_{j}$ for null $j$. Binary classification is then applied to the pairs $(W_j, 1),\, j=1,\ldots,p$ and $(W_{jk}^0,-1),\, j=1,\ldots,p$, $k=1,\ldots,B$, where $\pm 1$ are the labels.

\paragraph{Local false discovery rate estimation under null symmetry.} The idea of modifying knockoff-type procedures to yield an estimate of the local false discovery rate is not new, though prior work has treated it only tangentially. In their discussion,~\citet[Section 6.4]{lei2018adapt} briefly explain how and why one could modify their AdaPT procedure to yield estimates of the local false discovery rate. Meanwhile,~\citet{sesia2020multiresolution} use knockoffs to localize causal variants in genome-wide association studies. In their Supplementary note 5, titled ``Assessing individual significance'',~\citet{sesia2020multiresolution} effectively propose a variant of our clar methodology using binning (i.e., using histogram estimates of the densities). Under two-dimensional null symmetry, \citet{wang2025adaptive} estimate a two-dimensional local false discovery rate using a construction closely related to ours. However, they do not study the 2d-local false discovery rate as an object of interest in its own right; rather, they use it to construct more powerful feature importance statistics for tail FDR control. To our knowledge, the present work provides the first systematic investigation of local false discovery rate estimation under null symmetry.

\paragraph{Frequentist interpretation.} Throughout the paper we have assumed the data are generated from a two-groups model, in which the null statuses $H_1,\dots,H_p$ are identically distributed Bernoulli random variables, and for which the lfdr is defined as the posterior null probability for each null hypothesis. 
In a strictly frequentist model, where the null statuses are fixed and unknown, the lfdr is not meaningful in its original definition, but one can still define a `frequentist' lfdr at $w$ as the conditional probability, given that \textit{some} $W$-statistic is equal to $w$, that the null hypothesis associated with that $W$-statistic is true.
In \cite*{xiang2026frequentist}, this is shown to take the simple expression 
\begin{align*}
    \lfdr_{\text{freq}}(w) := \P(H_J=0 \mid W_J=w \text{ for some }J)=\frac{\sum_{\idx: H_\idx=0} f^{(\idx)}(w)}{\sum_{\idx=1}^p f^{(\idx)}(w)},
\end{align*}
where $f^{(\idx)}$ is the probability density function of $W_\idx$, assumed to be symmetric if $H_\idx=0$.  
From this standpoint, our methodology can be understood as targeting the upper bound under null symmetry, 
\begin{align*}
    \lfdr_{\text{freq}}(w) = \frac{\sum_{\idx:H_\idx=0} f^{(\idx)}(-w)}{\sum_{\idx=1}^p f^{\idx}(w)} \leq \frac{\sum_{\idx=1}^p f^{(\idx)}(-w)}{\sum_{\idx=1}^p f^{(\idx)}(w)} =: \clar_{\text{freq}}(w).
\end{align*}
In this setting, the key regularity condition that enables clar estimation is concentration of the empirical cdf of the $W$-statistics $F_p(w)$ around its expectation, i.e.~the average cdf $\bar{F}(w) = \frac{1}{p}\sum_{\idx=1}^p F^{(\idx)}(w)$.
Under regularity conditions analogous to those in Theorem \ref{thm:asymptotic-bFDR}, consistent estimation of $\clar_{\text{freq}}$ guarantees asymptotic control of FDR at the boundary of the rejection set of the associated thresholding procedure.

\paragraph{Reproducibility:} We provide code to reproduce all figures at the following Github repository: 
\begin{center}
    \texttt{\href{https://github.com/dan-xiang/clar-for-regression}{https://github.com/dan-xiang/clar-for-regression}}
\end{center}

\bibliographystyle{dcu}
\bibliography{reference}

\newpage

\appendix

\section{Natural cubic splines logistic regression}
\label{sec:spline-implementation}

This appendix describes implementation details of the estimator introduced in Section \ref{sec:clar-estimation}. First, some intuition for the method: in practical settings, $\clar(w)$ levels off near 1 as $w$ approaches 0 because the marginal density is typically dominated by the null component near zero. 
We expect a continuous transition of $\clar(w)$ towards zero as $w$ grows, signifying that the alternative component starts becoming more prominent for larger (positive) values of $w$. 
We propose isotonic regression as an initial step for estimating where the clar function first dips appreciably below 1, which informs knot placement for the cubic spline basis:
we place a knot at the point $w_{0.9}$ where this initial isotonic estimate of clar first drops below 0.9, to mark the ``shoulder'' of the change. 
We also place knots to the right of this shoulder, extending out to $w_{0.1}$, the point where the initial estimate first drops below $0.1$. This knot placement scheme is illustrated with dashed and dotted lines in Figure \ref{fig:sign-logistic} (right panel).
Specifically, our method proceeds as follows.

\begin{enumerate}
    \item \textbf{PAVA initialization:} Run isotonic regression\footnote{Isotonic regression operates on a sequence of observations $(x_1,y_1),\dots,(x_n,y_n)$, that minimizes the residual sum of squares subject to a monotonicity constraint, for example, $x_i \leq x_j \Rightarrow \hat{y}^{\text{iso}}_i \geq \hat{y}^{\text{iso}}_j$. This is computed via the pool adjacent violators algorithm (PAVA). Further computational details about the Grenander initialization step can be found in Section~\ref{sec:numerical}.} on the sequence of indicator variables $\mathbf{1}\{W_\idx<0\}$ ordered according to the absolute values $|W|_{(1)}\geq \dots \geq |W|_{(p)}$. Convert the output into an estimate of $\clar$ for each hypothesis $H_\idx$ by computing the odds ratio for the estimated probability $\P(W_\idx<0 \mid |W_\idx|)$. 
    \item \textbf{Internal knot placement:} Put knots at values along the $w$-axis corresponding to where the isotonic estimate of clar first drops below $0.9, 0.7,0.3$, and $0.1$.
    \item \textbf{Boundary knots:} Set the left boundary knot to be $\max\{2w_{0.9}-w_{0.7},\frac{1}{2}w_{0.9}\}$, while enforcing the slope to the left of this value to be zero. 
    Set the right boundary knot to be $2w_{0.1}-w_{0.3}$. 
    \item \textbf{Remove redundant knots:} Delete internal knots until there are at least 30 data points between all adjacent pairs of knots.
    \item \textbf{Fit logistic regression:} Regress $1\{W_\idx<0\}$ onto the modified natural cubic spline basis evaluated at $|W_\idx|$, and set $\widehat{\clar}(w) = \hat{p}(w)/(1-\hat{p}(w))$, where $\hat{p}(w) = \widehat{\P}(W_\idx<0 \mid |W_\idx|=w)$ is the fitted conditional probability.
\end{enumerate}

Next, we explain the choice of boundary knots, which are chosen outside of the estimated transition region $[w_{0.9},w_{0.1}]$ while enforcing linear extrapolation on the right, and zero slope on the left. The left boundary knot is placed to the left of $w_{0.9}$ by half the gap between $w_{0.9}$ and $w_{0.7}$. Then we take a maximum with $\frac{1}{2}w_{0.9}$ to ensure this doesn't fall below zero. Similarly, the right boundary knot is placed to the right of $w_{0.1}$ by half the gap between $w_{0.3}$ and $w_{0.1}$.

The condition $\clar(0)=1$ is implied by $f$ being continuous at zero. A natural way to enforce this is to fit the logistic regression without intercept (see expression \eqref{eq:logistic}). In principle, one could have $\clar(0) > 1$ if the alternative component places more mass to the left of zero than the right near the origin, which would indicate a violation of the positive enrichment assumption. If there are exact zeros, which may occur for Lasso-based $W$-statistics, then we remove these and conduct the analysis conditional on $W_\idx\neq 0$.

\section{More details about interpretations of $\clar(w)$}
\label{sec:suppl_clar}
In this appendix, we fill in the details for the results previewed in the last paragraph of Section~\ref{subsec:surrogate}.
Recall that each observation is distributed according to the marginal density $f$ in~\eqref{eq:marginal} and that we posited that all of $\pi_0$, $f_0$ and $f_1$ are unknown, except that $f_0$ was assumed symmetric. 
However, any triple in
$$ \mathcal{S}(f) := \left\{ (\rho_0, h_0, h_1)\,:\, f(\cdot) = \rho_0 h_0 (\cdot) +(1-\rho_0)h_1(\cdot),\; \rho_0 \in [0,1],\; h_0 \text{ symmetric density},\;h_1 \text{ density} \right\}
$$
by definition gives the same marginal $f$, so 
we cannot distinguish among them based on the data. This also makes $\lfdr(w)$ non-identifiable.
On the other hand, we could define 
$$
\bar{\rho}_0 \equiv \bar{\rho}_0(f) := \sup\left\{ \rho_0\,:\, (\rho_0, h_0, h_1) \in \mathcal{S}(f)   \right\}, 
$$
which, unlike $\pi_0$, is identifiable.
\citet[Theorem 1]{arias2021extending} solve the above variational problem and show that the supremum is attained at $\bar{\rho}_0 = \int f(w) \wedge f(-w) d w$ which 
corresponds to the representation with maximally symmetric null component,\footnote{Below, we do not handle the cases $\bar{\rho}_0=0$ ($f$ is  supported on $[0,\infty)$ or $(-\infty,0]$) and $\bar{\rho}_0=1$ ($f$ is symmetric).}
\begin{equation}
f(w) = \bar{\rho}_0 \bar{h}_0(w) + (1-\bar{\rho}_0) \bar{h}_1(w),\qquad 
\bar{h}_0(w) := \frac{f(w) \wedge f(-w)}{\bar{\rho}_0} ,\  \bar{h}_1(w) := \frac{f(w)-\bar{\rho}_0 \bar{h}_0(w)}{1-\bar{\rho}_0}.
\label{eq:extremal_two_groups}
\end{equation}
Now recall that in the two-groups model in~\eqref{eq:marginal} we used the notation $H_\idx = 0$ or $H_\idx=1$ 
to denote group membership to the null component $f_0$ or the non-null component $f_1$, respectively. 
In analogy, we introduce a random \textit{activity} indicator $\mathcal{A}_\idx$, such that $\mathcal{A}_\idx = 0$ or $\mathcal{A}_\idx = 1$ 
denotes group membership to $g_0$ or $g_1$, respectively, in the specific two-groups decomposition \eqref{eq:extremal_two_groups}. 
It then holds that 
$$
\mathbb P(\mathcal{A}_\idx=0 \mid W_\idx=w) = \frac{\bar{\rho}_0 \bar{h}_0(w)}{f(w)} = \frac{ f(-w)\wedge f(w)}{f(w)} \leq \frac{f(-w)}{f(w)} = \clar(w),
$$
and, moreover, the inequality above is tight as long as $f(-w) \leq f(w)$. 
This is to say that, assuming $f(-w) \leq f(w)$, the clar  {\em coincides} with the $\lfdr$ for the two-groups model in~\eqref{eq:extremal_two_groups} with maximally symmetric null component $\bar{h}_0$.

\begin{remark}
\label{remark:identifiability}
The `activity rate' terminology originates from \cite{mccullagh2018statistical}, who studied the effect of signal sparsity on the marginal density in the signal-plus-Gaussian noise model, and defined the `active' component as the maximally non-$N(0,1)$ part of the marginal. 
In \cite*{xiang2024interpretation}, the complementary local activity rate $(\clar)$ is defined as the complement of the conditional probability that the observation arose from the `active' component, given the value of its test statistic.
By analogy, the `active' component in the current setting is the maximally non-symmetric part of the marginal density, arising in the form of $g_1$ in \eqref{eq:extremal_two_groups}. 
\end{remark}

Above we formally introduced the activity indicator $\mathcal{A}_\idx$. However, one may wonder if $\mathcal{A}_\idx$ can be defined in a way that yields a meaningful probabilistic relationship to the underlying testing problem, i.e., to the non-null indicator $H_\idx$. The following proposition answers this affirmatively: we show that there exists a joint distribution (a coupling) on the triple $(\mathcal{A}_\idx,W_\idx,H_\idx)$ with the correct two-way marginals. Moreover, this joint distribution is such that $H_\idx=0$ always implies $\mathcal{A}_\idx=0$ (that is, null signals are always inactive); but a non-null signal $(H_\idx=1)$ can still be inactive $(\mathcal{A}_\idx=0)$ if its $W$-statistic is small or negative, with probability decreasing in $w$ under the assumption that $f_1(-w) \leq f_1(w)$ for $w>0$ and $f_1(-w)/f_1(w)$ is monotone decreasing.
\begin{proposition}
\label{prop:consistent-triple}
    Suppose $f_0$ is symmetric, and consider the probability distribution on the triple $(\mathcal{A},W,H) \in \{0,1\} \times \R \times \{0,1\}$ defined by the following joint density
\begin{align}
\label{eq:consistent-triple}
    f^*(a,w,h) = \begin{cases}
        \pi_0 f_0(w) \hspace{1em} &\text{if } h=0,a=0 \\
        0 &\text{if } h=0,a=1 \\
        (1-\pi_0) \min\{f_1(-w), f_1(w)\} &\text{if }h=1,a=0 \\
        (1-\pi_0) (f_1(w) - \min\{f_1(-w), f_1(w)\}) &\text{if }h=1,a=1.
    \end{cases} 
\end{align}
Then $f^*$ defines a valid probability density whose two-way marginals for $(H,W)$ and $(\mathcal{A},W)$ are consistent with the two-groups model \eqref{eq:two-group} and the extremal symmetric decomposition \eqref{eq:extremal_two_groups}, respectively. Furthermore, if $f_1(w) \geq f_1(-w)$ for all $w > 0$, then $\clar(w) = \P(\mathcal{A}=0 \mid W=w)$.
\end{proposition}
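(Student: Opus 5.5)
Validity is a direct check on the four cases of \eqref{eq:consistent-triple}. Each entry is nonnegative, since $\min\{f_1(-w),f_1(w)\}\le f_1(w)$. Summing over $a\in\{0,1\}$ for fixed $h$ gives $\pi_0 f_0(w)$ when $h=0$ and $(1-\pi_0)f_1(w)$ when $h=1$. Integrating over $w$ and summing over $h$ then gives $\pi_0+(1-\pi_0)=1$, so $f^*$ is a valid density. The same computation already shows that the $(H,W)$ marginal equals $\pi_0 f_0(w)\mathbf{1}\{h=0\}+(1-\pi_0)f_1(w)\mathbf{1}\{h=1\}$, which is exactly the two-groups model \eqref{eq:two-group}.

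For the $(\mathcal{A},W)$ marginal I sum over $h$. When $a=1$ this gives $(1-\pi_0)\big(f_1(w)-\min\{f_1(-w),f_1(w)\}\big)$. When $a=0$ it gives $g(w):=\pi_0 f_0(w)+(1-\pi_0)\min\{f_1(-w),f_1(w)\}$. I need to compare these with $\bar\rho_0\bar h_0(w)=f(w)\wedge f(-w)$ and $(1-\bar\rho_0)\bar h_1(w)=f(w)-f(w)\wedge f(-w)$ from \eqref{eq:extremal_two_groups}. Using the symmetry of $f_0$, I write
\[
f(w)\wedge f(-w)=\pi_0 f_0(w)+(1-\pi_0)\min\{f_1(w),f_1(-w)\}=g(w),
\]
because adding the common term $\pi_0 f_0(w)=\pi_0 f_0(-w)$ to both arguments commutes with taking the minimum. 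This is the key (and only nontrivial) step. Once the $a=0$ marginal matches, the $a=1$ marginal matches automatically, since both pairs sum to $f(w)$. Integrating $g$ shows that the mass of $\{\mathcal{A}=0\}$ is $\int f\wedge f(-\cdot)=\bar\rho_0$, consistent with \citet[Theorem 1]{arias2021extending}.

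For the final claim, suppose $f_1(w)\ge f_1(-w)$ for all $w>0$ and fix $w>0$. Then $f(w)-f(-w)=(1-\pi_0)(f_1(w)-f_1(-w))\ge 0$ by the symmetry of $f_0$, so $f(w)\wedge f(-w)=f(-w)$. Hence
\[
\P(\mathcal{A}=0\mid W=w)=\frac{g(w)}{f(w)}=\frac{f(-w)}{f(w)}=\clar(w).
\]
The main obstacle is only the pointwise min identity in the previous paragraph, and symmetry of $f_0$ handles it immediately. The rest is bookkeeping, except for noting that $f(w)>0$ is needed for the conditional probability to be defined.
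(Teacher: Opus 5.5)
Your proof is correct and follows essentially the same route as the paper's proof. Summing over $a$ gives nonnegativity, validity and the $(H,W)$ marginal; the symmetry $f_0(-w)=f_0(w)$ lets you pull $\pi_0 f_0(w)$ out of the minimum, which identifies the $a=0$ marginal with $f(w)\wedge f(-w)$, and the $a=1$ marginal then matches because both sum to $f(w)$; finally, $f_1(w)\ge f_1(-w)$ gives $f(-w)\le f(w)$, hence $\P(\mathcal{A}=0\mid W=w)=\clar(w)$. Your side remarks, that $\{\mathcal{A}=0\}$ has mass $\bar\rho_0$ and that $f(w)>0$ is needed for the conditional probability, are harmless refinements that the paper leaves implicit.
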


Since both two-group decompositions in \eqref{eq:marginal} and~\eqref{eq:extremal_two_groups} are compatible with the marginal density $f$, the joint distribution of $(H_\idx,W_\idx)$ cannot be recovered without further assumptions. Some specific possibilities, such as assuming a known or parametric null component, are discussed in Section \ref{sec:discussion}.
The clar sidesteps the issue, depending only on the marginal density $f(w)$ of the observations, and serves as the natural identifiable target of inference.

We close with a final observation about the inequality in~\eqref{eq:clar-lfdr}. Above, we established its sharpness, by which we mean that~\eqref{eq:clar-lfdr} holds with equality at $(\bar{\rho}_0, \bar{h}_0, \bar{h}_1)$, which lies within the class of partially identified two-groups models $\mathcal{S}(f)$. This property and the identifiability of $\clar$ provide strong rationale for estimating it. Nevertheless, one could also ask whether there are any conditions under which the inequality in~\eqref{eq:clar-lfdr} is nearly tight for the true (but unidentified) two-groups triple $(\pi_0, f_0, f_1) \in \mathcal{S}(f)$. As one example, in the right panel of Figure~\ref{fig:twogroups} we see that $\clar(w) = \lfdr(w)$ for all $w \geq 3$. The reason is that $f_1$ is such that $f_1(w) = 0$ for all $w \leq -3$. This observation generalizes in the following way:
the bound $\lfdr(w) \leq \clar(w)$ is asymptotically tight if $f_1(-w)/f_0(w) \to 0$ as $w \to \infty$, since then $\lfdr(w) / \clar(w) \to 1$. To see this, note by symmetry of $f_0$ that the marginal density at $-w<0$ satisfies
\begin{align*}
    f(-w) = \pi_0 f_0(w) + (1-\pi_0) f_1(-w) = \pi_0 f_0(w) (1+o(1)),
\end{align*}
where $o(1)$ is a term tending to 0 as $w \to \infty$. This condition holds, for instance, in one-sided location testing problems where the noise distribution is sub-Gaussian, but fails for heavier-tailed noise distributions such as Laplace or Student-$t$, as detailed in Proposition \ref{prop:verify-tails} below, which we prove in Section \ref{sec:proofs}. 

\begin{proposition}
\label{prop:verify-tails}
    Let $f_1(w) = f_0(w-\mu)$ for some $\mu>0$, where $f_0$ is a symmetric density on $\mathbb{R}$. Consider three cases.
    \begin{enumerate}
        \item Gaussian tails: If $f_0(w) = \frac{1}{\sqrt{2\pi \sigma^2}}\exp\{-w^2/(2\sigma^2)\}$ for some $\sigma>0$, then $\lim_{w \to \infty}\frac{\lfdr(w)}{\clar(w)} = 1$.
        \item Laplace tails: If $f_0(w) = \frac{1}{2b} \exp\{-|w|/b\}$ for some $b>0$, then 
        \begin{align*}
            \lim_{w \to \infty} \frac{\lfdr(w)}{\clar(w)} = \Big(1 + \frac{1-\pi_0}{\pi_0}\; e^{-\mu/b} \Big)^{-1} \geq \pi_0,
        \end{align*}
        with strict inequality if $\pi_0 < 1$.
        \item Student-$t$ tails: If $f_0(w) \propto \Big(1+\frac{w^2}{\nu}\Big)^{-\frac{1+\nu}{2}}$ for some $\nu > 0$, then
        \begin{align*}
            \lim_{w \to \infty} \frac{\lfdr(w)}{\clar(w)} = \pi_0.
        \end{align*}
    \end{enumerate}
\end{proposition}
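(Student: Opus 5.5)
We first reduce the ratio to a single tail quantity. With $f = \pi_0 f_0 + (1-\pi_0) f_1$, $f_1(w) = f_0(w-\mu)$, and $f_0$ symmetric, we have $\lfdr(w) = \pi_0 f_0(w)/f(w)$ and $\clar(w) = f(-w)/f(w)$. Moreover $f(-w) = \pi_0 f_0(w) + (1-\pi_0) f_0(w+\mu)$, because $f_1(-w) = f_0(-w-\mu) = f_0(w+\mu)$. Hence
\begin{align*}
\frac{\lfdr(w)}{\clar(w)} = \frac{\pi_0 f_0(w)}{f(-w)} = \Big(1 + \frac{1-\pi_0}{\pi_0}\, \frac{f_0(w+\mu)}{f_0(w)}\Big)^{-1}.
\end{align*}
This is the identity behind the general remark preceding the proposition. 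The whole proof then reduces to computing $L := \lim_{w\to\infty} f_0(w+\mu)/f_0(w)$ in each case, and the answer is $(1 + \tfrac{1-\pi_0}{\pi_0} L)^{-1}$.

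\textbf{Gaussian case.} Here $f_0(w+\mu)/f_0(w) = \exp\{-(2w\mu + \mu^2)/(2\sigma^2)\}$, which tends to $0$ because $\mu>0$. So $L=0$ and the limit is $1$.

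\textbf{Laplace case.} For $w>0$ we have $f_0(w+\mu)/f_0(w) = e^{-\mu/b}$ exactly, so $L = e^{-\mu/b}$, giving the stated limit. For the inequality, note $e^{-\mu/b} \le 1$, hence $1 + \tfrac{1-\pi_0}{\pi_0}e^{-\mu/b} \le 1/\pi_0$, and the limit is $\ge \pi_0$. If $\pi_0<1$, the strict bound $e^{-\mu/b}<1$ (using $\mu>0$) makes the inequality strict.

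\textbf{Student-$t$ case.} The ratio is
\begin{align*}
\frac{f_0(w+\mu)}{f_0(w)} = \Big(\frac{\nu + w^2}{\nu + (w+\mu)^2}\Big)^{(1+\nu)/2},
\end{align*}
and this tends to $1$ as $w\to\infty$, since both quadratics are asymptotic to $w^2$. So $L=1$ and the limit is $(1 + (1-\pi_0)/\pi_0)^{-1} = \pi_0$.

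No step is a real obstacle. The only point needing care is the initial reduction, namely using symmetry of $f_0$ correctly to write $f_1(-w) = f_0(w+\mu)$. After that, each case is a one-line limit; the only subtlety is where $\mu>0$ enters, which is for strictness in the Laplace case and for $L=0$ in the Gaussian case.
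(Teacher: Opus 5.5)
Your proposal is correct and takes the same route as the paper's proof. Both reduce to $\lfdr(w)/\clar(w) = \bigl(1 + \frac{1-\pi_0}{\pi_0}\, f_1(-w)/f_0(w)\bigr)^{-1}$ with $f_1(-w) = f_0(w+\mu)$, then compute the limit of that tail ratio in each of the three cases. You also check the bound $\geq \pi_0$ (strict when $\pi_0<1$) in the Laplace case, which the paper states but does not argue in its proof.
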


\section{Proofs of formal results}
\label{sec:proofs}

Before stating Proposition \ref{prop:stone-w}, we record some motivation for the polynomial basis as the featurization in the logistic regression; we don't recommend this choice in practice, but we find the conceptual point still useful.

Recall from \eqref{eq:odds-relation} that the conditional odds $\P(W_\idx < 0 \mid |W_\idx|=w) / (1-\P(W_\idx<0 \mid |W_\idx|=w))$ equals $\clar(w)$. 
A natural way to model these odds is to define a covariate vector $Z_\idx = h(|W_\idx|)$ through some prespecified feature map $h:\R_+ \to \R^J$, and use a logistic regression model with outcomes $1\{W_{\idx}<0\}$ and features $Z_\idx$. 
For example, if we take a polynomial basis,  $h_k(w) = w^k$, the model is 
\begin{align}
\label{eq:logistic}
    \clar(w) = \frac{\P(W_\idx < 0 \mid |W_\idx|=w)}{1-\P(W_\idx < 0 \mid |W_\idx|=w)} = \exp\Big\{ \sum_{k=0}^J \gamma_k w^k \Big\}.
\end{align}
As $J$ grows, this form can approximate any positive continuous function on a compact set, as recorded in Proposition \ref{prop:stone-w} below. In practice, polynomial bases of moderate degree extrapolate poorly beyond the range of the data, which is why our implementation uses a natural cubic spline basis instead (see details in Appendix~\ref{sec:spline-implementation}).

\begin{proposition}
\label{prop:stone-w}
    Suppose $\clar(w)$ is a continuous function of $w$. For any compact set $E \subset \R$ and $\varepsilon > 0$, there exist real numbers $\gamma_1,\dots,\gamma_J$ for which
    \begin{align*}
        \sup_{w \in E} \Big| \; \clar(w) - \exp\Big\{\sum_{k=0}^J \gamma_k w^k \Big\} \; \Big| < \varepsilon.
    \end{align*}
\end{proposition}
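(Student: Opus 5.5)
The plan is to take logarithms and apply the Weierstrass approximation theorem. The statement only makes sense where $\clar$ is defined, finite and positive, so I would read it as asserting approximation on compact $E$ on which $\clar$ is continuous and strictly positive. Positivity follows, for instance, when $f$ is positive, as in Theorem~\ref{thm:consistency-iso}. On such a set, $g(w) := \log \clar(w)$ is a continuous real-valued function on the compact set $E$. By continuity and compactness there are constants $0 < m \le \clar(w) \le M < \infty$ on $E$, so $g$ is bounded, with $|g| \le \max\{|\log m|, |\log M|\}$. (If $E$ were allowed to meet points where $\clar = 0$, uniform approximation by strictly positive functions would still be possible by truncating $\log$ at a very negative level. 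I would remark on this but treat the positive case as the main argument.)

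The first step approximates $g$ by a polynomial. By the Weierstrass theorem, which holds on any compact subset of $\R$ after embedding it in a bounded interval and extending $g$ continuously by Tietze, or directly by Stone--Weierstrass on $E$, for any $\eta>0$ there exist $J$ and coefficients $\gamma_0,\dots,\gamma_J$ such that $\sup_{w\in E}\bigl|g(w) - \sum_{k=0}^J \gamma_k w^k\bigr| < \eta$.

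The second step transfers the bound through the exponential. Write $P(w)=\sum_k \gamma_k w^k$. Then
\[
|\clar(w) - e^{P(w)}| = \clar(w)\,|1 - e^{P(w)-g(w)}| \le M (e^{\eta}-1).
\]
Choosing $\eta$ with $M(e^\eta - 1) < \varepsilon$ finishes the argument.

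The only delicate point is the domain issue in the opening step: ensuring that $\log\clar$ is finite and continuous on $E$. Once we commit to positivity on $E$, which the paper's standing assumptions on $f$ provide, the rest is routine. The intercept $\gamma_0$ should be included in the list of coefficients, consistent with the sum starting at $k=0$.
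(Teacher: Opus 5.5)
Your proposal is correct and follows the paper's argument: approximate $\log\clar$ uniformly on $E$ by a polynomial via (Stone--)Weierstrass, then pass back through the exponential. You are more careful than the paper in two places: you make explicit that $\clar$ must be positive on $E$ for $\log\clar$ to be finite and continuous, and you quantify the last step with the bound $M(e^{\eta}-1)$ instead of appealing only to continuity of $\exp$.
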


\begin{proof}[Proof of Proposition \ref{prop:stone-w}]
    If the odds ratio $\clar(w)$ is continuous in $w$, then $\log \clar(w)$ is as well. It follows from the \href{https://en.wikipedia.org/wiki/Stone%E2%80%93Weierstrass_theorem}{Stone-Weierstrass Theorem} that the log odds can be uniformly approximated on $E$ by a polynomial function $p(w)$. Since $x \mapsto e^x$ is a continuous map, the exponential $\exp\{p(w)\}$ can be made arbitrarily close to $\clar(w)$, uniformly over $E$.
\end{proof}

\paragraph{Proposition \ref{prop:consistent-triple}.}
    Suppose $f_0$ is symmetric, and consider the probability distribution on the triple $(\mathcal{A},W,H) \in \{0,1\} \times \R \times \{0,1\}$ defined by the following joint density
\begin{align}
    f^*(a,w,h) = \begin{cases}
        \pi_0 f_0(w) \hspace{1em} &\text{if } h=0,a=0 \\
        0 &\text{if } h=0,a=1 \\
        (1-\pi_0) \min\{f_1(-w), f_1(w)\} &\text{if }h=1,a=0 \\
        (1-\pi_0) (f_1(w) - \min\{f_1(-w), f_1(w)\}) &\text{if }h=1,a=1.
    \end{cases} 
\end{align}
Then $f^*$ defines a valid probability density whose two-way marginals for $(H,W)$ and $(\mathcal{A},W)$ are consistent with the two-groups model \eqref{eq:two-group} and the symmetric decomposition \eqref{eq:extremal_two_groups}, respectively. Furthermore, if $f_1(w) \geq f_1(-w)$ for any $w > 0$, then $\clar(w) = \P(\mathcal{A}=0 \mid W=w)$.

\begin{proof}[Proof of Proposition \ref{prop:consistent-triple}]
     Note that $f^* \geq 0$, and summing over $a,h \in \{0,1\}$ gives $\pi_0 f_0(w) + (1-\pi_0) f_1(w) = f(w)$, which integrates to 1. It is also evident that the joint distribution of $(H,W)$ is given by the marginals $f^*(0,w,0)+ f^*(1,w,0)=\pi_0 f_0(w)$ and $f^*(0,w,1)+f^*(1,w,1)=(1-\pi_0) f_1(w)$. All that remains is to check that the two-way marginal of \eqref{eq:consistent-triple} for $(\mathcal{A},H)$ matches the symmetric decomposition. For $a=0$, we have
    \begin{align*}
        f^*(0,w,0)+f^*(0,w,1)=\pi_0 f_0(w) + (1-\pi_0)\min\{f_1(-w), f_1(w)\} .
    \end{align*}
    By symmetry of $f_0$, the above is equal to
    \begin{align*}
         \min\{\pi_0 f_0(-w) + (1-\pi_0) f_1(-w),\pi_0 f_0(w) + (1-\pi_0) f_1(w)\} = \min\{f(w) ,f(-w)\} =: \bar{\rho}_0\bar{h}_0(w).
    \end{align*}
    The marginal for $a=1$ follows from the relationship in \eqref{eq:extremal_two_groups}.
    It follows that the two-way marginal for $(\mathcal{A},W)$ in the joint density \eqref{eq:consistent-triple} matches the symmetric two-group model \eqref{eq:extremal_two_groups}. Furthermore, symmetry of $f_0$ and the condition $f_1(w) \geq f_1(-w)$ for $w>0$ imply $f(-w) \leq f(w)$ so that
    \begin{align*}
        \clar(w) = \frac{\min\{f(-w),f(w)\}}{f(w)}=\frac{\bar{\rho}_0 \bar{h}_0(w)}{f(w)} = \P(\mathcal{A}=0\mid W=w)
    \end{align*}
    in the extended model.    
\end{proof}

\paragraph{Proposition \ref{prop:verify-tails}.}
    Let $f_1(w) = f_0(w-\mu)$ for some $\mu>0$, where $f_0$ is a symmetric density on $\mathbb{R}$. Consider three cases.
    \begin{enumerate}
        \item Gaussian tails: If $f_0(w) = \frac{1}{\sqrt{2\pi \sigma^2}}\exp\{-w^2/(2\sigma^2)\}$ for some $\sigma>0$, then $\lim_{w \to \infty}\frac{\lfdr(w)}{\clar(w)} = 1$.
        \item Laplace tails: If $f_0(w) = \frac{1}{2b} \exp\{-|w|/b\}$ for some $b>0$, then 
        \begin{align*}
            \lim_{w \to \infty} \frac{\lfdr(w)}{\clar(w)} = \Big(1 + \frac{1-\pi_0}{\pi_0}\; e^{-\mu/b} \Big)^{-1}
        \end{align*}
        \item Student-$t$ tails: If $f_0(w) \propto \Big(1+\frac{w^2}{\nu}\Big)^{-\frac{1+\nu}{2}}$ for some $\nu > 0$, then
        \begin{align*}
            \lim_{w \to \infty} \frac{\lfdr(w)}{\clar(w)} = \pi_0.
        \end{align*}
    \end{enumerate}

\begin{proof}[Proof of Proposition \ref{prop:verify-tails}]
In the case of Gaussian tails, we have
    \begin{align*}
            \frac{f_1(-w)}{f_0(w)} &= \exp\Big\{ -\frac{1}{2\sigma^2}( (-w-\mu)^2 - w^2) \Big\} \\
            &= \exp\Big\{ -\frac{1}{2\sigma^2}( 2w\mu + \mu^2) \Big\} = \exp\Big\{-\frac{w\mu}{\sigma^2} - \frac{\mu^2}{2\sigma^2} \Big\} \to 0,
        \end{align*}
        as $w \to \infty$.
In the case of Laplace tails, we have
\begin{align*}
    \frac{f_1(-w)}{f_0(w)} = \exp\Big\{-\frac{1}{b} (|-w-\mu|-|w|) \Big\} \to \exp\Big\{-\frac{\mu}{b} \Big\},
\end{align*}
so that
\begin{align*}
    \frac{\lfdr(w)}{\clar(w)} = \frac{\pi_0 f_0(w)}{f(-w)} = \frac{1}{1 + \frac{1-\pi_0}{\pi_0} \frac{f_1(-w)}{f_0(w)}} \to \frac{1}{1+\frac{1-\pi_0}{\pi_0}\exp\Big\{-\frac{\mu}{b} \Big\}},
\end{align*}
as $w \to \infty$. In the case of Student-$t$ tails with $\nu$ degrees of freedom, we have
\begin{align*}
\frac{f_1(-w)}{f_0(w)} = \Big(\frac{1+w^2/\nu}{1+(-w-\mu)^2/\nu}\Big)^{\frac{1+\nu}{2}} \to 1,
\end{align*}
which implies
\begin{align*}
    \frac{\lfdr(w)}{\clar(w)} = \frac{\pi_0 f_0(w)}{f(-w)} = \frac{1}{1+\frac{1-\pi_0}{\pi_0}\frac{f_1(-w)}{f_0(w)}} \to \Big(1+\frac{1-\pi_0}{\pi_0} \Big)^{-1}=\pi_0,
\end{align*}
as $w\to\infty$.
\end{proof}

\begin{lemma}
\label{lem:vandervaart}
    (\cite{van2000asymptotic}, Lemma 2.11) Suppose that $X_n$ converges weakly to $X$ for a random variable $X$ with a continuous distribution function. Then $\sup_{x\in\mathbb{R}}|P(X_n\leq x) - P(X\leq x)| \to 0$.
\end{lemma}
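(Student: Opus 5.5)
This is the classical P\'olya-type argument, and the plan is to prove it directly from weak convergence and continuity of the limiting distribution function $F_X(x)=P(X\le x)$. Write $F_n(x)=P(X_n\le x)$. Weak convergence gives $F_n(x)\to F_X(x)$ at every continuity point of $F_X$. Since $F_X$ is continuous, this means pointwise convergence at every $x\in\mathbb{R}$. The task is to upgrade this pointwise convergence to uniform convergence, using monotonicity of $F_n$ and $F_X$ together with the fact that $F_X$ is bounded and tends to $0$ and $1$ at $\mp\infty$.

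Fix $\varepsilon>0$ and choose an integer $k>1/\varepsilon$. By continuity and the limits at $\pm\infty$, the intermediate value theorem yields points $-\infty=x_0<x_1<\dots<x_{k-1}<x_k=\infty$ with $F_X(x_i)=i/k$ for $1\le i\le k-1$. Take the convention $F_n(\pm\infty)=F_X(\pm\infty)$, so the endpoint values are $0$ and $1$. Then $F_X(x_{i})-F_X(x_{i-1})\le 1/k<\varepsilon$ for each $i$. By pointwise convergence at the finitely many interior points, there is $N$ such that $\max_i|F_n(x_i)-F_X(x_i)|<\varepsilon$ for all $n\ge N$.

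For any $x$, pick $i$ with $x_{i-1}\le x\le x_i$. Monotonicity of both functions gives
\[
F_n(x)-F_X(x)\le F_n(x_i)-F_X(x_{i-1})\le F_n(x_i)-F_X(x_i)+\varepsilon<2\varepsilon,
\]
and symmetrically
\[
F_n(x)-F_X(x)\ge F_n(x_{i-1})-F_X(x_i)>-2\varepsilon.
\]
Hence $\sup_x|F_n(x)-F_X(x)|\le 2\varepsilon$ for $n\ge N$. Since $\varepsilon$ was arbitrary, the supremum tends to $0$.

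The only delicate step is the construction of the grid. It requires continuity of $F_X$, so that the levels $i/k$ are actually attained, and it requires treating the two unbounded end intervals. For the end intervals, the bound uses $F_n\ge 0$ on $(-\infty,x_1]$ and $F_n\le 1$ on $[x_{k-1},\infty)$ in place of pointwise convergence at $\pm\infty$. With these conventions, the remaining steps are routine.
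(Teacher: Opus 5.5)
Your proof is correct and is essentially the standard P\'olya argument: the paper gives no proof of its own and simply cites van der Vaart's Lemma~2.11, whose one-dimensional proof uses the same quantile grid $F(x_i)=i/k$ and the same monotonicity sandwich. Your handling of the grid construction and of the two unbounded end intervals is sound.
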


\begin{lemma}
\label{lem:pointwise-uniform}
    Let $F_p(w) = \frac{1}{p}\sum_{\idx=1}^p 1\{W_\idx\leq w\}$ denote the ecdf of $W_1,\dots,W_p$, and suppose $F$ is a continuous cdf on $\mathbb{R}$. If for every $w \in \mathbb{R}$, $F_p(w) \stackrel{\P}{\to} F(w)$, then $\sup_{w \in \mathbb{R}} |F_p(w)-F(w)| \stackrel{\P}{\to} 0$.
\end{lemma}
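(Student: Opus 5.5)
The plan is the classical Pólya-type argument: use monotonicity of $F_p$ and continuity of $F$ to upgrade pointwise convergence in probability at finitely many grid points into a uniform bound. Fix $\varepsilon>0$. Since $F$ is a continuous cdf with limits $0$ and $1$ at $\mp\infty$, I will choose a finite grid $-\infty = t_0 < t_1 < \dots < t_{K-1} < t_K = +\infty$ with $F(t_k) - F(t_{k-1}) \le \varepsilon$ for every $k$. Here $F(t_0):=0$, $F(t_K):=1$, and $K \approx \lceil 1/\varepsilon\rceil$. Such a grid exists by the intermediate value theorem: take $t_k$ to be any point with $F(t_k) = k/K$ for $1\le k\le K-1$.

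Next I will bound the supremum deterministically by grid-point errors. For $w \in [t_{k-1}, t_k)$, monotonicity of both $F_p$ and $F$ gives
\begin{align*}
F_p(w) - F(w) &\le F_p(t_k^-) - F(t_{k-1}) \le F_p(t_k) - F(t_k) + \varepsilon, \\
F_p(w) - F(w) &\ge F_p(t_{k-1}) - F(t_k) \ge F_p(t_{k-1}) - F(t_{k-1}) - \varepsilon.
\end{align*}
At the endpoints $t_0=-\infty$ and $t_K=+\infty$ the grid errors are trivially zero, since $F_p$ and $F$ share the limits $0$ and $1$. For $w\ge t_{K-1}$, for instance, I will use $F_p(w)\le 1$ and $F(w)\ge 1-\varepsilon$. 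This yields
\begin{align*}
\sup_{w\in\mathbb{R}}|F_p(w)-F(w)| \le \max_{1\le k\le K-1} |F_p(t_k) - F(t_k)| + \varepsilon .
\end{align*}

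Finally, the hypothesis gives $F_p(t_k)\stackrel{\P}{\to} F(t_k)$ for each of the finitely many $t_k$. A union bound then yields
\begin{align*}
\P\Big(\max_k |F_p(t_k)-F(t_k)|>\varepsilon\Big) \le \sum_{k=1}^{K-1}\P\big(|F_p(t_k)-F(t_k)|>\varepsilon\big)\to 0,
\end{align*}
so $\P(\sup_w|F_p-F|>2\varepsilon)\to0$. Since $\varepsilon$ is arbitrary, the conclusion follows. No step is genuinely hard. The only point requiring care is the sandwich bound near the tails and the handling of left limits. Using $F_p(t_k^-)\le F_p(t_k)$ together with continuity of $F$ (so $F(t_k^-) = F(t_k)$) resolves the left-limit issue cleanly. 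Lemma \ref{lem:vandervaart} is not needed, though one could alternatively phrase the argument through it along subsequences.
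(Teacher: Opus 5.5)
Your proof is correct, but it takes a different route from the paper. The paper uses the subsequence criterion for convergence in probability. From an arbitrary subsequence it extracts, by diagonalization over an enumeration of $\mathbb{Q}$, a further subsequence along which $F_{p_k}(w)\to F(w)$ almost surely at every rational. It extends this to every real $w$ by sandwiching between rationals and using continuity of $F$. It then applies Lemma~\ref{lem:vandervaart} (P\'olya's theorem) pathwise to obtain $\|F_{p_k}-F\|_\infty\to 0$ almost surely.

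You instead run the P\'olya grid argument directly on probabilities:
\begin{itemize}
\item continuity of $F$ supplies a finite $\varepsilon$-grid of quantiles;
\item monotonicity gives the deterministic bound $\sup_w|F_p(w)-F(w)|\le \max_{1\le k\le K-1}|F_p(t_k)-F(t_k)|+\varepsilon$;
\item a union bound over the fixed number $K-1$ of grid points finishes.
\end{itemize}
In effect you inline the proof of the lemma the paper cites and skip the almost-sure detour.

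Your route buys economy and an explicit estimate. No subsequence extraction or diagonalization is needed, and for each $\varepsilon$ only finitely many points are involved. The inequality $\P(\sup_w|F_p(w)-F(w)|>2\varepsilon)\le\sum_{k=1}^{K-1}\P(|F_p(t_k)-F(t_k)|>\varepsilon)$ turns any pointwise rate into a rate for the uniform error, for example a Chebyshev bound under a covariance-decay condition on the $W_\idx$. The paper's route is non-quantitative but modular: it reduces the in-probability statement to a standard deterministic weak-convergence fact.
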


\begin{proof}
    Let $\{p_k\}$ denote an arbitrary subsequence of $p=1,2,3,\dots$. By the subsequence criterion for convergence in probability (see, e.g.~Theorem 2.3.2 in \cite{durrett2019probability}), it suffices to find a further subsequence along which $\|F_{p_k}-F\|_\infty \to 0$ almost surely.
    
    To this end, let $\mathbb{Q}=\{w_1,w_2,\dots\}$ denote an enumeration of the rational numbers. For each $w \in \mathbb{Q}$, since $F_{p_k}(w)\stackrel{\P}{\to} F(w)$ by assumption, there exists a further subsequence along which $F_{p_k}(w) \to~F(w)$ almost surely. By a standard diagonalization argument, we may construct a subsequence (still denoted $\{p_k\}$ for simplicity) along which $F_{p_k}(w) \to F(w)$ almost surely for every $w \in \mathbb{Q}$. 
Now let $w \in \mathbb{R}$ be any real number. Since rational numbers are dense in $\mathbb{R}$, there exist rational numbers $v,u$ satisfying $v<w<u$ and $F_{p_k}(v) \leq F_{p_k}(w) \leq F_{p_k}(u)$. Letting $k\to\infty$, we have
    \begin{align*}
        F(v) \leq \liminf_{p_k \to \infty} F_{p_k}(w) \leq \limsup_{p_k \to \infty} F_{p_k}(w) \leq F(u),
    \end{align*}
    since convergence was established for the rationals. Letting $v_m \uparrow w$ and $u_m \downarrow w$ where $\{v_m\},\{u_m\}\subset ~\mathbb{Q}$ and repeating the above argument, continuity of $F$ implies the limit of $\{F_{p_k}(w)\}$ exists and is equal to $F(w)$. Since $w \in \mathbb{R}$ is arbitrary, this holds for every $w \in \mathbb{R}$, and Lemma \ref{lem:vandervaart} then implies $\sup_{w \in \mathbb{R}} |F_{p_k}(w)-F(w)| \to 0$ on an event with probability one. 
    We have thus shown that every subsequence admits a further subsequence along which $\|F_{p_k}-F\|_\infty \to 0$ almost surely, so it follows that $\|F_p-F\|_\infty \to 0$ in probability, completing the proof.
\end{proof}

\begin{lemma}
    \label{lem:basic-non-separability}
    Suppose $W$ is drawn from the two-groups model \eqref{eq:two-group} and $X = h(|W|)$ in the setting of Section \ref{sec:supervised-classification}. If $\pi_0 f_0(w)>0$ for all $w$, then $\P[ (2\mathbf{1}\{W_\idx < 0\}-1)\beta^\top X_\idx  \geq 0] <1$ for all $\beta \in \mathbb{R}^J \setminus \{0\}$.
\end{lemma}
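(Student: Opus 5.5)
Fix $\beta \neq 0$ and set $S = 2\mathbf{1}\{W<0\}-1 \in \{\pm 1\}$ and $g(w) = \beta^\top h(|w|)$, so the target is $\P[S\, g(W) \geq 0] < 1$. I will argue by contradiction: suppose $\P[S\, g(W) \geq 0] = 1$. Then, up to a null set, $g(W)\leq 0$ whenever $W>0$ and $g(W)\ge 0$ whenever $W<0$.

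The first step uses $\pi_0 f_0(w)>0$ for all $w$. Since $f \geq \pi_0 f_0$, the marginal density $f$ is positive everywhere. Consequently, for any measurable $A \subset (0,\infty)$ of positive Lebesgue measure, both $W \in A$ and $W \in -A$ occur with positive probability. As $g(w)$ depends only on $|w|$, write $g(w)=\tilde g(|w|)$ with $\tilde g = \beta^\top h$ on $\R_+$. Applying the almost-sure sign constraint to $W\in A$ and to $W \in -A$ for $A=\{u>0: \tilde g(u)>0\}$ shows this set has Lebesgue measure zero. Doing the same with $A=\{u>0:\tilde g(u)<0\}$ shows that set is also null. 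Hence $\tilde g(u)=0$ for Lebesgue-a.e.\ $u>0$, and so $\beta^\top X = 0$ almost surely (using that $|W|$ has a density).

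The final step contradicts Assumption~\ref{assumption:1}. Since $\beta^\top X=0$ a.s., we get $\beta^\top \E[XX^\top]\beta = \E[(\beta^\top X)^2]=0$ with $\beta\neq 0$, which is impossible because $\E[XX^\top]\succ 0$. So the lemma as stated needs Assumption~\ref{assumption:1} (or equivalently linear independence of $h_1,\dots,h_J$ on a set of positive $|W|$-probability). I would state that this is taken as part of the setting of Section~\ref{sec:supervised-classification}. Without it the claim fails, for instance if $h_1 \equiv h_2$.

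The main obstacle is this measure-theoretic bookkeeping: converting an almost-sure statement into a Lebesgue-a.e.\ statement requires positivity of $f$ on both $A$ and $-A$, which is exactly where $\pi_0 f_0>0$ and the symmetry of $f_0$ enter. If continuity of $h$ is available, the a.e.\ conclusion upgrades to $\tilde g \equiv 0$ on $(0,\infty)$, but this upgrade is not needed.
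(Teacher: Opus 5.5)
Your proof is correct and is essentially the paper's argument in contrapositive form: the paper conditions on $|W_\idx|=w$ and uses that both $\P(W_\idx\geq 0\mid |W_\idx|=w)$ and $\P(W_\idx<0\mid |W_\idx|=w)$ are positive, which is the same mechanism as your use of $f>0$ to give both $A$ and $-A$ positive probability. Your remark that a non-degeneracy condition such as Assumption~\ref{assumption:1} is needed is also correct. The paper's proof uses it tacitly, since its claimed positivity ``for almost all $w>0$'' fails wherever $\beta^\top h(w)=0$, and what is actually required is that $\beta^\top h(|W_\idx|)\neq 0$ with positive probability.
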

\begin{proof}
    By the tower property, it is sufficient to show
    \begin{align*}
     \P[(2\mathbf{1}\{W_\idx<0\}-1) \beta^\top X_\idx < 0 \mid |W_\idx|=w]>0,
    \end{align*}
    for almost all $w>0$.
    The left hand side above is equal to
\begin{align*}
     &=  \mathbf{1}\{\beta^\top h(w) > 0\}\P(W_\idx \geq 0\mid |W_\idx|=w) + \mathbf{1}\{\beta^\top h(w) < 0\}\P(W_\idx < 0\mid |W_\idx|=w) 
\end{align*}
The first term above is positive because $\P(W_\idx \geq 0 \mid |W_\idx|=w)\geq \frac{\pi_0 f_0(w)}{f(-w)+f(w)}>0$; the second term is positive because $\clar(w)>0$, both of which follow from the assumption that $\pi_0f_0(w)>0$ for all~$w$.    
\end{proof}

\begin{remark}
\label{rem:skorokhod}
Assumption \ref{assum:weak_dep} requires uniform convergence in probability of the empirical cdfs among nulls and among all observations to their theoretical counterparts. However, the proof of Theorem \ref{thm:glm-consistency} below assumes almost sure convergence. We justify this as follows. Convergence in probability implies $\max\{\|F_{0p}-F_0\|_\infty,\|F_p-F\|_\infty\} \to 0$ in distribution. By Skorokhod's representation theorem, there exists a vector $(W_1,\dots,W_p,\|F_{0p}-F_0\|_\infty,\|F_p-F\|_\infty)$ with the same distribution as the original but for which the convergence to zero of the last two entries holds almost surely. Since the estimator in Theorem \ref{thm:glm-consistency} is a deterministic function of $(W_1,\dots,W_p)$, consistency towards $\clar(w)$ for a fixed $w$ in the Skorokhod representation implies convergence in distribution to a constant, which implies convergence in probability in the original space.     
\end{remark}

\paragraph{Theorem \ref{thm:glm-consistency}.} 
Suppose Assumptions~\ref{assum:weak_dep}, \ref{assumption:1}, and \ref{assumption:2} hold, 
and that $h_1,\dots,h_J$ are continuous and bounded functions. Then $\hat{\beta}$ exists with probability tending to $1$ and $\hat{\beta} \to \beta^*$ in probability as $p\to \infty$.
\begin{proof}
We first show that $\beta^*$ exists and is unique (a result which is implicit in the theorem statement). To this end, we introduce the notation $Y_{\idx} = \mathbf{1}\{W_\idx<0\}$. Note that we can rewrite,
$$
M^*(\beta) = \E[ -\log\{1+\exp\{-(2Y_\idx - 1)\beta^\top X_\idx\}\}].
$$
Then, writing $\{z\}_- := \max\{-z,0\}$ for the negative part, we have that
$$
M^*(\beta) \leq - \E[ \{(2Y_j-1)\beta^\top X_j\}_-].
$$
On the other hand, for any fixed $b$ with $\|b\|=1$, by Assumption~\ref{assumption:2}, it follows that $\E[  \{(2Y_j-1)b^\top X_j\}_-]>0$. Using continuity of the function $b \mapsto \E[  \{(2Y_j-1)b^\top X_j\}_-]$ and compactness of $\{b:\|b\|=1\}$, we get that $\eta:= \inf_{\|b\|=1}\E[  \{(2Y_j-1)b^\top X_j\}_-] >0$. Now fix $\beta \in \mathbb R^J\setminus \{0\}$ and write $\beta = t \cdot b$ with $t >0$ and $\|b\|=1$. Then,
$$
M^*(\beta) \leq - t \E[  \{(2Y_j-1)b^\top X_j\}_-] \leq -t \eta.
$$
This implies that as $\|\beta\| \to \infty$, we have $M^*(\beta) \to -\infty$, that is, $M^*(\cdot)$ is a coercive function.  Since $M^*(\beta) \in (-\infty,\infty)$ for all $\beta \in \mathbb R^J$ and $M^*$ is continuous, it then follows e.g., by~\citet[Proposition 3.2.1]{bertsekas2009convex}, that the set of maximizers of $M^*$ is nonempty. 

We next show that $M^*$ is strictly concave. For $x=h(|w|)$, write $p_{\beta}(x) = e^{h(|w|)^\top \beta}/(1+e^{h(|w|)^\top \beta})$. Differentiating under the integral gives
\begin{equation}
        \frac{\de^2 M^*(\beta)}{\de \beta^2} =- \E\big[ X_\idx X_\idx^\top p_{\beta}(X_\idx)(1- p_{\beta}(X_\idx))\big] \preceq 0.
\label{eq:hessian}
\end{equation} 
    The condition $\E\big[ X_\idx X_\idx^\top \big] \succ 0$ from Assumption~\ref{assumption:1} implies $\frac{\de^2 M^*(\beta)}{\de \beta^2} \prec 0$; indeed, otherwise there exists some $v \in \R^J\setminus\{0\}$ with 
    \begin{align*}
        \E\big[ v^\top X_\idx X_\idx^\top v \; p_{\beta}(X_\idx) \; (1-p_{\beta}(X_\idx))\big] = 0,
    \end{align*}
    which implies the integrand is zero almost surely. Division on both sides would imply $v^\top X_\idx X_\idx^\top v =~0$ almost surely, a contradiction. Thus $M^*$ is strictly concave. It follows by e.g.,~\citet[Proposition 3.1.1]{bertsekas2009convex}, that the maximizer $\beta^*$ of $M^*$ is unique.

Continuing, by uniqueness of $\beta^*$ and continuity of $M^*(\beta)$, we have the following separation condition:
for every $\varepsilon > 0$
\begin{equation}
        \sup_{\beta \in \R^J : \|\beta-\beta^*\|= \varepsilon} M^*(\beta) < M^*(\beta^*).
\label{eq:separation}
\end{equation}
For the arguments that follow, by Remark \ref{rem:skorokhod}, it suffices to assume that $\|F_p - F\|_\infty \to 0$ almost surely. Fix $\beta \in \R^{J}$ and let $g_{\beta}$ be defined as
\begin{align*}
    g_{\beta}(w) \coloneqq \beta^\top h(|w|) 1\{w<0\}-\log(1+e^{\beta^\top h(|w|)}),
\end{align*}
which is continuous at every $w \neq 0$ and bounded for all $w$. Since $\|F_{p}-F\|_\infty= \sup_{t \in \mathbb{R}} |F_{p}(t) - F(t)| \to 0$, the sequence $\{F_{p}\}$ converges weakly to $F$. 
Since $F$ has a density, it places zero probability on the singleton $\{0\}$. \citet[Theorem 3.10.1(vi)]{durrett2019probability} then implies that
\begin{align*}
    \widehat{M}_{p}(\beta)=\int g_{\beta}(w) dF_p(w) \stackrel{\text{a.s.}}{\to} \int g_{\beta}(w) dF(w)=M^*(\beta).
\end{align*}
On the other hand, one can check that $\beta \mapsto \widehat{M}_{p}(\beta)$ and $\beta \mapsto M^*(\beta)$ are (uniformly) Lipschitz continuous for all $\beta \in \mathbb R^J$. Then, using compactness of $\{\beta:  \|\beta - \beta^*\| \leq \varepsilon\}$, we can strengthen the above convergence to uniform convergence:
$$
\sup_{\beta:  \|\beta - \beta^*\| \leq \varepsilon} \left|  \widehat{M}_{p}(\beta) - M^*(\beta)\right|  \stackrel{\text{a.s.}}{\to} 0.
$$
Using the separation in~\eqref{eq:separation}, the above also implies that,
$$
\mathbb P\left[A_p\right] \to 1 \text{ as } p \to \infty,\, \text{ where }\, A_p:=\left\{\sup_{\beta \in \R^J : \|\beta-\beta^*\|= \varepsilon} \widehat{M}_p(\beta) < \widehat{M}_p(\beta^*)\right\}.
$$
We will now argue that on the event $A_p$, $\hat{\beta}$ exists (that is, $\widehat{M}_p$ has at least one maximizer), and moreover $\|\hat{\beta}-\beta^*\| \leq \varepsilon$ (where this inequality holds for any maximizer $\hat{\beta}$). To show this, it suffices to argue that on $A_p$, we have that,
\begin{equation}\sup_{\beta \in \R^J : \|\beta-\beta^*\|> \varepsilon} \widehat{M}_p(\beta) < \widehat{M}_p(\beta^*).
\label{eq:outer_separation}
\end{equation}
The above implies that we must only search for maximizers in the compact set $\{\beta: \| \beta - \beta^*\| \leq \varepsilon\}$, and the existence therein is guaranteed by continuity of $\widehat{M}_p$ and the extreme value theorem. The fact that $\|\hat{\beta}-\beta^*\| \leq \varepsilon$ then directly follows by the fact that $\hat{\beta} \in \{\beta: \| \beta - \beta^*\| \leq \varepsilon\}$.

Thus, let us show~\eqref{eq:outer_separation} on the event $A_p$. A calculation similar to~\eqref{eq:hessian} yields that $\widehat{M}_p$ is concave (this result can be strengthened to strict concavity with high probability, but we do not need this strengthening here). By contradiction, suppose that $\widehat{M}_p(\beta) \geq \widehat{M}_p(\beta^*)$ for some $\beta$ with $\| \beta - \beta^*\| > \varepsilon$. Then there exists $\lambda \in (0,1)$ such that $\beta_{\lambda} = \lambda \beta + (1-\lambda)\beta^*$ satisfies $\|\beta_{\lambda}-\beta^*\| = \varepsilon$. Moreover, by concavity,
$$
\widehat{M}_p(\beta_{\lambda})  \geq \lambda \widehat{M}_p(\beta) + (1-\lambda) \widehat{M}_p(\beta^*) \geq \widehat{M}_p(\beta^*).
$$
However, the above is a contradiction with the definition of the event $A_p$. Therefore,~\eqref{eq:outer_separation} holds.

\end{proof}

\begin{lemma}
\label{lem:H-continuous}
    Let $G(w) \coloneqq P(|W_\idx| \geq w)$ and $F^{-}(w) \coloneqq P(W_\idx \leq -w)$. If $F(w) = P(W_\idx\leq w)$ has positive density $f$, then $H(t) \coloneqq F^{-}(G^{-1}(t))$ is a monotone, bounded, and continuous function on $(0,1)$, and extends to a uniformly continuous function on $[0,1]$.
\end{lemma}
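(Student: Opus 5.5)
The idea is to study $G$ and $F^-$ separately on $[0,\infty)$, write $H$ as a composition, and read off monotonicity, boundedness and continuity from those of the pieces. Extending to the closed interval then only needs the one-sided limits of $H$ at $0$ and $1$.

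First, positivity of $f$ pins down $G$ and its inverse. Since $f$ is positive, $G(w)=\int_{|x|\ge w} f(x)\,dx = 1-\int_{-w}^{w} f$ is continuous on $[0,\infty)$. It is also strictly decreasing, because for $w<w'$ we have $G(w)-G(w')=\int_{w\le |x|<w'} f>0$. Its limits are $G(0)=1$ (no atom at zero) and $G(w)\to 0$ as $w\to\infty$. Hence $G:(0,\infty)\to(0,1)$ is a continuous strictly decreasing bijection, and its inverse $G^{-1}:(0,1)\to(0,\infty)$ is continuous and strictly decreasing. This last fact is the standard result that a continuous strictly monotone map on an interval has a continuous inverse.

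Next, $F^-(w)=F(-w)$ is continuous, since $F$ has a density. It is non-increasing in $w$ and takes values in $[0,1]$. Therefore $H=F^-\circ G^{-1}$ is a composition of continuous maps, hence continuous on $(0,1)$. As a composition of two non-increasing maps it is non-decreasing, and it is bounded in $[0,1]$.

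For the extension, the approach is to show that the one-sided limits exist and then set those values at the endpoints.
\begin{itemize}
\item As $t\downarrow 0$, $G^{-1}(t)\to\infty$, so $H(t)\to \lim_{w\to\infty}F(-w)=0$.
\item As $t\uparrow 1$, $G^{-1}(t)\to 0$, so $H(t)\to F(0)$.
\end{itemize}
Alternatively, monotonicity plus boundedness already guarantees that both limits exist. Defining $H(0)=0$ and $H(1)=F(0)$ gives a continuous function on the compact interval $[0,1]$, which is uniformly continuous by the Heine--Cantor theorem.

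The only step that needs real care is continuity of $G^{-1}$. This is exactly where positivity of $f$ matters: without strict monotonicity of $G$, the generalized inverse can jump, and $H$ could be discontinuous. Once strict monotonicity and continuity of $G$ are established, the rest is routine.
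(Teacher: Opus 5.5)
Your proof is correct and follows essentially the same route as the paper. Positivity of $f$ makes $G$ continuous and strictly decreasing, so $G^{-1}$ and hence $H=F^-\circ G^{-1}$ are continuous and monotone; one-sided limits then give the extension, which is uniformly continuous on the compact interval $[0,1]$. You also spell out details the paper leaves implicit: that $G$ maps $(0,\infty)$ bijectively onto $(0,1)$, and the explicit endpoint values $H(0)=0$ and $H(1)=F(0)$.
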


\begin{proof}
Since the density $f$ exists and is positive, the function $G(w)$ is continuous and strictly decreasing, so $G^{-1}$ is also continuous and strictly decreasing. This implies $H(t)=F^-(G^{-1}(t))$ is a continuous bounded monotone function of $t$. 
Since $H(t)$ is monotone and bounded on $(0,1)$, a continuous extension $\tilde{H}$ on $[0,1]$ is given by $\tilde{H}(1) = \lim_{t \uparrow 1} H(t)$ and $\tilde{H}(0) = \lim_{t \downarrow 0}H(t)$ and $\tilde{H}(t) = H(t)$ for all $t \in (0,1)$. Finally, since $\tilde{H}$ is continuous on the compact set $[0,1]$, it must be uniformly continuous.
\end{proof}

\begin{lemma}
    \label{lem:uniform-convergence-H}
    Suppose $W_1,\dots,W_p \sim F$ with positive density $f$, and Assumption \ref{assum:weak_dep} holds. Define for $w>0$, $t\in (0,1)$,
    \begin{align*}
        F_p^-(w) &\coloneqq F_p(-w), \;\;\;\; G_p(w) \coloneqq F_p(-w)+1-F_p(w) \\
        H_p(t) &\coloneqq F_p^{-}(G_p^{-1}(t)), \;\;\;\; H(t) \coloneqq F^{-}(G^{-1}(t)),
    \end{align*}
    where $G_p^{-1}(t) \coloneqq \sup\{w>0: G_p(w) \leq t\}$, $F^{-}(w)\coloneqq F(-w)$, $G(w) = 1-F(w)+F(-w)$, and $F_p$ is the empirical cdf of $W_1,\dots,W_p$. Then $\|H_p-H\|_\infty \to 0$ in probability as $p\to\infty$.
\end{lemma}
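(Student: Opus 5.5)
We reduce $\|H_p-H\|_\infty\to0$ to the uniform convergence $\|F_p-F\|_\infty\to0$ of Assumption~\ref{assum:weak_dep}. The tools are the uniform continuity of $H$ from Lemma~\ref{lem:H-continuous} and a sandwich argument for the generalized inverse $G_p^{-1}$.

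First, write $\Delta_p:=\|F_p-F\|_\infty$. Since $G_p(w)=F_p(-w)+1-F_p(w)$ and $G(w)=F(-w)+1-F(w)$, we get $\sup_{w>0}|G_p(w)-G(w)|\le 2\Delta_p$. Likewise $\sup_{w>0}|F_p^-(w)-F^-(w)|\le\Delta_p$.

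Next, we locate $w_p:=G_p^{-1}(t)$ relative to the true quantiles, uniformly in $t\in(0,1)$. Because $G_p$ is non-increasing and right-continuous in the appropriate sense, $G_p(w)\le t$ for $w>w_p$, and $G_p(w)>t$ for $w<w_p$. Combining this with $|G_p-G|\le2\Delta_p$ gives $G(w)\le t+2\Delta_p$ for $w>w_p$ and $G(w)\ge t-2\Delta_p$ for $w<w_p$. Since $G$ is continuous and strictly decreasing, this yields
\[
G^{-1}(\min\{t+2\Delta_p,1\})\le w_p\le G^{-1}(\max\{t-2\Delta_p,0\}),
\]
with the natural conventions at the endpoints $0$ and $\infty$.

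Now apply $F^-$, which is non-increasing. The composition $F^-\circ G^{-1}$ is then non-decreasing, and we obtain
\[
\tilde H(t-2\Delta_p)\le F^-(w_p)\le \tilde H(t+2\Delta_p),
\]
where $\tilde H$ is the continuous extension from Lemma~\ref{lem:H-continuous} and arguments are clipped to $[0,1]$. One-sided limits of $F^-$ are needed to handle possible non-attainment, but continuity of $F$ makes them harmless.

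Finally, $|H_p(t)-H(t)|\le|F_p^-(w_p)-F^-(w_p)|+|F^-(w_p)-H(t)|$. The first term is at most $\Delta_p$. The second is at most $\omega(2\Delta_p)$, where $\omega$ is the modulus of uniform continuity of $\tilde H$. Both bounds are uniform in $t$, and both tend to zero in probability as $\Delta_p\to0$.

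The main obstacle is the second step: handling the generalized inverse $G_p^{-1}$ rigorously. Three issues arise there. The supremum may be $+\infty$ or the set empty for $t$ near $0$. The value may be $0$ for $t$ near $1$. And $G_p$ has jumps, so the strict versus non-strict inequalities at $w_p$ must be tracked. My plan is to prove the inequalities $G(w)\le t+2\Delta_p$ for $w>w_p$ and $G(w)\ge t-2\Delta_p$ for $w<w_p$ directly from the definition of the supremum. I then pass to limits using the continuity of $G$ and $F^-$, and treat the edge cases $w_p\in\{0,\infty\}$ via $\tilde H(0)$ and $\tilde H(1)$.
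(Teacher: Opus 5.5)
Your proposal is correct and is essentially the paper's argument. Both bound $|H_p(t)-H(t)|$ by $\|F_p-F\|_\infty$ plus the oscillation of the uniformly continuous $H$ (Lemma~\ref{lem:H-continuous}) between $G(w_p)$ and $t$, which differ by at most $\|G_p-G\|_\infty\le 2\|F_p-F\|_\infty$. The only difference is that the paper evaluates at the grid points $t=k/p$, identifying $G_p^{-1}(k/p)$ with $|W|_{(k)}$ and asserting the supremum is attained there, whereas your sandwich handles every $t$ directly and is robust to ties and to the choice of inverse convention. One caveat: the crossing properties you use for $w_p$ hold for $\inf\{w>0:G_p(w)\le t\}$ (or for the paper's later $\sup\{w>0:G_p(w)\ge t\}$). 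They do not hold for the literal $\sup\{w>0:G_p(w)\le t\}$ in the statement, which equals $+\infty$ because $G_p$ is non-increasing.
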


\begin{proof}
Since $H_p$ is a step function and $H$ is monotone, the supremum is achieved at some point $t=k/p$, 
\begin{align*}
    \|H_p-H\|_\infty = \sup_{t \in (0,1)}|H_p(t)-H(t)| = \max_{k=1,\dots,p} |F_p^- (|W|_{(k)}) - F^-(G^{-1}(k/p))|,
\end{align*}
where $|W|_{(1)} \geq \dots \geq |W|_{(p)}$ are the absolute values sorted decreasingly. For any $k=1,\dots,p$, we have
    \begin{align*}
    |F^{-}_p(|W|_{(k)}) - F^{-}(G^{-1}(k/p))| &\leq \|F_p-F\|_\infty + |F^- (|W|_{(k)}) - F^-(G^{-1}(k/p))| \\
    &= \|F_p-F\|_\infty + |H(G(|W|_{(k)})) - H(G_p(|W|_{(k)}))|.
    \end{align*}
    Now fix $\varepsilon>0$. Lemma \ref{lem:H-continuous} implies $H$ is uniformly continuous: there exists $\delta>0$ such that $|x-y| < \delta$ implies $|H(x)-H(y)|<\varepsilon$. Assumption \ref{assum:weak_dep} implies that eventually, $\|G-G_p\|_\infty < \delta$ and thus
    \begin{align*}
        \|H_p-H\|_\infty \leq \|F_p-F\|_\infty + \varepsilon.
    \end{align*}
    Since $\varepsilon$ is arbitrary and $\|F_p-F\|_\infty \stackrel{\P}{\to} 0$, this implies $\|H_p-H\|_\infty \stackrel{\P}{\to} 0$.
\end{proof}

\paragraph{Theorem \ref{thm:consistency-iso}.} 
Suppose $f$ is a positive and continuous probability density function, the true regression function $\mu(w) = \P(W_\idx < 0 \mid |W_\idx|=w)$ is non-increasing in $w$, and Assumption~\ref{assum:weak_dep} holds. Then for each $w$, $\widehat{\clar}_{\textnormal{iso}}(w) \to \clar(w)$ in probability as $p \to \infty$.
\begin{proof}
    Let $H(t) = F^{-}(G^{-1}(t))$ as defined in Lemma \ref{lem:H-continuous}, and let $H_p(t)$ denote its empirical version $H_p(t) \coloneqq F_p^-(G^{-1}_p(t))$, where 
    \begin{align*}
        F_p^-(w) \coloneqq \frac{1}{p}\#\{\idx\leq p:W_\idx \leq -w\}, \;\; G_p(w) \coloneqq \frac{1}{p}\# \{\idx\leq p: |W_\idx| \geq w\},
    \end{align*}
    and $G_p^{-1}(t) \coloneqq \sup\{w>0:G_p(w) \geq t\}$.
    Since $f=F'$ is continuous and positive, we have
    \begin{align*}
        H'(t) = \frac{(F^-)'(G^{-1}(t))}{G'(G^{-1}(t))} = \frac{-f(-G^{-1}(t))}{-f(-G^{-1}(t))-f(G^{-1}(t))} = \mu(G^{-1}(t)),
    \end{align*}
    which implies $H$ is convex by the monotonicity assumption. 
    Letting $\hat{H}_p$ denote the greatest convex minorant of $H_p$, Marshall's inequality then implies
    \begin{align}
    \label{ineq:marshall}
        \| \hat{H}_p - H \|_\infty \leq \| H_p - H \|_\infty \stackrel{\P}{\to} 0,
    \end{align}
    by Lemma \ref{lem:uniform-convergence-H}.
    Since $\hat{\mu}(G^{-1}_p(t))$ is the left-hand slope of the convex function $\hat{H}_p(t)$, we have
    \begin{align*}
        \frac{\hat{H}_p(t)-\hat{H}_p(t-\varepsilon)}{\varepsilon} \leq \hat{\mu}(G^{-1}_p(t)) \leq \frac{\hat{H}_p(t+\varepsilon)-\hat{H}_p(t)}{\varepsilon}.
    \end{align*}
    Pick $t = G_p(w)$ and let $\varepsilon_p \coloneqq \max \left\{\|G_p-G\|_\infty^{1/2},\| H_p - H\|^{1/2}_\infty\right\} \stackrel{\P}{\to} 0$. The first inequality above implies
    \begin{align*}
        \frac{H(G(w))-H(G(w)-\varepsilon_p)}{\varepsilon_p} &\leq \hat{\mu}(w) + \frac{|\hat{H}_p(G_p(w))-\hat{H}_p(G(w))| + |\hat{H}_p(G(w)) - H(G(w))|}{\varepsilon_p} \\
        &+ \frac{|\hat{H}_p(G_p(w)-\varepsilon_p)-\hat{H}_p(G(w)-\varepsilon_p)| + |\hat{H}_p(G(w)-\varepsilon_p) - H(G(w)-\varepsilon_p)|}{\varepsilon_p}. 
    \end{align*}
    Since $\hat{H}_p$ has slopes between 0 and 1, we have $|\hat{H}_p(t_1) - \hat{H}_p(t_2)| \leq |t_1-t_2|$ for any $t_1,t_2$, so the above inequality implies
    \begin{align*}
        \frac{H(G(w))-H(G(w)-\varepsilon_p)}{\varepsilon_p}  \leq \hat{\mu}(w) + \frac{2(\|G_p-G\|_\infty+\|H_p-H\|_\infty)}{\varepsilon_p}
    \end{align*}
    where we have used Marshall's inequality \eqref{ineq:marshall}.
    Similarly, the upper bound implies
    \begin{align*}
        \frac{H(G(w)+\varepsilon_p)-H(G(w))}{\varepsilon_p} \geq \hat{\mu}(w) - \frac{2(\|G_p-G\|_\infty + \|H_p-H\|_\infty)}{\varepsilon_p}
    \end{align*}
    Assumption \ref{assum:weak_dep} implies $\|G_p-G\|_\infty \to 0$ in probability. Therefore,
    \begin{align*}
    \frac{H(G(w))-H(G(w)-\varepsilon_p)}{\varepsilon_p} - \hat{\mu}(w) &\leq \frac{4 \max\left\{\|G_p-G\|_\infty,\|H_p-H\|_\infty \right\}}{\varepsilon_p} = 4\varepsilon_p \stackrel{\P}{\to} 0 \\
    \hat{\mu}(w) - \frac{H(G(w)+\varepsilon_p)-H(G(w))}{\varepsilon_p} &\leq \frac{4 \max\left\{\|G_p-G\|_\infty,\|H_p-H\|_\infty \right\}}{\varepsilon_p} = 4\varepsilon_p \stackrel{\P}{\to} 0.
    \end{align*}
    Since $H'$ exists and is continuous at $G(w)$, the above display implies
    \begin{align*}
        \hat{\mu}(w) \stackrel{\P}{\to} \lim_{\varepsilon \to 0} \frac{H(G(w)+\varepsilon)-H(G(w))}{\varepsilon} = \lim_{\varepsilon \to 0} \frac{H(G(w))-H(G(w)-\varepsilon)}{\varepsilon} = H'(G(w)) = \mu(w)
    \end{align*}
    as $p \to \infty$, from which it follows that $\widehat{\clar}_{\text{iso}}(w) \to \clar(w)$ in probability.
\end{proof}

\paragraph{Theorem~\ref{thm:KDE-consistency}.}
Let $F$ denote the marginal cdf of each of $W_\idx$ and fix $w>0$. Suppose that $F$ has a continuous positive density $f$ in a neighborhood of $w$ and $-w$, and Assumption \ref{assum:weak_dep} holds. Also, assume that the kernel $K$ is supported on $[-1,1]$, is Lipschitz continuous and satisfies $\int_{[-1,1]}K(u) du = 1$. Then there exists a sequence $h_p \to 0$ such that \smash{$\widehat{\clar}_{h_p}(w) \to \clar(w)$} in probability as $p \to \infty$.

\begin{proof}
We can write,
$$
\hat{f}_h(w) = \int \psi_w(u) dF_p(u),\;\; \text{ where }\, \psi_w(u) := \frac{1}{h}K\left(\frac{u-w}{h}\right). 
$$
Write $L$ for the Lipschitz constant of $K$. Then note that $\psi_w(\cdot)$ is supported on $[w-h,w+h]$, has Lipschitz constant $L/h^2$ and is almost everywhere differentiable with derivative $\psi'_w(\cdot)$.

Using integration by parts, we find that:
$$
\begin{aligned}
\left| \int \psi_w(u) dF_p(u) - \int \psi_w(u) dF(u) \right|  &= \left|    \int_{w-h}^{w+h} \psi_w(u)\{ dF_p(u) - dF(u)\}     \right| \\
&= \left|   \int_{w-h}^{w+h} \psi'_w(u)\{F_p(u)-F(u)\} du    \right| \\ 
& \leq 2h \frac{L}{h^2}\|F_p-F\|_\infty = \frac{2L}{h} \|F_p-F\|_\infty.
\end{aligned}
$$
This means that if we choose the bandwidth sequence $h_p$ such that $h_p \to 0$ and $\mathbb E[ \|F_p-F\|_\infty]/ h_p \to 0$, then: 
$$\left| \int \psi_w(u) dF_p(u) - \int \psi_w(u) dF(u) \right| \to 0\; \text{ in probability}.$$
Furthermore:
$$
\begin{aligned}
\int \psi_w(u) dF(u) - f(w) &= \int_{w-h}^{w+h} \frac{1}{h}K\left(\frac{u-w}{h}\right)f(u)du - f(w) \\ 
&= \int_{-1}^1 K(z)\{ f(w + hz) - f(w)\} dz \to 0 \text{ as } h \to 0, 
\end{aligned}
$$
where in the last step we used continuity of $f(\cdot)$ in a neighborhood of $w$ and dominated convergence. Thus we find that:
$$
\hat{f}_{h_p}(w) \to f(w) \text{ in probability as } p \to \infty,
$$
under the conditions we imposed on the sequence $h_p$.
Analogously we can argue that $\hat{f}_{h_p}(-w) \to f(-w)$ in probability, and thus conclude that $\widehat{\clar}_{h_p}(w) \to \clar(w)$ in probability.
\end{proof}

\begin{lemma}
    \label{lem:monotone-clar}
    Let $\alpha \in (0,1)$ and assume (i), (ii), and (iii), defined below: 
    \begin{enumerate}[(i)]
        \item there is a unique $w^*$ for which $\clar(w^*)=\alpha$, 
        \item $\clar$ is continuous at $w^*$ and non-increasing on $[0,w^*+\eta)$ for some $\eta>0$,
        \item there is an open interval $I$ containing $w^*$ such that $\widehat{\clar}$ is uniformly consistent for $\clar$ on~$I$, i.e.~
        \begin{align*}
            \sup_{w \in I} |\widehat{\clar}(w)-\clar(w)| \to 0 \;\; \text{in probability as $p\to\infty$},
        \end{align*}
        and either one of the following conditions holds: either $\widehat{\clar}$ is non-increasing on $[0,w^*)$ with probability tending to 1, or $\widehat{\clar}$ is uniformly consistent for $\clar$ over $[0,w^*)$.
    \end{enumerate}
    Let $\hat{\tau} = \inf\{w\geq 0:\widehat{\clar}(w) \leq \alpha\}$. Then
\begin{align*}
\clar(\hat{\tau}) \to \clar(w^*) = \q \;\; \text{in probability as $p\to\infty$.}
\end{align*}
\end{lemma}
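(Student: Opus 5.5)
The plan is to show that, for every $\delta>0$ small enough, $\hat\tau\in[w^*-\delta,w^*+\delta]$ with probability tending to one. Continuity of $\clar$ at $w^*$ then gives $\clar(\hat\tau)\to\alpha$ in probability. Fix $\delta>0$ so small that $[w^*-\delta,w^*+\delta]\subset I\cap[0,w^*+\eta)$, where $I$ is the interval from (iii).

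\emph{Key structural fact.} Because $\clar$ is non-increasing on $[0,w^*+\eta)$ and $w^*$ is the unique solution of $\clar(w)=\alpha$, we get a strict gap on each side of $w^*$:
\[
\clar(w)>\alpha \text{ for } w<w^*, \qquad \clar(w)<\alpha \text{ for } w\in(w^*,w^*+\eta).
\]
Indeed, monotonicity gives $\clar\ge\alpha$ to the left, and uniqueness rules out equality. The right side is analogous. Consequently $c_-:=\clar(w^*-\delta)-\alpha>0$ and $c_+:=\alpha-\clar(w^*+\delta)>0$.

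\emph{Upper bound, $\hat\tau\le w^*+\delta$.} Since $w^*+\delta\in I$, uniform consistency on $I$ gives $\widehat{\clar}(w^*+\delta)<\clar(w^*+\delta)+c_+=\alpha$ with probability tending to one. On that event, the definition of $\hat\tau$ gives $\hat\tau\le w^*+\delta$.

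\emph{Lower bound, $\hat\tau\ge w^*-\delta$.} We need $\widehat{\clar}(w)>\alpha$ for every $w\in[0,w^*-\delta)$. This is the main obstacle, because uniform consistency near $w^*$ says nothing about small $w$. It is exactly where the two alternatives in (iii) come in.
\begin{itemize}
\item \emph{Alternative (a): $\widehat{\clar}$ is non-increasing on $[0,w^*)$ with probability tending to one.} On $I$ we have $\widehat{\clar}(w^*-\delta)>\clar(w^*-\delta)-c_-/2=\alpha+c_-/2$ with high probability. Monotonicity then propagates this bound to all $w\in[0,w^*-\delta]$.
\item \emph{Alternative (b): $\widehat{\clar}$ is uniformly consistent on $[0,w^*)$.} For $w\le w^*-\delta$ we have $\clar(w)\ge\clar(w^*-\delta)=\alpha+c_-$ by monotonicity of $\clar$. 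Hence $\sup_{[0,w^*)}|\widehat{\clar}-\clar|<c_-/2$ forces $\widehat{\clar}(w)>\alpha$ on $[0,w^*-\delta]$.
\end{itemize}
In either case, with probability tending to one no $w<w^*-\delta$ satisfies $\widehat{\clar}(w)\le\alpha$, so $\hat\tau\ge w^*-\delta$. One detail needs care: $\hat\tau$ is an infimum, so it may fail to be attained. This does not matter, since the two bounds concern only the location of the infimum.

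Combining the two bounds, $\P(|\hat\tau-w^*|\le\delta)\to1$ for every small $\delta$, that is, $\hat\tau\to w^*$ in probability. Continuity of $\clar$ at $w^*$ then gives the claim. Formally, for any $\epsilon>0$, choose $\delta$ so that $|w-w^*|\le\delta$ implies $|\clar(w)-\alpha|<\epsilon$.
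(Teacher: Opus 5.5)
Your proof is correct and follows essentially the same route as the paper's. Both arguments fix a small window around $w^*$ inside $I$ and use monotonicity plus uniqueness to get the strict gaps $\clar(w^*-\delta)>\alpha>\clar(w^*+\delta)$. Both then get $\hat\tau\le w^*+\delta$ from consistency at $w^*+\delta$, handle the lower bound by the same two-case split on (iii), and conclude via continuity of $\clar$ at $w^*$. The only cosmetic difference is that you use separate margins $c_+$ and $c_-/2$, whereas the paper uses a single tolerance equal to the minimum of the two gaps.
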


\begin{proof}
Fix $\varepsilon \in (0,\eta)$ small enough so that $[w^*-\varepsilon,w^*+\varepsilon] \subset I$. Since $\clar$ is non-increasing on $(0,w^*+\eta)$ and $w^*$ is unique, we have
\begin{align*}
\clar(w^*-\varepsilon) > \alpha > \clar(w^*+\varepsilon),
\end{align*}
and $\delta \coloneqq \min\big\{\alpha - \clar(w^*+\varepsilon),\clar(w^*-\varepsilon)-\alpha\big\}>0$. By the third assumption, the event
\begin{align*}
    E_p \coloneqq \Big\{ \sup_{w \in I} |\widehat{\clar}(w)-\clar(w)| < \delta\Big\}
\end{align*}
satisfies $\P(E_p) \to 1$ as $p\to\infty$. We will show that on this event, $|\hat{\tau}-w^*|\leq \varepsilon$. To this end, note that on $E_p$, 
\begin{align*}
    \widehat{\clar}(w^*+\varepsilon) - \clar(w^*+\varepsilon) < \delta \leq \alpha - \clar(w^*+\varepsilon),
\end{align*}
which implies $\widehat{\clar}(w^*+\varepsilon) < \alpha$, so that $w^*+\varepsilon \geq \hat{\tau}$ by definition. By similar reasoning,
\begin{align*}
    \clar(w^*-\varepsilon)-\widehat{\clar}(w^*-\varepsilon) < \delta \leq \clar(w^*-\varepsilon)-\alpha,
\end{align*}
which implies $\widehat{\clar}(w^*-\varepsilon) > \alpha$. If $\widehat{\clar}$ is non-increasing on $[0,w^*)$, any $w < w^*-\varepsilon$ also has $\widehat{\clar}(w)>\alpha$, meaning that $[0,w^*-\varepsilon)$ is disjoint from $\{w\geq 0: \widehat{\clar}(w) \leq \alpha\}$. This implies $w^*-\varepsilon \leq \hat{\tau}$. If $\widehat{\clar}$ is uniformly consistent over $[0,w^*)$, then $\sup_{w\in[0,w^*)}|\widehat{\clar}(w)-\clar(w)|<\delta$ with probability tending to 1. On this event, for any $w \leq w^*-\varepsilon$,
\begin{align*}
    \widehat{\clar}(w) > \clar(w) - \delta > \clar(w^*-\varepsilon)-\delta > \alpha,
\end{align*}
from which it follows that $w^*-\varepsilon \leq \hat{\tau}$. 
Thus $|\hat{\tau}-w^*| \leq \varepsilon$ with probability tending to 1 as $p\to \infty$, which implies $\hat{\tau}\to w^*$ in probability since $\varepsilon>0$ can be taken arbitrarily small. Since $\clar$ is continuous at $w^*$, the continuous mapping theorem gives
\begin{align*}
    \clar(\hat{\tau}) \to \clar(w^*) = \alpha
\end{align*}
in probability as $p\to\infty$.
\end{proof}

\paragraph{Theorem \ref{thm:asymptotic-bFDR}.}
Fix $\alpha \in (0,1)$ for which there is a unique $w^*$ satisfying $\clar(w^*)=\alpha$, and suppose that
\begin{enumerate}[(i)]
    \item $\clar$ is continuous at $w^*$ and non-increasing on $[0,w^*+\eta)$ for some $\eta>0$, 
    \item $f,f_0$ are positive and continuous in a neighborhood of $w^*$, and $f_0$ is symmetric about zero, 
    \item $p_0/p\to \pi_0\in(0,1)$ as $p\to\infty$, where $p_0$ is the number of true nulls, 
    \item Assumption \ref{assum:weak_dep} holds, i.e.~$\|F_p-F\|_\infty$ and $\|F_{0p}-F_0\|_\infty$ both tend to zero in probability as $p\to \infty$,
    \item $\widehat{\clar}$ is uniformly consistent for $\clar$ on an open interval containing $w^*$, and one of the following conditions holds: $\widehat{\clar}$ is non-increasing on $[0,w^*)$ with probability tending to $1$, or $\widehat{\clar}$ is uniformly consistent for $\clar$ over $[0,w^*)$.
\end{enumerate}
Then, for the threshold $\hat{\tau}$ defined in \eqref{eq:clar-hat-cutoff}, 
    there exists a sequence of positive numbers $\varepsilon_p \to 0$ such that for any $\delta>0$,
\begin{align*}
\lim_{p\to\infty}\P\big(\FDP([\hat{\tau},\hat{\tau}+\varepsilon_p]) > \alpha + \delta \big) = 0.
\end{align*}

\begin{proof}
Let $\varepsilon_p \coloneqq \max\big\{ \|F_p-F\|^{1/2}_\infty, \|F_{0p}-F_0\|_\infty^{1/2}\big\}$, which tends to zero by Assumption \ref{assum:weak_dep}.
The FDP is expressible in terms of the empirical cdfs (defined in Assumption \ref{assum:weak_dep}): 
    \begin{align*}
        \FDP([\hat{\tau},\hat{\tau}+\varepsilon_p]) 
        = \frac{p_0}{p}\cdot \frac{F_{0p}(\hat{\tau}+\varepsilon_p)-F_{0p}(\hat{\tau}-)}{F_p(\hat{\tau}+\varepsilon_p)-F_p(\hat{\tau}-)}, 
    \end{align*}
    where $G(t-)$ denotes the left limit of a cdf $G$ at $t$. Next, we bound the differences:
    \begin{align*}
        F_{0p}(\hat{\tau}+\varepsilon_p) - F_{0p}(\hat{\tau}-)
        &\leq  F_{0}(\hat{\tau}+\varepsilon_p)-F_{0}(\hat{\tau}-\varepsilon_p') + 2 \|F_{0p}-F_0\|_\infty, \\ 
        F_p(\hat{\tau}+\varepsilon_p) - F_{p}(\hat{\tau}-) &\geq F(\hat{\tau}+\varepsilon_p)-F(\hat{\tau}) - 2 \|F_{p}-F\|_\infty. 
    \end{align*}
    where $\varepsilon_p'/\varepsilon_p \to 0$. By the argument in the proof of Lemma \ref{lem:monotone-clar}, $\hat{\tau}\to w^*$ in probability. Since $F_0$ and $F$ are differentiable and $f_0,f$ are continuous near $w^*$, the mean value theorem and $\varepsilon_p\stackrel{\P}{\to} 0$ imply that
    \begin{align*}
        F_0(\hat{\tau}+\varepsilon_p) - F_0(\hat{\tau}-\varepsilon_p') &=  f_0(\hat{\tau}) \varepsilon_p(1+o_{\P}(1)) \\
        F(\hat{\tau}+\varepsilon_p)-F(\hat{\tau}) &= f(\hat{\tau}) \varepsilon_p(1+o_{\P}(1)),
    \end{align*}
    which are bounded away from zero since $\min\{f_0(w^*),f(w^*)\}>0$.
By construction, $\|F_{0p}-F_0\|_\infty / \varepsilon_p \to 0$ and $\|F_{p}-F\|_\infty / \varepsilon_p \to 0$ in probability. It then follows from the previous displays that
\begin{align*}
    \FDP([\hat{\tau},\hat{\tau}+\varepsilon_p]) \leq \frac{p_0 f_0(\hat{\tau}) \varepsilon_p(1+o_{\P}(1))}{pf(\hat{\tau}) \varepsilon_p(1+o_{\P}(1))} = \lfdr(\hat{\tau})  (1+o_{\P}(1)) .
\end{align*}
    By Lemma \ref{lem:monotone-clar} and symmetry of $f_0$, we have $\lfdr(\hat{\tau}) \leq \clar(\hat{\tau}) \to \alpha$ as $p \to \infty$. Hence 
    \begin{align*}
    \FDP([\hat{\tau},\hat{\tau}+\varepsilon_p]) \leq \clar(\hat{\tau})(1+o_{\P}(1)) \to \alpha,
    \end{align*}
    so that $\P(\FDP([\hat{\tau},\hat{\tau}+\varepsilon_p])>\alpha+\delta) \to 0$ for any $\delta>0$.
 
\end{proof}

\subsection{Diagnostic plots}
\label{sec:diagnostic-plots}

\begin{figure}[h]
    \centering
    \includegraphics[width=\linewidth]{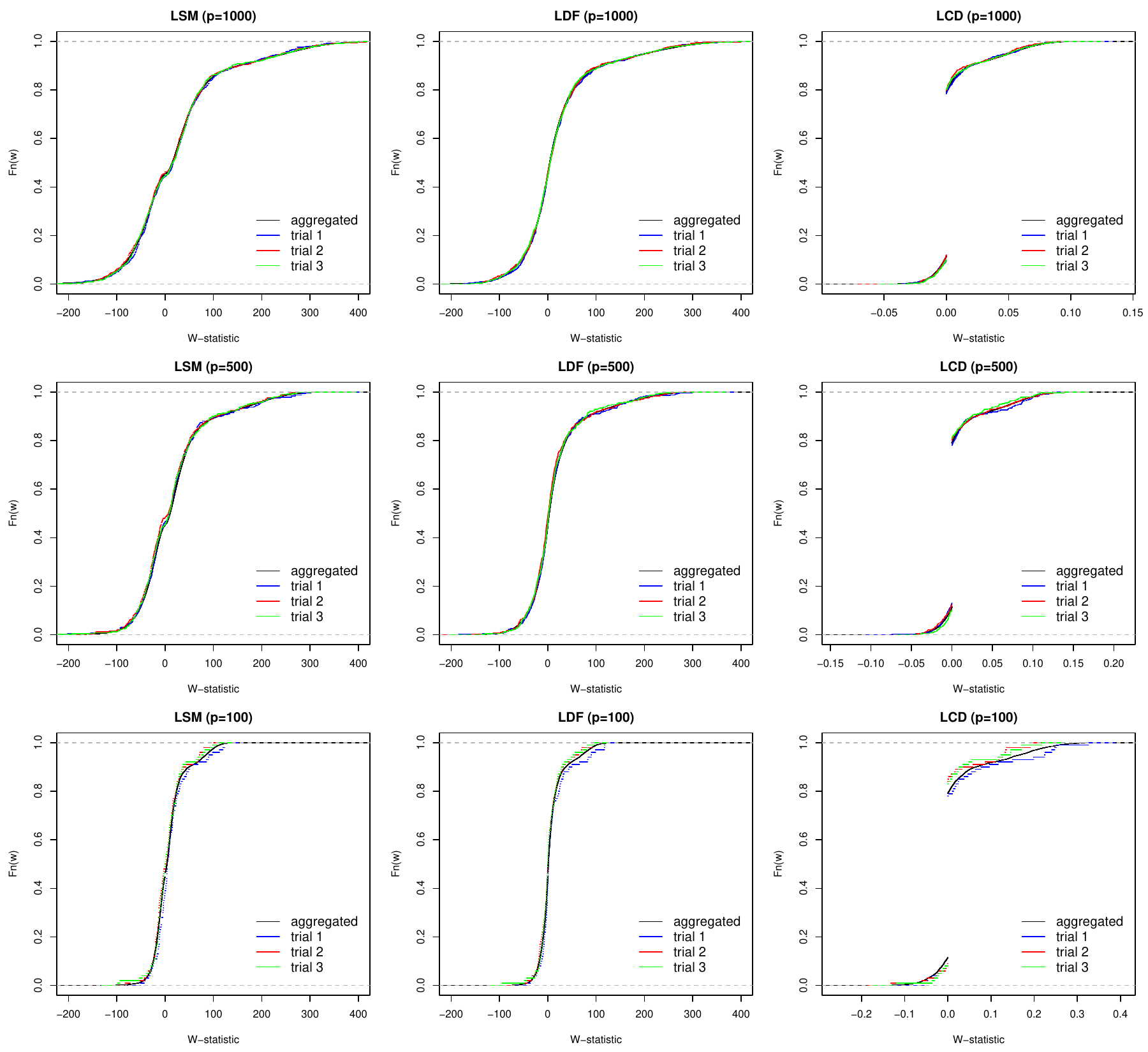}
    \caption{The above plots assess convergence of the empirical cdfs for three types of $W$-statistics where $p=1000$ (top), $p=500$ (middle), and $p=100$ (bottom).}
    \label{fig:ecdf-diagnostic1}
\end{figure}

\begin{figure}[h]
    \centering
    \includegraphics[width=\linewidth]{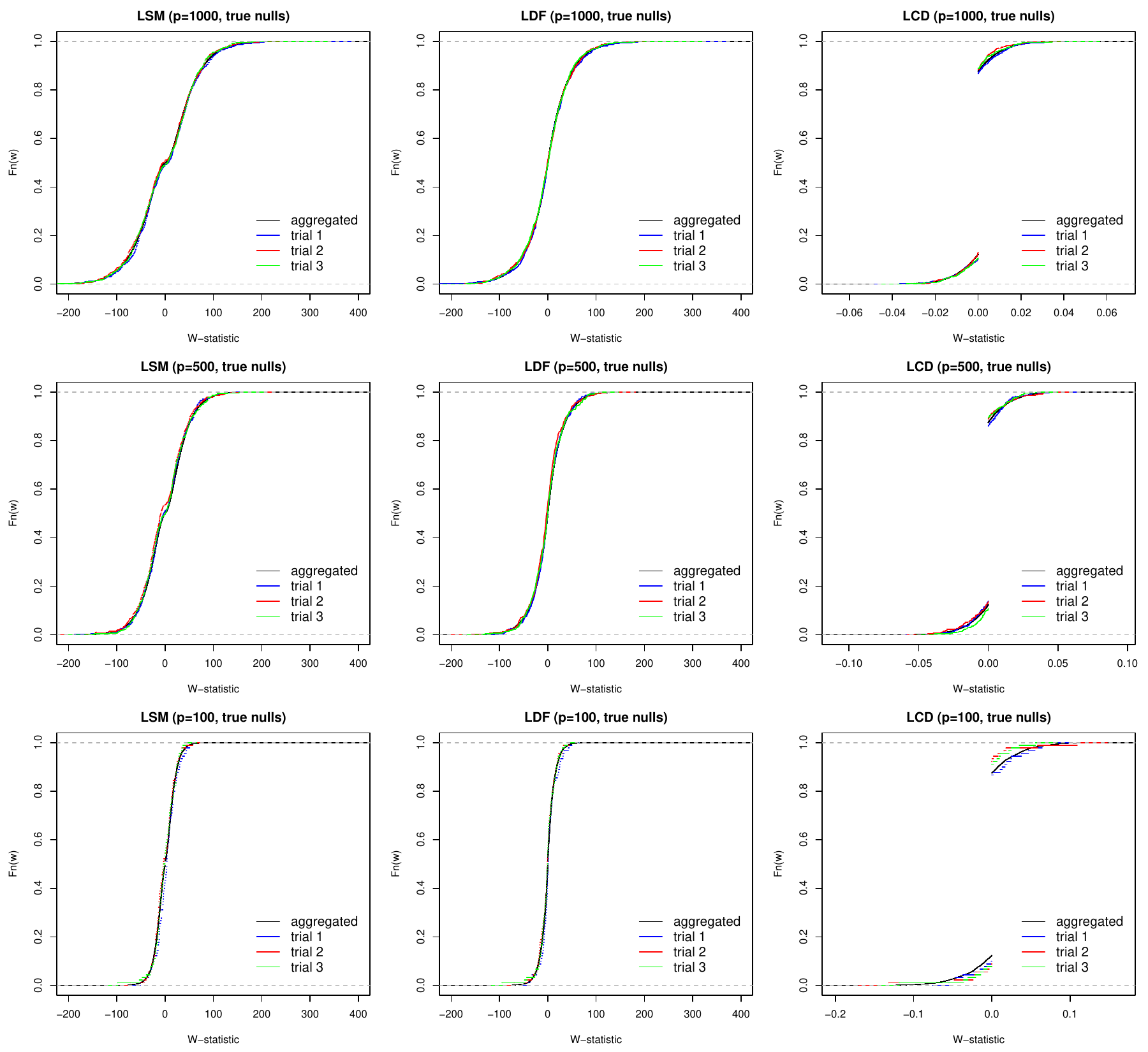}
    \caption{The above plots assess convergence of the empirical cdfs restricted to nulls for three types of $W$-statistics where $p=1000$ (top), $p=500$ (middle), and $p=100$ (bottom).}
    \label{fig:ecdf-diagnostic2}
\end{figure}

Figure \ref{fig:ecdf-diagnostic1} depicts several realizations of the ecdf using three types of importance-statistics, each based on the Lasso regression fit: (i) Lasso-sign-max (LSM), (ii) Lasso $\lambda$ difference (LDF), and (iii) Lasso coefficient difference (LCD); see \cite{knockoff2022} for documentation regarding these and other importance statistics implemented by the R knockoffs package. The diagrams assess convergence of the marginal ecdf for these statistics by plotting three realizations of the ecdf (red, blue, green), overlaid with the ecdf generated from $100 \times p$ many $W$-statistics resulting from pooling across 100 independent simulation runs (black).
The variability of per-run ecdfs decreases as $p$ increases from 100 to 500 to 1000 (bottom to top). Figure \ref{fig:ecdf-diagnostic2} shows the same comparison restricted to the subset of $W$-statistics associated with true nulls.

\begin{figure}[h]
    \centering
    \includegraphics[width=\linewidth]{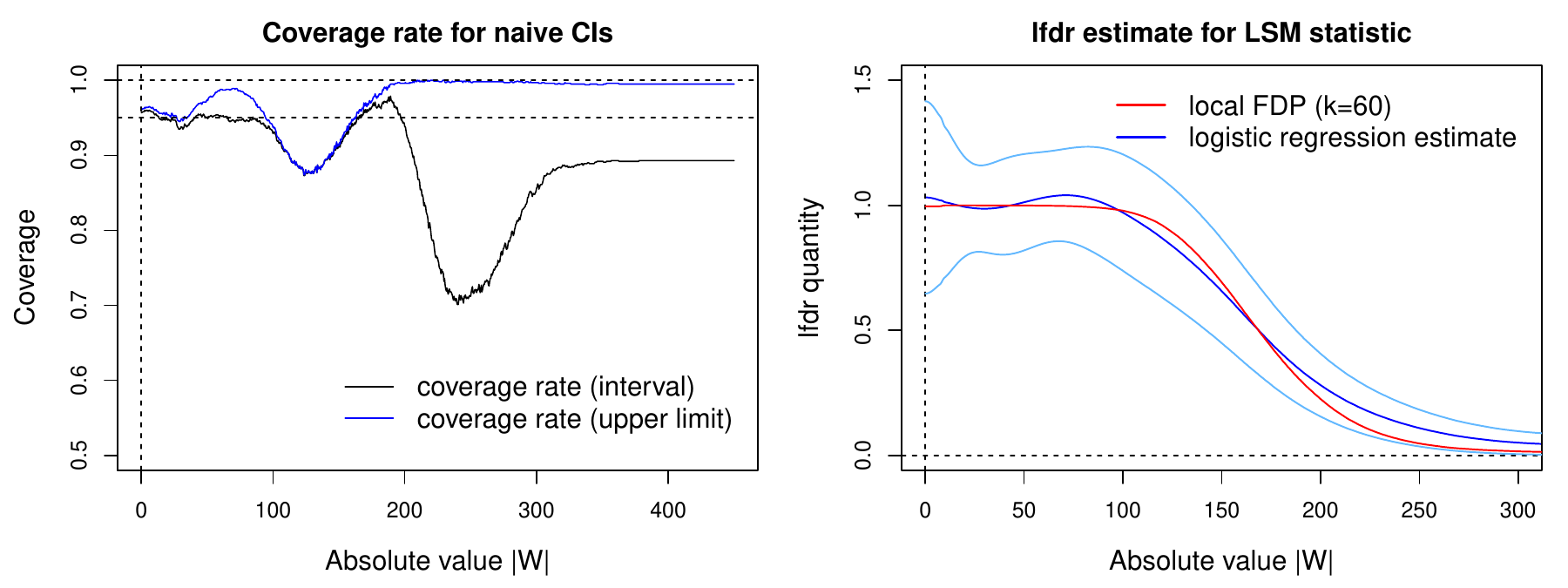}
    \caption{Simulation with delta method CIs. Left: empirical coverage rate over 1000 resamplings of the $y$ vector, holding $\textbf{X}$ and $\beta$ fixed as in the setting of the knockoffs tutorial. Right: the average logistic regression curve (dark blue), with average upper and lower delta-method limits (light blue), and the approximate true lfdr (red). The true lfdr at each $w$ is approximated by the local FDP among the $k=60$ $W$-statistics closest in absolute value to $w$, averaged across the 1000 runs.}
    \label{fig:coverage-naive}
\end{figure}

\begin{figure}[h!]
    \centering
    \includegraphics[width=\linewidth]{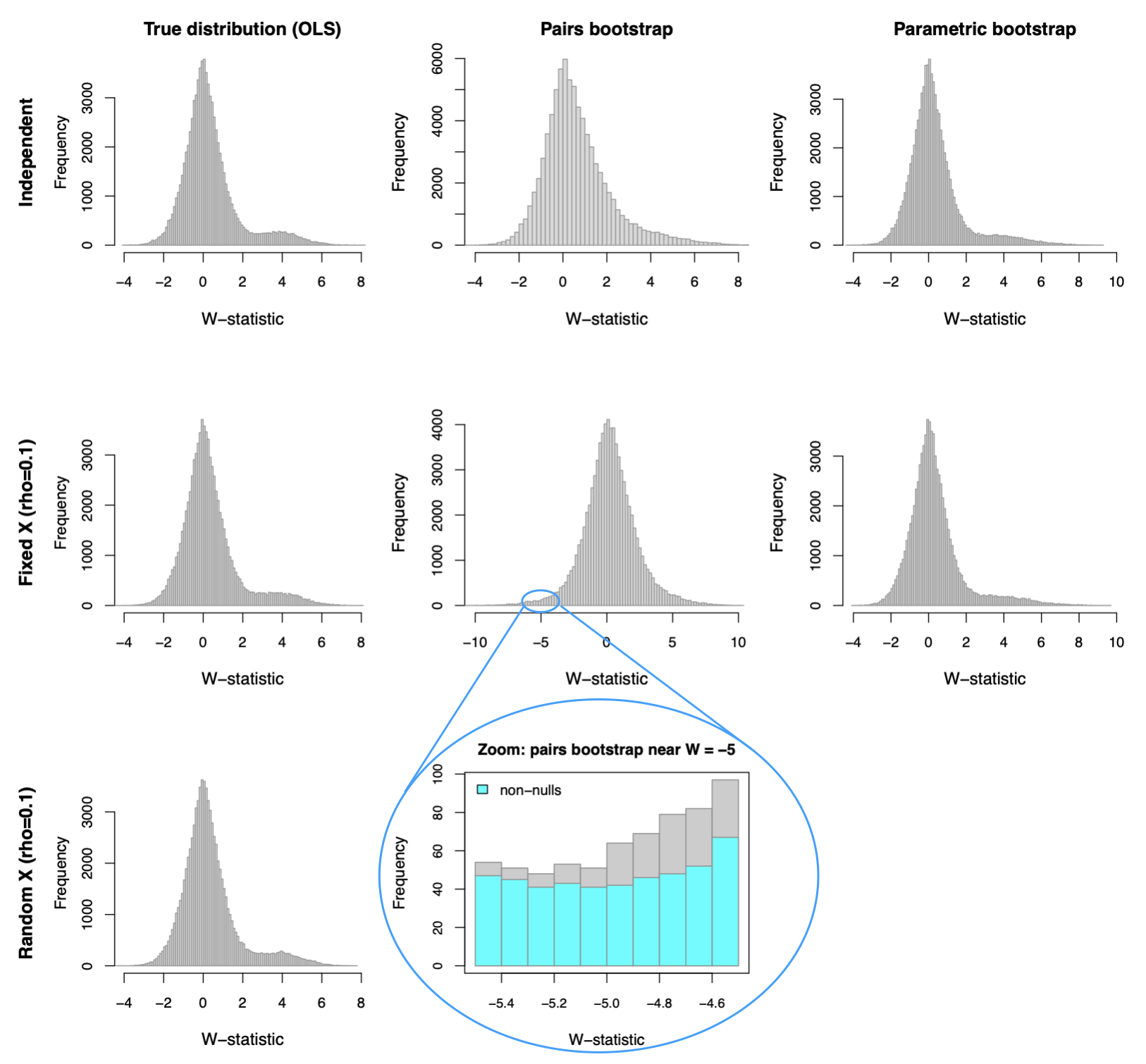}
    \caption{Histograms of the $W$-statistics used to esetimate the clar. For the independent case, the entries of $X$ are i.i.d. $N(0,1)$ random variables. For the correlated case, each row of $X$ is generated from a multivariate normal distribution with toeplitz covariance matrix (1 along the diagonal, $0.1$ adjacent to the diagonal, $0.1^2$ two away from the diagonal, etc.). The bottom row shows the $W$-statistic distribution under a random $X$ design, in which $X$ is redrawn from the same distribution over columns at each iteration, rather than held as fixed across iterations as in the second row. Bootstrap panels are omitted because the knockoffs procedure used to produce $(W_\idx)$ treats $X$ as fixed, so resampling with respect to $X$ would simply produce another realization of the middle row.}
    \label{fig:OLS-statistics}
\end{figure}

\begin{figure}[h!]
    \centering
    \includegraphics[width=0.9\linewidth]{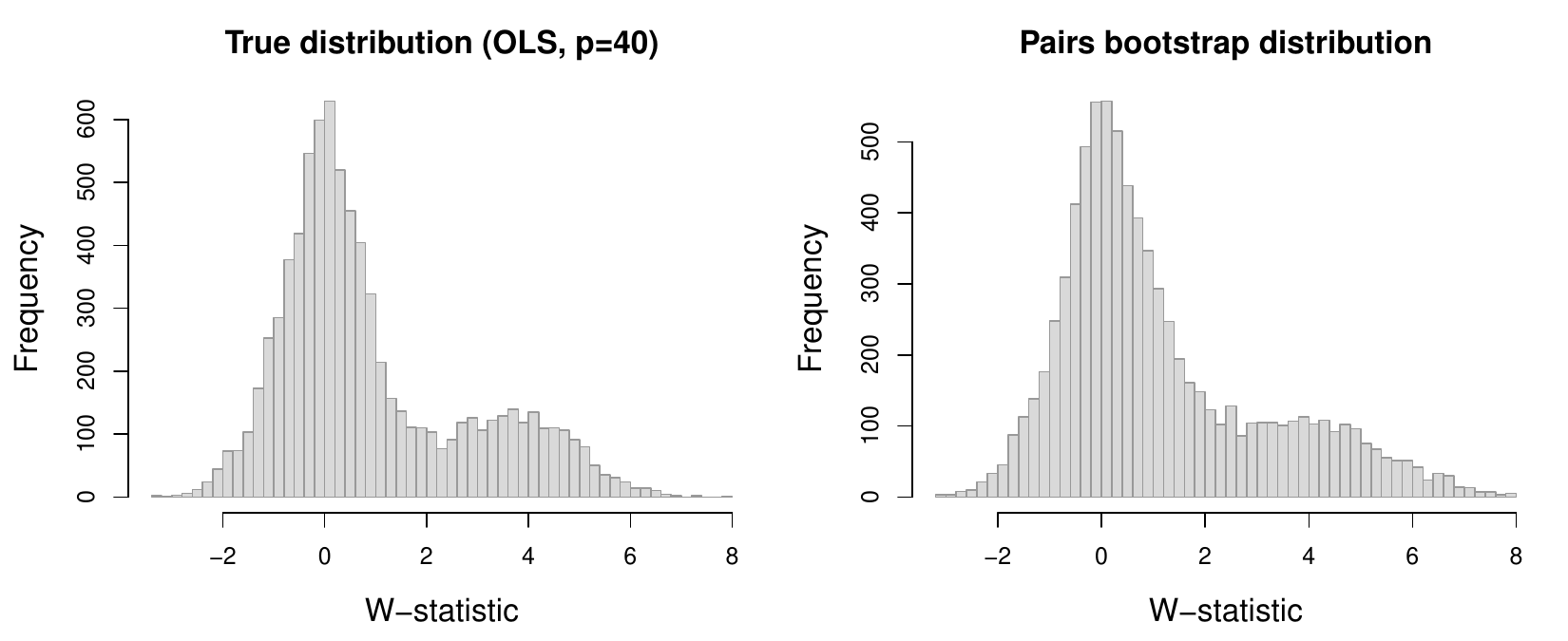}
    \caption{Repeat the experiment in the middle panel of \ref{fig:OLS-statistics} with $n=10^4$, but with $p=40$ and $k=10$ non-nulls, instead of $p=400,k=40$. 
    The histograms now show good agreement, which suggests that the issue with pairs bootstrap in the original example is that the empirical distribution of $(x_1,\dots,x_{400},y)_i$ for $i=1,\dots,10,000$ is not a very good approximation of the true distribution when the number of variables is large.}
    \label{fig:pairs-bootstrap-low-dim}
\end{figure}

\begin{figure}[h]
    \centering
    \includegraphics[width=\linewidth]{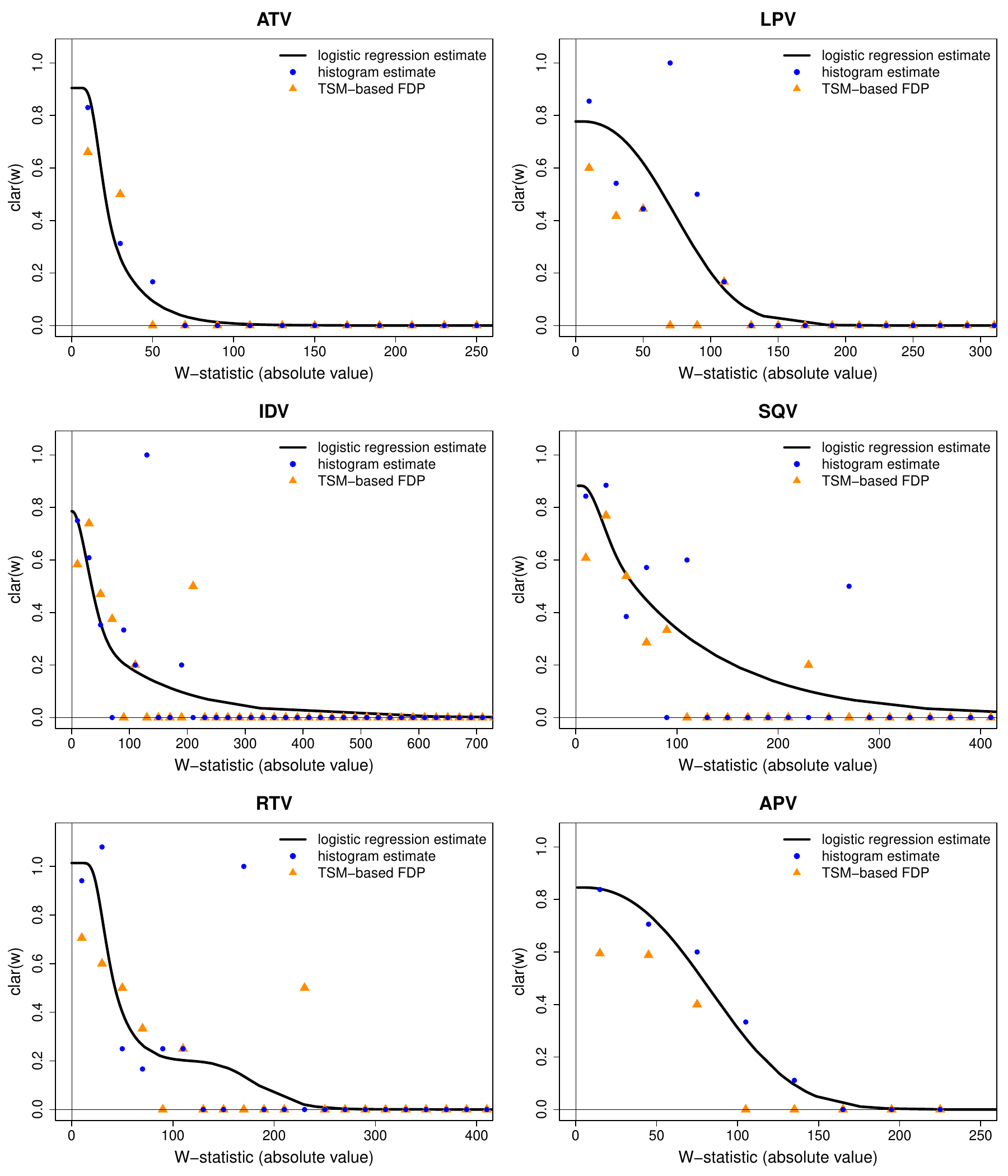}
    \caption{Plotted above are the clar estimates with ground truth labels (obtained from the TSM list) along with the raw histogram estimates overlaid for drugs from the protease inhibitor class in the HIV dataset. The clar estimate for the remaining drug (NFV) was shown in Figure \ref{fig:sign-logistic}.} 
    \label{fig:PI-clar-estimates}
\end{figure}

\end{document}